\tolerance=10000
\documentclass[10pt]{article}
\pdfoutput=1
\usepackage{amssymb,amsmath,xcolor,empheq,physics}
\numberwithin{equation}{section}
\usepackage{epsfig,array}
\usepackage{epstopdf}
\usepackage{latexsym}
\usepackage{graphicx}
\usepackage{caption}

\usepackage{booktabs}
\usepackage{bbm}
\usepackage{enumitem}
\usepackage[numbers,compress]{natbib}
\usepackage{bm}
\usepackage[T1]{fontenc}
\usepackage[linesnumbered,ruled]{algorithm2e}
\usepackage{amsthm,complexity}
\usepackage{fancyhdr}

\newcommand{\identity}{\mathbb{I}}
\newcommand{\Hin}{H_{\text{in}}}
\newcommand{\CNOT}{\text{CNOT}}
\newcommand{\SWAP}{\text{SWAP}}
\newcommand{\homologyProb}{{\sc $\text{Homology}_{\mathcal{I}}(K,l)$}}
\newcommand{\PromisehomologyProb}{{\sc $\text{Promise-Homology}_{\mathcal{I}}(K,l)$}}
\newcommand{\Hbravyi}{H_{\text{Bravyi}}}
\usepackage{natbib}
\setlength{\bibsep}{0.0pt}

\usepackage{fancybox}

\usepackage[margin=20pt]{caption}
\usepackage{subcaption}


\usepackage[toc]{appendix}

\usepackage{color}
\usepackage{datetime}
\usepackage[
      colorlinks=false,
      linkcolor=darkblue,  
      urlcolor=blue,    
      filecolor=blue,     
      citecolor=red,
linktocpage=true,
      pdfstartview=FitV,
      bookmarksopen=true,
	  hidelinks
      ]{hyperref}
      
\usepackage[nameinlink,capitalise]{cleveref}
\crefname{section}{Section}{Sections}
\crefname{subsection}{subsection}{subsections}
\crefname{theorem}{Theorem}{Theorems}
\crefname{corollary}{Corollary}{Corollaries}
\crefname{lemma}{Lemma}{Lemmas}
\crefname{appendix}{Appendix}{Appendices}
\crefname{definition}{Definition}{Definitions}
\crefname{equation}{eq.}{eqs.}


\DeclareGraphicsRule{.tif}{png}{.png}{`convert #1 `basename #1 .tif`.png}

\ifpdf
\pdfinfo{
  /CreationDate (D:20040501215500Z)
  /ModDate (D:\pdfcreationdate)
}
\fi

\newcommand*{\boxedcolor}{black}
\makeatletter
\renewcommand{\boxed}[1]{\textcolor{\boxedcolor}{%
  \fbox{\normalcolor\m@th$\displaystyle#1$}}}
\makeatother

\usepackage{calligra}
\DeclareMathAlphabet{\mathcalligra}{T1}{calligra}{m}{n}


\definecolor{cardinal}{rgb}{0.6,0,0}
\definecolor{darkgreen}{rgb}{0,0.5,0}
\definecolor{golden}{rgb}{0.92, 0.7, 0}
\definecolor{midnight}{rgb}{0, 0, 0.5}
\definecolor{darkblue}{rgb}{0.2, 0, 0.8}

\definecolor{tk}{RGB}{246,76,246}





\def\cN{{\cal N}}
\def\cO{{\cal O}}
\def\cP{{\cal P}}
\def\cQ{{\cal Q}}



%




%
%

%
%


\newcommand{\Hclock}{H_\textrm{clock}}

\newcommand{\Hout}{H_\textrm{out}}
\newcommand{\Hprop}{H_\textrm{prop}}

\newcommand{\Hpropt}{H_\textrm{prop,t}}


\newcommand{\DQC}{\mathsf{DQC1}}


\newcommand{\be}{\begin{equation}} \newcommand{\ee}{\end{equation}}
\newcommand{\bea}{\begin{equation} \begin{aligned}} \newcommand{\eea}{\end{aligned} \end{equation}}
\newcommand{\bmu}{\begin{multline}} \newcommand{\emu}{\end{multline}}

\topmargin=-0.4in \oddsidemargin=-0.2in
\textheight=8.8in \textwidth=6.8in

\newcommand\equ[1] {\begin{equation}#1\end{equation}}

\newcommand\eqss[1] {\begin{align}\begin{split}#1\end{split}\end{align}}

\usepackage[cmtip,all]{xy}
\newcommand{\longsquiggly}{\xymatrix{{}\ar@{~>}[r]&{}}}

\def\cH{\mathcal H}

\newtheorem{theorem}{Theorem}

\newtheorem{definition}{Definition}

\newtheorem{lemma}{Lemma}
\newtheorem{corollary}{Corollary}


\newcommand{\yes}{\mathsf{YES}}
\newcommand{\no}{\mathsf{NO}}
\usepackage{tikz}
\usetikzlibrary{quantikz}

\newcounter{mycount}
\newcommand\myprob[3]{%
   \stepcounter{mycount}
 \par\noindent   {\bfseries Definition\  \themycount} (#1)\\
   {\bfseries Input}: #2\\
   {\bfseries Problem}: #3\par
}

\newcommand\mypromprob[4]{%
   \stepcounter{mycount}
 \par\noindent  {\bfseries Definition\  \themycount} (#1)\\
   {\bfseries Input}: #2\\
   {\bfseries Promise}: #3\\
   {\bfseries Problem}: #4\par
}


 \begin{document}  

\begin{titlepage}

\begin{center} 

\vspace*{1.5cm}

{\huge Clique Homology is $\QMA_{1}$-hard}

\vspace{1.5cm}

{\large   Marcos Crichigno${}^{(1)}$ and Tamara Kohler${}^{(2)}$ }

\bigskip
\bigskip
${}^{(1)}$ Blackett Laboratory, Imperial College London  
\vskip 5mm


${}^{(2)}$ Instituto de Ciencias Matemáticas, Madrid
\vskip 5mm

\vskip 5mm

\texttt{~crichigno@proton.me,~tamara.kohler@icmat.es}\\

\end{center}

\vspace{0.5cm}

\noindent 

\abstract{
We tackle the long-standing question of the computational complexity of determining homology groups of simplicial complexes, a fundamental task in computational topology, posed by Kaibel and Pfetsch  20 years ago. We show that this decision problem is $\QMA_{1}$-hard. Moreover, we show that a version of the problem satisfying a suitable promise and certain constraints is contained in $\QMA$.
This suggests that the seemingly classical problem may in fact be quantum mechanical. 
In fact, we are able to significantly strengthen this by showing that the  problem remains $\QMA_{1}$-hard in the case of clique complexes, a family of simplicial complexes specified by a graph which is relevant to the problem of  topological data analysis. The proof combines a number of techniques  from Hamiltonian complexity and  homological algebra. We discuss potential implications for the problem of quantum advantage in topological data analysis.  }

\thispagestyle{fancy}
 \fancyhead[R]{Imperial/TP/2022/MC/01} 

\end{titlepage}



\setcounter{tocdepth}{2}
\tableofcontents

\newpage


\section{Introduction and Main Result}

Simplicial complexes are fundamental objects in topology.  They are constructed by gluing together discrete building blocks consisting of points, edges, triangles, tetrahedra, and higher-dimensional analogs called simplices. Simplicial complexes are often used as approximations to smooth spaces, with the advantage that due their discrete nature they lend themselves to computation. Indeed, simplicial complexes are the main objects studied in computational topology \cite{dey1999computational,edelsbrunner2010computational} and play a fundamental role in various areas of pure and applied mathematics and including data analysis \cite{carlsson2009topology}. 

\

A fundamental task in topology is to determine the topological properties of such spaces. Topological properties of space refer to properties which are unchanged under smooth deformations of the space. A prime example of such a property is the existence of ``holes'' in the space, which are formally captured by the notion of homology. Perhaps the most fundamental problem  in computational topology is the ``homology problem'' which is informally stated as:
\begin{center}
\it Given a simplicial complex of dimension $n$, does it have a $k$-dimensional hole?
\end{center}

 \noindent The study of homology   goes back to the early days of topology in the 1850s. 
 The question of its complexity as a {\it computational} problem  was  posed formally  as an open problem 20 years ago by Kaibel and Pfetsch in \cite{2002math......2204K} (Problem 33). Partial progress was made in   \cite{ADAMASZEK20168} where the problem was shown to be $\NP$-hard in, but its precise complexity remained open. 
 
 \

 In this paper, we take steps toward showing that this problem is intrinsically quantum mechanical. 
 More precisely, we  show that the simplicial homology problem (which is a decision problem) is $\QMA_1$-hard.
 Moreover, we show that under certain constraints a version of the problem with a suitable promise is contained in $\QMA$, with $\QMA$ the class of problems whose solution can be  efficiently checked by a quantum computer and $\QMA_1$ the one-sided error version of $\QMA$  (precise definitions are given below). It also follows as a simple corollary of our proof that the counting version of the problem is $\#\BQP$-hard, with $\#\BQP$ the quantum analog of the classical counting class $\#\P$.\footnote{It turns out that the class $\#\BQP$ is equivalent to $\#\P$ under weakly parsimonious reductions \cite{Brown_2011}. }

\

It may come as a surprise that the complexity of determining whether a simplicial complex has a $k$-dimensional hole (a seemingly classical problem in computational topology) could be captured by {\it quantum}  complexity classes. There is a good reason for this, however, as pointed out in \cite{Crichigno:2020vue,Cade:2021jhc}. It was shown by Witten in the seminal work \cite{Witten:1982im} that elements of homology are in 1-to-1 correspondence with ground states of ``supersymmetric'' (SUSY) systems,  a special class of quantum mechanical systems with a symmetry relating bosonic states and fermionic states.\footnote{The correspondence is usually states in terms of cohomology. Cohomology is a dual version of homology but for our purposes this distinction is not important. See \cite{Cade:2021jhc} for a discussion. } 

This connection was exploited in \cite{Cade:2021jhc} to define a general version of the homology problem (beyond the case of simplicial complexes but containing it as a special case) and shown to be  $\QMA_1$-hard and contained in $\QMA$. Here we  improve on this result, showing the homology problem remains $\QMA_1$-hard for the case of simplicial complexes originally posed in \cite{2002math......2204K}. In fact, we significantly improve on this by showing that the problem remains $\QMA_1$-hard for the special case of clique complexes $Cl(G)$, a special family  of simplicial complexes defined by a graph $G$ which are particularly relevant to topological data analysis (TDA). For an introduction to TDA and applications see  \cite{carlsson2009topology,reviewcarlsson,wasserman2016topological}. For applications of clique homology to computational neuroscience see \cite{petri2014homological,giusti2016two,reimann:hal-01706964}.

\begin{figure}[htbp] 
\begin{center}
\includegraphics[]{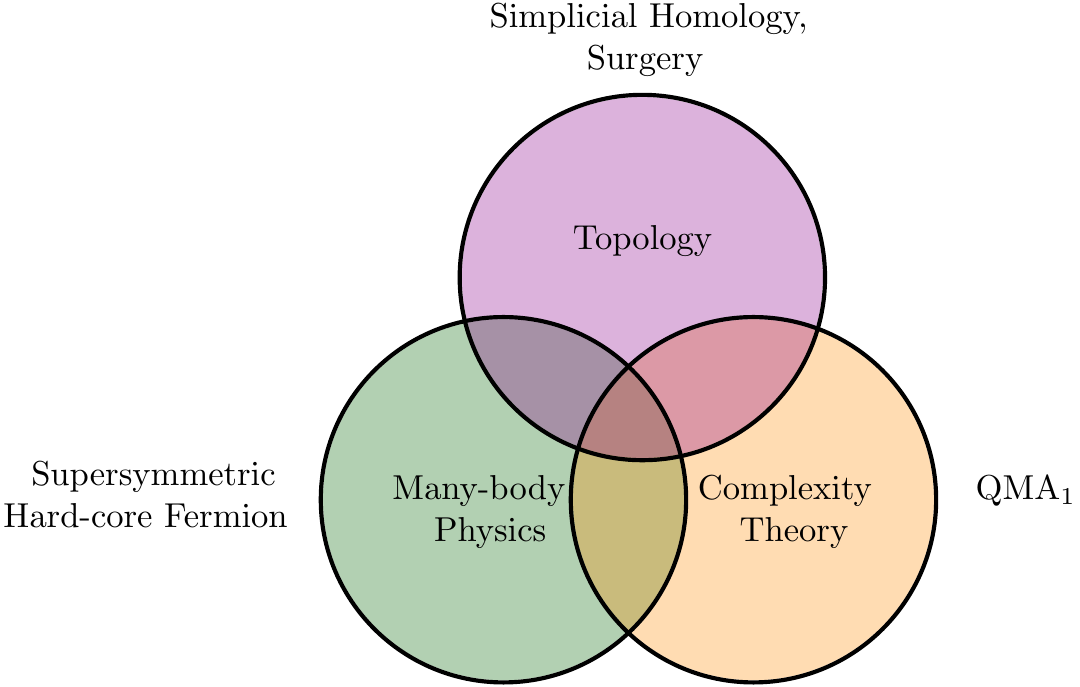}
\caption{The complexity of the clique homology problem, posed 20 years ago, lies at the intersection of (quantum) many-body physics, topology, and quantum complexity theory. Techniques from these three areas are combined to establish $\QMA_1$-hardness of the clique homology problem.  }
\label{venn}
\end{center}
\end{figure}

\ 

The relation between homology and supersymmetric ground states makes it clear that the homology problem can be contained in $\NP$ only if there is an efficient description of the ground states of supersymmetric systems, and if their energy can be computed efficiently. This is generally not the case for most many-body systems but one may wonder if it could be the case for {\it supersymmetric} many-body  systems. After all, what makes supersymmetric systems attractive to mathematical physicists is precisely that they can often be solved analytically. Our result can then be viewed as showing that there can be no efficient description of the ground states of supersymmetric many-body systems, unless $\QMA_1=\NP$.\footnote{For specific graphs with significant structure, this need not be the case. For instance, for the case of chordal graphs it was shown \cite{ADAMASZEK20168} that the homology problem for the independence complex is in $\NP$ and thus we expect an efficient description of the fermion hard-core model ground states in this case. } On the other hand, this connection also makes it clear that a suitable promise version of the homology problem with certain locality constraints can be placed inside $\QMA$, as finding ground states of supersymmetric Hamiltonians is a special case of finding ground states of generic, non-supersymmetric, Hamiltonians.

\ 

It is well known that another area where topology and quantum computation interrelate is in the study of knot invariants and topological quantum computation \cite{freedman2002simulation,freedman2002modular}. Computing the {\it exact} Jones polynomial is $\#\P$-hard but efficient {\it approximations} can obtained by various quantum algorithms, which are unlikely to be achieved by classical counterparts.  Indeed, a certain approximation was shown to be $\BQP$-complete in \cite{aharonov2009polynomial} and another approximation to be  $\DQC$-complete in \cite{shor2007estimating}.   We note that there is an interesting parallel between these two strands.   Once again, the connection between topology and quantum computation arises from the study of certain {\it physical} systems. In the former case via topological quantum field theories (TQFTs) and in the latter via supersymmetric (SUSY) many-body systems.

\

The relationship between simplicial homology and supersymmetric quantum mechanics similarly suggests that computational homology may be an area where quantum computers can lead to (provable) exponential speedups. Indeed, as Feynman originally suggested, quantum computers should lead to exponential speedups when simulating quantum mechanical systems. One would expect this to be the case regardless of whether the quantum mechanical system at hand is actually realized in Nature.  We thus view supersymmetry in this context as a ``spotlight'' in the space of computational problems, revealing that certain problems which are not {\it obviously} quantum mechanical are, in fact, secretly so and thus are promising targets for exponential quantum speedups. We consider the results in this paper as supporting this but further work is required.

\subsection{Precise statement}

\noindent  The main result of the paper is the following Theorem:
\begin{theorem}\label{thmQMA1}
 The homology problem is $\QMA_{1}$-hard for clique complexes $Cl(G)$, when the graph $G$ is given as input. For inputs satisfying a suitable promise, the homology problem is contained in $\QMA$.
\end{theorem}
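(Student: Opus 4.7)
The plan is to establish the two halves of the theorem separately. The hardness claim will build on the correspondence, going back to Witten, between the $k$-th homology $H_k(Cl(G))$ and the zero-energy ground states of the combinatorial Laplacian $\Delta_k = \partial_k^\dagger \partial_k + \partial_{k+1}\partial_{k+1}^\dagger$ acting on the space of $k$-chains. For a clique complex, this Laplacian can be reinterpreted as a hard-core fermionic Hamiltonian on the graph $G$, in which a $k$-simplex corresponds to a $(k{+}1)$-clique occupied by fermions and kinetic terms are generated by adjacency. The containment claim will exploit that this same Laplacian is a local, efficiently-implementable operator on the chain space, making phase estimation available to the $\QMA$ verifier.

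For the $\QMA_1$-hardness half, the starting point is a canonical $\QMA_1$-complete problem such as a local Hamiltonian with perfect completeness arising from a Kitaev-style circuit-to-Hamiltonian construction. The goal is to design a graph $G$ whose clique Laplacian in some target degree $k$ reproduces (up to the desired zero/positive spectral dichotomy) the target Hamiltonian, so that $\dim H_k(Cl(G)) > 0$ iff the input instance is a \textsc{yes} instance. The main obstacle is that the clique complex is entirely determined by its $1$-skeleton: every $(k{+}1)$-clique is automatically a simplex, so one cannot simply include or exclude simplices independently, as one could for the general simplicial homology problem treated earlier by Cade--Crichigno--Kohler. My plan is therefore to introduce clock, data, and ancilla sub-vertices organised into gadgets in which forbidden configurations are suppressed by the absence of the corresponding clique (so they do not appear in the chain complex at all), while allowed propagation terms appear as faces of one higher dimension that furnish the off-diagonal entries of $\partial_{k+1}$. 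Perfect completeness is preserved because harmonic representatives sit at strictly zero energy; constructing the gadgets so that unwanted adjacencies do not accidentally generate stray cliques in the target degree is expected to be the most delicate point.

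For the $\QMA$ containment half, I would take the witness to be a succinct quantum state encoding a harmonic $k$-chain, expressed in the computational basis indexed by $(k{+}1)$-cliques of $G$. Since the complex is given implicitly through the graph, membership of a subset in the face poset can be checked in polynomial time, so both $\partial_k$ and $\partial_{k+1}$ are sparse and efficiently row-computable. The verifier runs phase estimation on $\Delta_k$ and accepts if it measures energy below a threshold. The promise is used to supply (i) a spectral-gap separation between the zero eigenvalue of $\Delta_k$ and its smallest nonzero eigenvalue, which bounds the phase-estimation precision required, and (ii) a locality condition such as bounded clique number or bounded degree, which ensures $e^{-i\Delta_k t}$ can be simulated by a polynomial-size circuit via standard sparse-Hamiltonian simulation. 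Cycle-versus-boundary verification is automatically handled by testing energy against $\Delta_k$, since harmonic chains are exactly cycles modulo boundaries.
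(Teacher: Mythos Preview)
Your containment argument is essentially the paper's: a harmonic witness, sparse/local access to $\partial$ and $\partial^\dagger$, and phase estimation on the combinatorial Laplacian under a promise gap. That half is fine.

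The hardness half, however, has a real gap. You propose to build $G$ so that the clique Laplacian $\Delta_k$ \emph{reproduces} the target circuit Hamiltonian, with propagation terms appearing as off-diagonal entries of $\partial_{k+1}$. The paper does not do this, and for good reason: the Laplacian of a clique complex has an extremely rigid combinatorial form (all off-diagonal entries of $\partial$ are $\pm 1$, the diagonal is determined by face counts), so matching it to an arbitrary Kitaev-type Hamiltonian term-by-term is not available. The paper explicitly remarks that the quantum $k$-$\SAT$ Hamiltonian is \emph{not} being realised as the Laplacian; only the \emph{dimension of the ground space} is matched, while the excited spectrum is generically different.

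What the paper actually does is work directly in homology rather than at the spectral level. It starts from the independence complex of $n$ disjoint triangles, whose $(n{-}1)$-st homology is exactly $(\mathbb{C}^2)^{\otimes n}$, with the computational basis identified with explicit $(n{-}1)$-cycles. Then, for each rank-one projector $\ket{\psi}\bra{\psi}$ in Bravyi's $4$-$\SAT$ Hamiltonian, it adds mediator vertices and edges so as to \emph{fill in} the cycle corresponding to $\ket{\psi}$, rendering that state trivial in homology while leaving all orthogonal cycles nontrivial. The reduction is thus ``cycle-filling'' gadgetry, not Hamiltonian encoding. A second point you miss entirely is that filling a cycle $\sum_i n_i \ket{e_i}$ by simplicial surgery requires the coefficients $n_i$ to be integers; this forces the choice of a universal gate set with rational entries (the paper uses $\{\mathrm{CNOT}, U_{\mathit{Pyth.}}\}$ with $U_{\mathit{Pyth.}}=\tfrac{1}{5}\begin{psmallmatrix}3&4\\-4&3\end{psmallmatrix}$), and the Pythagorean propagation gadget is the technical heart of the construction. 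Without this integrality step your gadget programme cannot get off the ground.
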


\begin{proof}[Proof sketch]

We  give a schematic overview of the proof (the technical proof is the content of Section~\ref{sec:QMA1hard}). Although the proof is somewhat technical the main idea should be conceptually clear. The goal is to encode the computational history of a $\QMA_{1}$ verification procedure into a clique complex $\Sigma=Cl(G)$ of a graph $G$, such that that for every accepting witness there is a corresponding $(n-1)$-dimensional hole in $\Sigma$, and vice versa. To achieve this we go via Bravyi's "circuit-to-Hamiltonian" construction \cite{bravyi2011efficient} showing that the accepting witnesses of a $\QMA_{1}$ circuit are in 1-to-1 correspondence the set of solutions to  
the solutions to \equ{
\sum_{a}\Pi_a^{\text{Bravyi}}\ket{\psi}=0\,,
}
where the $\Pi_a^{\text{Bravyi}}$ are a $\text{poly}(n)$ collection of 4-local, projectors. These depend on the details of the universal gate set used (details in Section~\ref{sec:QMA1hard} and the Appendix).
\begin{figure}[]
\begin{center}
\includegraphics[scale=0.9]{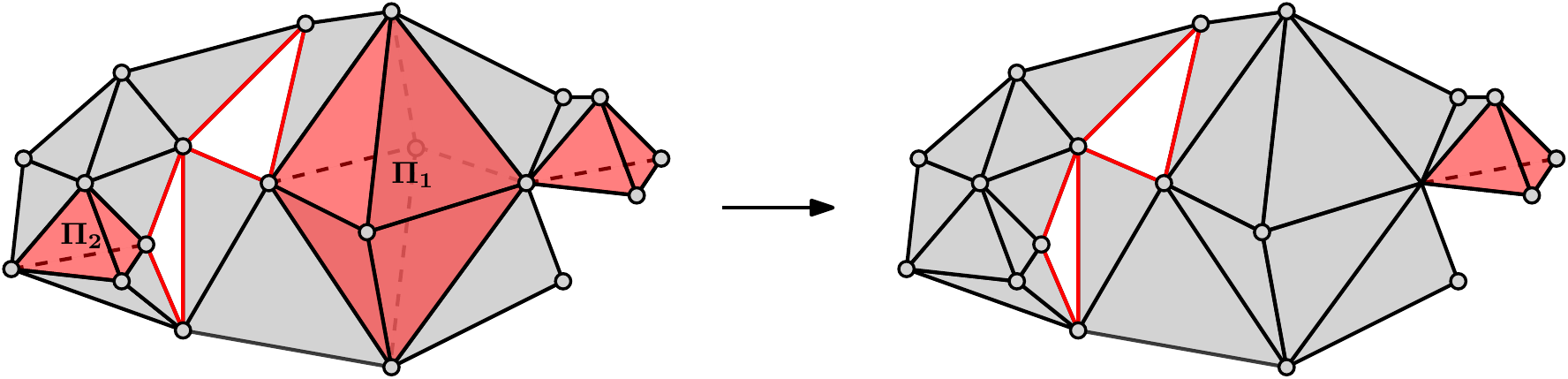}
\caption{Schematic representation of the reduction from quantum $k$-$\SAT$ to the homology problem. We first construct an $n$-dimensional simplicial complex with $2^n$ number of $(n-1)$-dimensional holes which are identified with the computational basis states in the Hilbert space of $n$ qubits. Given an $n$-qubit projector $\Pi = \sum_{A}\ket{\psi_A}\bra{\psi_A}$ we identify each state $\ket{\psi_A}$ with a hole (or ``linear combinations'' of holes) and fill it in, thus eliminating the state $\ket{\psi_A}$ from homology. Then, a given instance of quantum $k$-$\SAT$ has a satisfying assignment if and only if the final space has any $(n-1)$-dimensional holes left.}
\label{fig:simplicialcomplex}
\end{center}
\end{figure}

To construct $\Sigma=Cl(G)$ we proceed as follows. We first begin by constructing an $n$-dimensional clique complex $\Sigma^{(0)}=Cl(G^{(0)})$ with $2^n$ number of $(n-1)$-dimensional holes which are identified with the computational basis states in the Hilbert space of $n$ qubits. This simplicial complex corresponds to the space of solutions to a trivial instance of quantum $k$-$\SAT$ where there are no projectors. Then, we take a first projector $\Pi_1^{\text{Bravyi}}=\sum_{A=1}^{r_1} \ket{\psi_A}\bra{\psi_A}$, where  $r_{1}$ is the rank of $\Pi_1^{\text{Bravyi}}$. This first projector then excludes the states $\{\ket{\psi_A}\}$ from the space of satisfying assignments. Correspondingly, we map each $\ket{\psi_A}$ to a particular $(n-1)$-dimensional hole in $\Sigma^{(0)}$, which we ``fill in'' and is thus no longer a hole. This is accomplished by adding an appropriate number of simplices, obtaining a new simplicial complex  $\Sigma^{(1)}$ (see \cref{fig:simplicialcomplex}).    We repeat this process for each $\Pi_a^{\text{Bravyi}}$, making sure that at every stage that $\Sigma^{(i)}$ is the clique complex of a graph $G^{(i)}$. When the process is completed, we are left with a complex $\Sigma=Cl(G)$ whose $(n-1)$-dimensional holes are in 1-to-1 correspondence with the satisfying assignments $\sum_{a=1} \Pi_a^{\text{Bravyi}}\ket{\psi}=0$ and thus to the number of accepting witnesses of the $\QMA_{1}$ circuit $U$: 
\equ{
 \Pr[U(\ket{w}) = 1] = 1 \qquad \qquad \xleftrightarrow{\text{1-to-1}} \qquad \qquad \text{$\ket{w}$ is an $(n-1)$-dimensional hole of $Cl(G)$}
}
Since the reduction is parsimonious (i.e., it preserves the number of solutions), it also follows as a simple corollary that determining the Betti number of the clique complex is $\#\BQP$-hard. 
\end{proof}

A few comments are in order. It is important to note that projectors which are not diagonal in the computational basis are associated to ``linear combinations'' of holes in $\Sigma$ which are filled in. Indeed, it is the fact that a linear combination of holes is well defined in the theory of homology that makes this correspondence possible and makes the homology problem quantum mechanical. We give an introduction to the theory of homology in Section~\ref{sec:Background}.

\

Although conceptually simple, there are a number of technical challenges along the way. The first is that the states $\ket{\psi_A}$ above can generically have complex coefficients in the computational basis but our identification of these states with holes in $\Sigma$ requires that these coefficients are all integer (up to an overall normalization). This is achieved by choosing a universal gate set in the $\QMA_1$ circuit with only rational coefficients. Second, ensuring that at every stage of our construction one has a graph  $G^{(i)}$ whose clique complex precisely captures the accepting witnesses requires a procedure of cutting and gluing simplicial complexes, which we term "simplicial surgery". This leads to an unwieldy gadget shown in \cref{fig:ComplementPyth}. The construction of this gadget is the most technically challenging step in the reduction and  would be impossible to find without the aid of topology.

\

For clarity of presentation we have decided to avoid any reference to supersymmetry in the proof of Theorem~\ref{thmQMA1}  and relegate the discussion of supersymmetry  to Section~\ref{sec:connection to SUSY}. Several of the ideas for the reduction, however, were motivated by the perspective of Hamiltonian complexity of supersymmetric systems. Readers not interested in this connection can safely ignore  Section~\ref{sec:connection to SUSY}. On the other hand, readers who are interested in learning supersymmetry may find this a sufficient reason to do so and we provide a self-contained review in  Section~\ref{sec:connection to SUSY} (see also \cite{Cade:2021jhc} and references therein for more details).

\paragraph{Organization of the paper.} In Section~\ref{sec:Background} we give a review of basic elements of simplicial complexes and homology. In Section~\ref{sec:QMA1 and Quantum k SAT} we give our definition of the class $\QMA_1$ and review Bravyi's clock construction  showing $\QMA_1$-hardness of quantum $k$-$\SAT$. In Section~\ref{sec:QMA1hard} we present the proof of Theorem~\ref{thmQMA1}, the main result in the paper. In Section~\ref{sec:connection to SUSY} we discuss the relation to finding ground states in supersymmetric many-body systems. We conclude with some open questions in Section~\ref{sec:outlook}. Some technical details are collected in the Appendix.

\section{Review of Simplicial Homology}
\label{sec:Background}

We begin by giving a review of basic elements of simplicial homology and the clique/independence complex. Readers familiar with this may skip this section. 
\subsection{Simplicial Homology}

Simplices are the basic building blocks of simplicial complexes. A 0-simplex  is a point, a 1-simplex is a line, a 2-simplex is a triangle with its face included, a 3-simplex a solid tetrahedron,  etc. Consider a set of $(n+1)$ vertices $V=\{x_0,x_1,\cdots,x_n\}$. A $p$-simplex corresponds to a subset of these vertices with $p+1$ elements and is denoted  
\equ{
\sigma^{(p)}=[x_0\cdots x_p]\,.
}
We will consider always an ordered set $V$, which leads to oriented simplices. The orientation of a $p$-simplex is determined by the ordering of its vertices. An oriented 1-simplex $[x_0x_1]$, for instance, is  a directed line segment in the direction $x_0\to x_1$. The orientation of a 2-simplex $[x_0x_1x_2]$ is similarly induced by the ordering of its vertices, and similarly for higher dimensional simplices  (see Fig.~\ref{fig:simplices}). Permuting the vertices leads to an equivalent simplex, possibly up to an overall sign indicating whether the simplex has a positive or negative orientation. Precisely,  
\equ{
[x_{i_0} x_{i_2} \cdots x_{i_p}] = \text{sgn}(P)\,  [x_0x_1\cdots x_p]\,,
}
where $\text{sgn}(P)$ is the parity of the permutation from $\{01\cdots p\}$ to $\{i_0 i_2 \cdots i_p\}$.
\begin{figure}[]
\begin{center}
\includegraphics{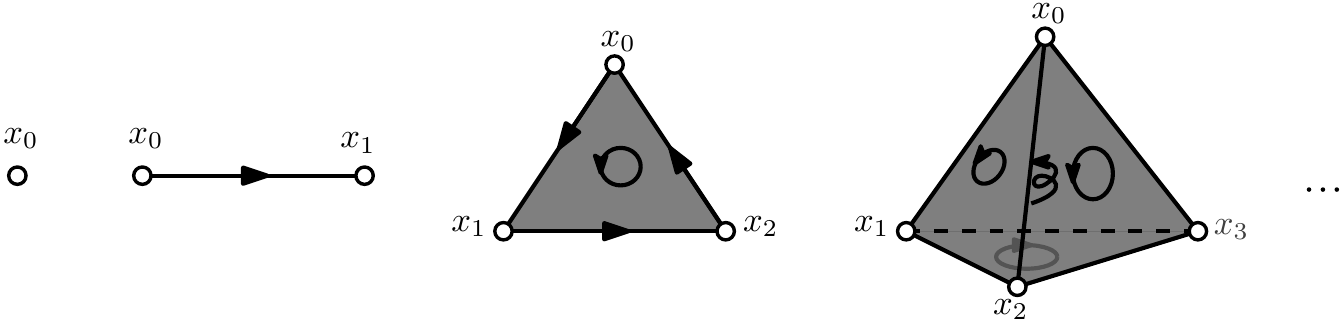}
\caption{Geometric representation of a  0-simplex, oriented 1-simplex, oriented 2-simplex, etc.}
\label{fig:simplices}
\end{center}
\end{figure}
Taking a $(q+1)$-subset of the vertices of a $p$-simplex defines a $q$-simplex called a $q$-face and denoted $\sigma^{(q)}\leq \sigma^{(p)}$. Similarly, one says that $\sigma^{(p)}$ is a coface of $\sigma^{(q)}$.

\noindent An (abstract) simplicial complex $K$ is  as a collection, or set, of simplices satisfying two requirements: \\

\begin{enumerate}
\item If a simplex is in $K$, then all of its faces are also in $K$ and; \label{cond1}
\item The intersection of any two simplices in $K$ is a face of each of them.\\
\end{enumerate}

\noindent  The first condition states that a simplicial complex must be a set which is closed under taking subsets. For instance, a simplicial complex corresponding to a filled in triangle is given by the set $K=\{[x_0],[x_1],[x_2],[x_0x_1],[x_1x_2],[x_2x_0],[x_0,x_1,x_2]\}$. The second condition indicates how simplices can be ``glued'' to each other implying, in particular, that the intersecting faces have the same dimension.\footnote{We emphasize that two simplices must intersect at a {\it single} face of each simplex. For instance, two vertices connected by an edge, $\{[x_0],[x_1],[x_0x_1]\}$, is a simplicial complex but two points connected by two edges $\{[x_0],[x_1],[x_0x_1],[x_0x_1]\}$ is not because the two 1-simplices intersect at two faces. }

A ``maximal simplex'' (or facet) of $K$ is a simplex that is not a face of another simplex in $K$. Note that due to the condition \ref{cond1}) above, it is sufficient to provide the list of all maximal simplices to fully determine the simplicial complex as taking the power set (the set of all subsets) of all maximal simplices generates the entire set of simplices in $K$. For the previous example $K=\langle[x_0x_1x_2] \rangle$, where $\langle\cdot\rangle$ denotes taking the power set. Without additional structure, providing the set of maximal faces is the most efficient way to communicate a simplicial complex.

Let  $f_p$ denote the number of $p$-simplices in $K$.  It is convenient to collect all these by defining the $f$-vector of a simplicial complex:
\equ{
f:=(f_0,f_1,\ldots, f_{\text{dim}(K)})\,,
}
where $\text{dim}(K)$ is the dimension of $K$,  defined to be the highest dimension of its simplices (note that $\text{dim}(K)\leq n$ if there are $n+1$ vertices in $K$).

We have described what is known as an {\it abstract} simplicial complex, defined combinatorially as a collection of sets without any reference to geometry or topology. However, given an abstract simplicial complex one can always associate a {\it geometric} simplicial complex by thinking of $ [x_0x_1\cdots x_p]$ as representing the convex hull of $(p+1)$ points in $\Bbb R^{d}$ with $d\geq \text{dim}(K)+1$. See \cite{Ossona:1999} for more details and formal definitions. It thus makes sense to ask whether a simplicial complex has a $k$-dimensional hole. Intuitively, a hole in a space is a closed cycle which is non-contractible. Homology is what allows one to define this formally.

\paragraph{The boundary operator.}

The main insight in homology theory, and a precursor to many ideas in modern mathematics, is that one can attach an Abelian group structure to a simplicial complex. Given a simplicial complex $K$ let $K_p$  denote the set of all $p$-simplices, $K_p:=\{\sigma^{(p)}_{0},\sigma^{(p)}_{1}, \cdots, \sigma^{(p)}_{f_p} \}$. These are taken to be the generators of a collection of Abelian groups. The  ``$p$-chain groups'' are then
\equ{
C_p(K) := (K_p,+)\,,
}
where $+$ denotes the formal ``addition'' operation in the group.  A general element of $C_p(K)$ is then given by 
\equ{\label{pchainF}
c^{(p)}= \alpha_0\, \sigma^{(p)}_{0}+\cdots+\alpha_{f_p} \,  \sigma^{(p)}_{f_p}\,,\qquad \alpha_i\in \Bbb F\,,
}
where the coefficients $\alpha_i$ take value in a field $\Bbb F$ called the ``coefficient field.'' The inverse element of $\sigma^{(p)}$, denoted formally by $-\sigma^{(p)}$, is obtained by reversing the orientation of $\sigma^{(p)}$ so that $\sigma^{(p)}+(-\sigma^{(p)})=0$.

One now introduces the ``boundary operators'' $\partial_p$, which formalizes the notion of boundary. Given a $p$-simplex $\sigma^{(p)}=[x_{0}\cdots x_{p}]$, the action of the boundary operator is, by definition,
\equ{\label{defdel}
\partial_p[x_{0}\cdots x_{p}]= \sum_{i=1}^{p}(-1)^{i}[x_{0}\cdots \hat x_{i}\cdots x_{p}]\,,
}
where the notation $[x_{0}\cdots \hat x_{i}\cdots x_{p}]$ means that the vertex $x_{i}$ is deleted and the sum is to be understood in the sense above. That is, the boundary map acting on $\sigma^{(p)}$ gives a formal sum of all the boundary components of $\sigma^{(p)}$; the signs appearing on the RHS of \eqref{defdel} are such that the orientation of the boundary simplices are consistent with those of $\sigma^{(p)}$.\footnote{For instance, for the solid triangle, $\partial[x_0x_1x_2]= [x_1x_2]-[x_0x_2]+[x_0x_1]=[x_0x_1]+[x_1x_2]+[x_2x_0]$, consistent with the orientations of each boundary edge--see Fig.~\ref{fig:simplices}.} It is easy to see that 
\equ{\label{dpdp0}
\partial_{p}\cdot \partial_{p+1}=0\,,
}
reflecting the intuitive notion that the ``boundary of a boundary'' vanishes.

One can extend the action of $\partial_p$ by linearity to a general $p$-chain \eqref{pchainF}. Thus, the operators $\partial_p$ are a collection of homomorphisms between chain groups:
\equ{
\partial_p: C_p(K)\to C_{p-1}(K)\,.
}
Given this (finite) collection of groups $C_p$ and the boundary maps, it is convenient to introduce the notion of a ``chain complex,'' which is a sequence of Abelian groups connected by the action of the boundary operator:\footnote{The empty group at the left of $C_n$ is included to emphasize the chain is finite. }
\equ{\label{complexN}
C(K): \qquad   0\rightarrow  C_{n}(K)\xrightarrow{\partial_{n}}C_{n-1}(K)\xrightarrow{\partial_{n-1}}  \cdots \xrightarrow{\partial_{2}}  C_{1}(K) \xrightarrow{\partial_{1}}  C_{0}(K)\xrightarrow{\partial_{0}}0\,.
}
Note that since the $C_p(K)$ come equipped with a basis $\sigma^{(p)}$, the operators $\partial_p$ can be expressed as matrices. In particular, one can define the ``coboundary'' operators 
\equ{\partial^\dagger_p:  C_p\to C_{p+1}
}
by hermitian conjugation of the matrix associated to  $\partial_p$. These can be used to define the a ``co-chain'' complex in which the direction of the arrows above is reversed.

\paragraph{Cycles, boundaries, and homology}

Two important notions are those of ``cycles'' and ``boundaries.'' Consider the set of $p$-chains which are in the kernel of the boundary operator $\partial_p$. These are known as $p$-cycles and denoted by $Z_p$:
\equ{
Z_p := \text{ker}(\partial_p) = \{c^{(p)}\in C_p\;|\; \partial_p c^{(p)}=0\}\,.
}
Intuitively, these correspond to closed cycles in the simplicial complex, as they have no boundary. The space of $p$-cycles in fact forms a subgroup of $C_p$, as can be easily checked.

Consider now the set of $p$-chains which are in the image of the boundary operator $\partial_{p+1}$. These are known as $p$-boundaries and denoted by $B_p$:
\equ{
B_p:=\text{im}(\partial_{p+1})=\{c^{(p)}\;|\; c^{(p)} =\partial c^{(p+1)}\}\,.
}
Intuitively, these correspond to chains which are boundaries of a higher-dimensional chain. These also form a subgroup of $C_p$, as can be easily checked. In fact, 
$B_p$ is a subgroup of $Z_p$,
\equ{
B_p\subseteq Z_p \,,
}
which follows directly from \eqref{dpdp0}. Again, this is simply the statement that boundaries do not themselves have a boundary.

With these elements one can now define the notion of homology. The $p$-th homology group is defined as the coset groups
\equ{
H_{p}(K,\Bbb F):=\frac{Z_p}{B_p}=\frac{\text{ker}(\partial_{p})}{\text{im} (\partial_{p+1})}\,.
}
That is, $H_{p}(K,\Bbb F)$ denotes the set of all $p$-cycles which are not $p$-boundaries. This  captures the intuitive notion of a $p$-dimensional hole in $K$. The number of independent $p$-dimensional holes is captured by the Betti numbers:
\equ{
\beta_p =\text{rank} \, H_{p}(K,\Bbb F)\,.
}
That is, $H_{p}(K,\Bbb F)=(\Bbb F)^{\beta_p}$ and the Betti numbers  specify the corresponding homology groups.\footnote{More generally, the full homology group takes the form $H_\ast(K,\Bbb F) = (\Bbb F)^{\beta_0}\oplus (\Bbb F)^{\beta_1}\oplus \cdots \oplus (\Bbb F)^{\beta_n} \oplus T(K,\Bbb F)$ where $T(K,\Bbb F)$ are known as the torsion coefficients. The Betti numbers capture the freely acting part of the homology group while the torsion coefficients capture a possible non-freely acting part. These depend on the field coefficient $\Bbb F$ and the torsion vanishes for $\Bbb F=\Bbb Q,\Bbb R, \Bbb C$. Since we will be mostly working with $\Bbb F=\Bbb C$ we ignore torsion and the homology is completely specified by the Betti numbers.  }

Recall that coset groups correspond to equivalence classes. The meaning of the coset above is that any two elements $c^{(p)}$ and $c^{(p)'}$ in $Z_p$ which differ by a boundary are taken to be equivalent or ``homologous'':
\equ{\label{equivhom}
c^{(p)}\sim c^{(p)'} \qquad \qquad \text{if} \qquad \qquad c^{(p)'}=c^{(p)} +\partial c^{(p+1)}\,.
}
The elements $c^{(p)}$ and $c^{(p)}$ are thus equivalent in homology and either one is a valid representative of the homology class.

Cycles which are boundaries 
are trivial elements of homology:
\equ{
c^{(p)}=\partial c^{(p+1)} \sim 0\,.
}
The most important property of homology groups is that they are topological invariants of $K$;  if two spaces $K$ and $M$ are homeomorphic, then $H_p(K,\Bbb F)=H_p(M,\Bbb F)$. In fact, homology groups are not only homeomorphic invariants but homotopic invariants \cite{hatcher2002algebraic}.

Note that determining whether a space has a $k$-dimensional is then a question in linear algebra, studying the kernel and image of the linear operator $\partial$. This can be answered by bringing the boundary operator into a Smith normal form and thus homology is algorithmically computable but can be extremely slow for large matrices.

\ 

A useful quantity is the Euler characteristic of $K$, given by the alternating sum of Betti numbers: 
\equ{
\chi(K)= \sum_p (-1)^p \beta_p\,.
}
Since the Betti numbers are topological invariants, so is the Euler characteristic. The Euler-Poincar\'e formula states that it can can also be written as the alternating sum of the number of simplices at each dimension:
\equ{
\chi(K):= \sum_p (-1)^p f_p\,.
}
Since no knowledge of the homology groups is required in this expression, this sometimes makes the Euler quantity an easier, albeit less refined, topological invariant  to compute.

\paragraph{Reduced homology.}

It is sometimes useful to consider a slight modification  called ``reduced homology.'' This is defined by taking the augmented chain complex
\equ{
0\rightarrow  C_{n}(K)\xrightarrow{\partial_{n}}C_{n-1}(K)\xrightarrow{\partial_{n-1}}  \cdots \xrightarrow{\partial_{2}}  C_{1}(K) \xrightarrow{\partial_{1}}  C_{0}(K)\xrightarrow{\epsilon} \Bbb F \rightarrow 0\,,
}
where $\epsilon(\sum \alpha_i \sigma_{0,i}):=\sum_i\alpha_i$. The corresponding homology groups $\tilde H_p(K,\Bbb F)$ are defined analogously and one has
\equ{
H_0(K,\Bbb F) = \tilde H_0(K,\Bbb F)\oplus \Bbb F\,,\qquad \qquad H_p(K,\Bbb F) = \tilde H_p(K,\Bbb F)\,,\qquad p\geq 1\,.
}
The corresponding Euler characteristic  $\tilde \chi$ is defined analogously. It is easy to see that it is related to the standard Euler characteristic by 
\equ{
\tilde \chi(K) = -\chi(K)+1\,.
}
As we discuss in Section~\ref{sec:connection to SUSY} reduced homology groups capture the zero energy states of supersymmetric many-body systems and the reduced Euler characteristic corresponds to the so-called Witten index. 

\ 

\noindent We will focus on homology with  coefficients in $\Bbb F=\Bbb C$ (and thus $T(K,\Bbb F)=0$)  and to simplify notation we write
\equ{
H_\ast(K):=H_\ast(K,\Bbb C)\,, \qquad \qquad \tilde H_\ast(K):=\tilde H_\ast(K,\Bbb C)\,.
}

\subsection{The Clique/Independence Complex} \label{subsec:cli/in}

Given a simple graph $G=(V,E)$, the ``clique complex'' $Cl(G)$ is a simplicial complex whose $k$-simplices are given by  $(k+1)$-cliques of $G$ (i.e., complete subgraphs of $G$ with $k+1$ vertices). Since any subgraph  of a clique is a clique, the set $Cl(G)$ is closed under taking subsets and satisfies the first requirement of a simplicial complex. It is also easy to see it satisfies the second requirement. The action of the boundary operator is that in \eqref{defdel}, sending $k$-cliques to $(k-1)$-cliques. Finding $k$-dimensional holes in clique complexes is an important task in TDA, with many applications including neuroscience  \cite{petri2014homological,giusti2016two,reimann:hal-01706964} and the study of protein networks.

Another simplicial complex which is defined naturally given a graph $G$ is the ``independence complex'' $I(G)$. The $k$-simplices of $I(G)$ are given by    $(k+1)$-independent sets of $G$ (i.e., a subset of $k+1$ disjoint vertices in $G$). Clearly, $k$-independent sets of $G$ are equivalent to $k$-cliques of the complement graph $\bar G$ and therefore
\equ{
 I(G) =Cl(\bar G)\,.
}
Since the graph complement can be computed in time polynomial in $\abs{V}$, the computational complexity of computing homology for the clique complex is equivalent to that of the independence complex. Throughout most of Section~\ref{sec:QMA1hard} we will focus on the independence complex, and in the final theorem we will translate our result to discuss the clique complex.

\section{$\QMA_1$ and Quantum $k$-$\SAT$}
\label{sec:QMA1 and Quantum k SAT}

In this section, we review the complexity class $\QMA_1$ and quantum $k$-$\SAT$, the canonical $\QMA_1$-complete problem \cite{bravyi2011efficient}. As we discuss below, the definition of this class depends on the choice of a universal gate set $\mathcal G$ and we introduce a choice which will be used for our hardness proof  in \cref{sec:QMA1hard}. 

\subsection{The complexity class $\QMA_1$}

The quantum complexity class $\QMA_1$ is defined as the the perfect completeness version of $\QMA$. Precisely, 
\begin{definition}[$\QMA_1$ \cite{bravyi2011efficient,Gosset_2013}] \label{defQMA1}
A promise problem $A = (A_{\text{yes}}, A_{\text{no}})$ is in $\QMA_1$ if there exists a polynomial-time quantum circuit (the ``verifier'') $V_x$ composed of gates from the gate set $\mathcal{G} $ for any $n$-bit input string $x$ such that
\begin{itemize}
	\item If $x \in A_{\text{yes}}$, there exists a $\poly(n)$-qubit witness state $\ket{w}$ such that $\Pr[V_x(\ket{w}) = 1] = 1$,
	\item If $x \in A_{\text{no}}$, then for any $\poly(n)$-qubit witness state $\ket{w}$, $\Pr[V_x(\ket{w}) = 1] \leq \frac{1}{3}$.
\end{itemize}
\end{definition}
There are two main differences with respect to the more standard class $\QMA$. The first is that in the case of $\yes$ instances the verifier is required to accept an appropriate witness with probability 1, while in $\QMA$ this is only required to be great than $1/2$ (and often taken to be 2/3). This is why this class is often referred to as perfect completeness or one-sided error version of its more common sibling $\QMA$. Importantly, also note that the definition  explicitly depends on the choice of a universal gate set $\mathcal G$. In the case of $\QMA$ the definition is independent of any choice of gate set, since all universal gate sets can approximate any unitary evolution. But for $\QMA_1$ it is no longer necessarily enough to be able to approximate a given unitary. 
And for this reason when specifying the class $\QMA_1$ it is necessary to define it relative to some particular choice of universal gate set. 

A frequently used choice is $\mathcal{G}'=\{\text{CNOT}, \hat{H}, T\}$ where
\equ{
\text{CNOT}= \begin{pmatrix}
1& 0 & 0 &0\\
0& 1 & 0 &0 \\
0& 0 & 0 &1 \\ 
0& 0 & 1 &0\\
\end{pmatrix}\,, \  \qquad \hat{H}=\frac{1}{\sqrt 2}\begin{pmatrix}
1& 1\\
1&-1
\end{pmatrix}\,, \ \qquad T = \begin{pmatrix}
1 & 0 \\
0 & e^{\frac{i\pi}{4}}
\end{pmatrix}
} 
However, for reasons to be discussed below we wish to have gates with  {\it rational} coefficients. As shown in \cite[Theorem 3.3]{Adleman:1997} (it also follows from \cite[Theorem 1.2]{shi2003both}), the gate $\text{CNOT}$ together with   a one-qubit real unitary $U$ is universal provided $U^{2}$ is basis-changing. 

Our definition of $\QMA_1$ is then based on the choice of universal gate set 
\equ{\label{unigateset}
\mathcal G=\{\text{CNOT},U_{\mathit Pyth.}\}\qquad \qquad U_{\mathit Pyth.}=\frac{1}{5}\begin{pmatrix}
3& 4\\
-4&3
\end{pmatrix}\,,
}
for which all gates have rational coefficients. We refer to $U_{\mathit Pyth.}$ as the ``Pythagorean'' gate, since any Pythagorean triple defines such a  matrix. It is relative to this definition of $\QMA_1$ that we will prove hardness of the clique homology problem.

\subsection{Review of quantum $k$-$\SAT$ and the clock Hamiltonian  }

The canonical $\QMA_1$-complete problem is Quantum $k$-$\SAT$:\\

\mypromprob{\label{defQSATp}{\sc Quantum $k$-$\SAT$} \cite{bravyi2011efficient}}{A set of $k$-local projectors $\Pi_S \in \mathcal{P}$ where $S$ are possible subsets of $\{1,\ldots,n\}$ of cardinality $k$, and $\mathcal{P}$ are a set of projectors obeying certain constraints.  }{Either there exists a state $\ket{\psi}$ such that $\Pi_S\ket{\psi}=0$ for all $S$ or otherwise $\sum_{S}\bra{\psi}\Pi_{S}\ket{\psi}\geq \epsilon$ for all $\ket{\psi}$, with $\epsilon>\frac{1}{\poly(n)}$. }{Output $\yes$ if the former and $\no$ if the latter.\\}

\noindent It was shown in \cite{bravyi2011efficient} that quantum $k$-$\SAT$ is $\QMA_1$-complete for $k\geq 4$ and that for $k=2$ there is an efficient classical algorithm to solve the problem. For  $k=3$ the problem was shown to be $\QMA_1$-complete in \cite{Gosset_2013}. We will construct a mapping from the quantum $4$-$\SAT$ circuit-to-Hamiltonian construction to the $k$-local homology problem. The general $\QMA_1$ circuit is encoded into a Hamiltonian of the form \cite{bravyi2011efficient}  
\equ{
H_{\text{Bravyi}}=\Hin  +\sum_{i=1}^{6}\Hclock^{(i)} +\sum_{t=1}^L  \left(\Hpropt + \Hpropt' \right)+\Hout \,,
}
The first terms in $H_{\text{Bravyi}}$ encode the history of some quantum verification circuit, while the final term, $\Hout$ penalises computations which don't accept with probability 1. 
In $\yes$ instances there exist "history states" which can be shown to have exactly zero energy with respect to $H_{\text{Bravyi}}$, and therefore act as accepting witnesses.
While it can be shown that in $\no$ instances the minimum energy of $H_{\text{Bravyi}}$ is bounded from below by an inverse polynomial in the problem size. 

A detailed overview of the construction is given in  \cref{sec:app-projectors}. We have summarized the rank-1 projectors which are required in order to construct $H_{\text{Bravyi}}$ for the universal gate set \eqref{unigateset} in \cref{table:4SAT}. In Section~\ref{sec:QMA1hard} we construct a reduction to the clique homology problem by implementing each of these projectors. 
\begin{table} 
\begin{center}
\begin{tabular}{ |c|c|c| } 
 \hline
 \emph{Term in} $H_{\text{Bravyi}}$ & \emph{Penalizes state} $\ket{\psi_S}$ \\ 
 \hline
$\Hpropt' $& $ \frac{1}{\sqrt{2}} \ket{10}\left(\ket{11}-\ket{00} \right) $     \\ 
 \hline
  $\Hpropt(\text{CNOT})$ & $\frac{1}{\sqrt{2}} \ket{01} \left(\ket{10}-\ket{01} \right)$  \\
 \hline
   $\Hpropt(\text{CNOT})$ & $\frac{1}{\sqrt{2}} \ket{00} \left(\ket{10}-\ket{01} \right)$  \\
  
\hline
$\Hpropt(U_{\mathit{Pyth.}})$ &$ \frac{1}{5\sqrt{2}} \left(-5\ket{011}+4\ket{100}+3\ket{101}\right)$   \\ 
 \hline
 $\Hpropt(U_{\mathit{Pyth.}})$ &$ \frac{1}{5\sqrt{2}}  \left(-5\ket{010}+3\ket{100}-4\ket{101}\right)$   \\ 
  \hline
   $\Hpropt(\text{CNOT})$ & $\frac{1}{\sqrt{2}} \ket{1} \left(\ket{101}-\ket{010} \right)$ \\
 \hline
 $\Hpropt(\text{CNOT})$ & $\frac{1}{\sqrt{2}} \ket{1}\left(\ket{011}-\ket{100} \right) $  \\
 \hline
$\Hclock^{(1)}$ & $\ket{00}$    \\ 
 \hline
 $\Hclock^{(2)}$ & $\ket{11} $  \\ 
 \hline
 $\Hin$, $\Hout$ &$ \ket{011}  $  \\ 
 \hline
$ \Hclock^{(6)}$, $\Hclock^{(4)}$, $\Hclock^{(5)}$, $\Hclock^{(3)}$ &$ \ket{1100}$ \\
\hline
$ \Hclock^{(4)}$ &$ \ket{0111}$  \\
\hline
$ \Hclock^{(5)}$ &$ \ket{0001}$  \\
\hline
\end{tabular}
\caption{Projectors needed for quantum $4$-$\SAT$ with universal gate set $\mathcal{G}$. Note we collated projectors which are the same up to re-ordering the qubits involved.}
\label{table:4SAT}
\end{center}
\end{table}

\paragraph{Comment on completeness.} We have been deliberately  vague about the definition of the set of allowable projectors $\mathcal{P}$ in the definition of quantum $k$-$\SAT$. There is no need to put any constraints on the set of projectors $\mathcal{P}$ for showing that the problem is $\QMA_1$-hard. However, for showing {\it containment} in $\QMA_1$ care has to be taken, and the choice of $\mathcal{P}$ which ensures containment will depend on the universal gate set chosen in the definition of $\QMA_1$. It is known that quantum $k$-$\SAT$ is contained in $\QMA_1$ as long as there exists an efficient algorithm which can be used to measure the eigenvalue of any $\Pi \in \mathcal{P}$ for arbitrary state $\ket{\psi}$ using the gate set $\mathcal{G}$ \cite{bravyi2011efficient,Gosset_2013}.
And clearly to show $\QMA_1$ completeness it is necessary to demonstrate that the set of projectors used in mapping from a $\QMA_1$ verification circuit to quantum $k$-$\SAT$ is contained in $\mathcal{P}$. Since throughout this work we are concerned with demonstrating $\QMA_1$-hardness, not containment or completeness,  we will defer any further discussion of the choice of $\mathcal{P}$ and allow it to contain arbitrary $k$-local projectors.

\section{Clique Homology is $\QMA_{1}$-hard}
\label{sec:QMA1hard}

In this Section, we prove our main result Theorem~\ref{thmQMA1}.  For purposes of presentation we will focus on the independence complex $I(G)$, rather than the clique complex. As noted in \cref{subsec:cli/in} these are related by taking the graph complement, which can be done efficiently in the size of the graph, and thus proving $\QMA_1$-hardness for the independence complex or the clique complex is equivalent. In \cref{sec:reduction} we will explicitly consider the mapping between the two problems. 

As outlined in the Introduction, the basic idea is to encode the computational history of a $\QMA_1$ verification circuit into the topology of an independence complex. We achieve this via the "circuit-to-Hamiltonian" construction reviewed in \cref{sec:QMA1 and Quantum k SAT}. The task at hand is then to map the data in  \cref{table:4SAT} to a graph $G$ such that the corresponding independence complex has a nontrivial homology if and only if the Hamiltonian $H_{\text{Bravyi}}$ has a zero energy groundstate.\footnote{In fact, we will require the reduction to be parsimonious so there is a 1-to-1 correspondence between the number of accepting witnesses to the $\QMA_1$ protocol and the rank of the homology group. This will allow us to show the counting version of the problem is $\#\BQP$-hard.}

\subsection{Basic gadgets}
\label{sec:gadgets}

\begin{figure}[]
\begin{center}
\includegraphics{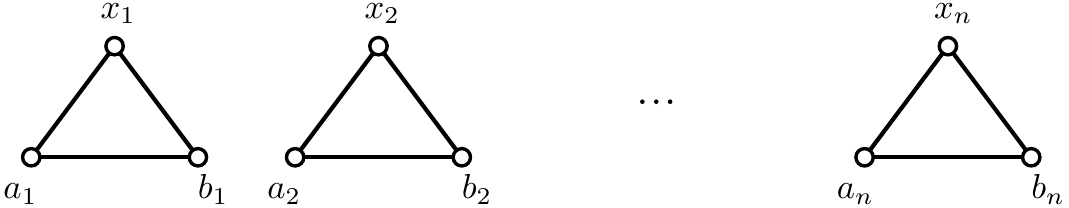}
\caption{A graph $G_{n}$ representing a system of $n$ disconnected triangles. }
\label{fig:Gn}
\end{center}
\end{figure}
Consider a graph $G_{n}$ consisting of $n$ disconnected triangles (see \cref{fig:Gn}) and denote the associated independence complex by $\Sigma_{n}:=I(G_{n})$. The total number of independent sets is $4^n$. The largest independent sets are of size $n$ and there are $3^n$ of them, corresponding to choosing one vertex from each triangle. We claim that the  homology groups  for all $n\geq 2$ are given by
\equ{\label{HSn}
  H_{0}(\Sigma_{n})\cong\Bbb C\,,\qquad   H_{n-1}(\Sigma_{n})\cong(\Bbb C^{2})^{\otimes n}\,,\qquad  H_{i\neq \{ 0, n-1\}}(\Sigma_{n})\cong0\,.
} 
Thus,  the $(n-1)^{\text{th}}$  homology group of $\Sigma_n$  is isomorphic to the Hilbert space of $n$ qubits. It is inside this vector space that we will encode the satisfying instances of quantum $k$-$\SAT$.

Consider as a simple example the case of two triangles, $n=2$. There are a total of 6 independent sets of size 1 and 9  independent set of size 2, corresponding to the vertices and edges of $\Sigma_2$, respectively (see \cref{fig:I(G2)}).
\begin{figure}[]
\begin{center}
\includegraphics{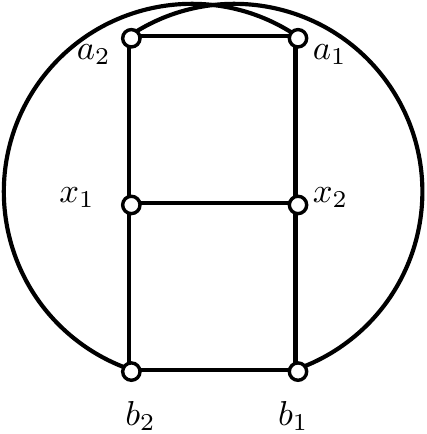}
\caption{The independence complex $I(G_2)$ associated to two disjoint triangles. }
\label{fig:I(G2)}
\end{center}
\end{figure}
Clearly, $\Sigma_2$ has one connected component, and thus $\beta_0=1$, and four independent 1-dimensional holes (1-dimensional non-contractible cycles), and thus $\beta_1=4$, while all other Betti numbers vanish, consistent with \eqref{HSn}. The general result for arbitrary $n$ is easy to show.\footnote{One way to see this is  by using the formula for reduced homology $\tilde H_{k}(K\cup L)=\sum_{p+q+1=k}H_{p}(K)\otimes H_{q}(L)$. It is straightforward to see that $\tilde H_0(\Sigma_1)=(\Bbb C)^2$ and $\tilde H_{i}(\Sigma_1)=0$ for all $i\geq 1$. The result above then follows from  $\Sigma_n=\Sigma_1\cup\cdots \cup \Sigma_1$ and using this expression recursively. }

To make contact with standard qubit notation, we label the basis for $H_{n-1}(\Sigma_{n})$ by the set of all $n$-bit strings. We note that such a canonical basis is obtained as follows. Let 
\begin{equation} \label{eq:basis}
\ket{0}_{i}:= [x_{i}]-[a_{i}]\,,\qquad \qquad  \ket{1}_{i}:= [x_{i}]-[b_{i}]\,,\qquad i\in \{1\cdots n\}\,.
\end{equation}
where $x_i,a_i,b_i$ label the vertices of each triangle (see \cref{fig:I(G2)}).
Note that each of these 0-chains is annihilated by the boundary operator,
\equ{
\partial \ket{0}_{i}=0\,,\qquad \qquad  \partial \ket{1}_{i}=0\,, \qquad \qquad i\in \{1\cdots n\}\,.
}
Let us introduce some  notation. Consider two simplices $\sigma^{(p)}=[x_0\cdots x_p]$ and  $\sigma^{(q)}=[y_0\cdots y_q]$. We define the wedge product $\wedge $ as 
\equ{
\sigma^{(p)}\wedge \sigma^{(q)}= \begin{cases}[x_0\cdots x_p y_0\cdots y_q]  &  x_i\neq y_j \\ 0 &\text{otherwise} \end{cases}
}
Then, a basis $B=\{\ket{e_{i}}, i=1\cdots 2^{n}\}$ for $H_{n-1}(\Sigma_{n})$ is obtained by  
\eqss{\label{compbasisn}
\ket{0\cdots 0}:=\,& ( [x_{1}]-[a_{1}])\wedge \cdots \wedge ( [x_{n}]-[a_{n}])\,,\\
\ket{0\cdots 1}:=\,& ( [x_{1}]-[a_{1}])\wedge \cdots \wedge ( [x_{n}]-[b_{n}])\,,\\
&\vdots   \\
\ket{1\cdots 1}:=\,& ( [x_{1}]-[b_{1}])\wedge \cdots \wedge ( [x_{n}]-[b_{n}])\,.
}
Indeed, each of the basis states $\ket{e_{i}}$ is an $(n-1)$-dimensional hole in $\Sigma_{n}$ since each $\partial \ket{e_{i}}=0$ and $\ket{e_{i}}\neq \partial \ket{f_{i}}$ since there are no $(n+1)$-independent sets $\ket{f_{i}}$ in $G_{n}$. Expanding the products, and $[ij]=-[ji]$ and  $[ij]=[i]\wedge [j]$ one obtains the $2^{n}$ linear combinations spanning  $H_{n-1}(\Sigma_{n})\cong(\Bbb C^{2})^{\otimes n}$.

\subsubsection{2-qubit projectors} \label{sec:2qubProj}

 To illustrate this, consider the case $n=2$. The corresponding simplicial complex $\Sigma_2$ consists of 6 vertices corresponding to the $6$ 1-independent sets and nine 1-simplices corresponding to the $9$ 2-independent sets of $G_2$ (see \cref{I2}). Expanding the terms in \eqref{compbasisn}, the computational basis for $H_{1}(\Sigma_{2})=\Bbb C^{4}$ is
\eqss{ \label{2qubitbasis}
\ket{00}=\,&[x_1\, x_2]+[x_2\, a_1]+[a_1\,a_2 ]+[a_2 \, x_1]\,,\\
\ket{01}=\,&  [x_1\, x_2]+[x_2\, a_1]+[a_1 \, b_2]+[b_2\, x_1] \,,\\
\ket{10}=\,&[x_1\, x_2]+[x_2\, b_1]+[b_1\,a_2]+[a_2 \, x_1]\,, \\
\ket{11}=\,&[x_1\, x_2]+[x_2\, b_1]+[b_1\,b_2]+[b_2\, x_1]\,.
}
These correspond to the four independent non-contractible 1-cycles in $\Sigma_{2}$ depicted in  \cref{I2}.  
\begin{figure}[]
\begin{center}
\includegraphics[scale=0.8]{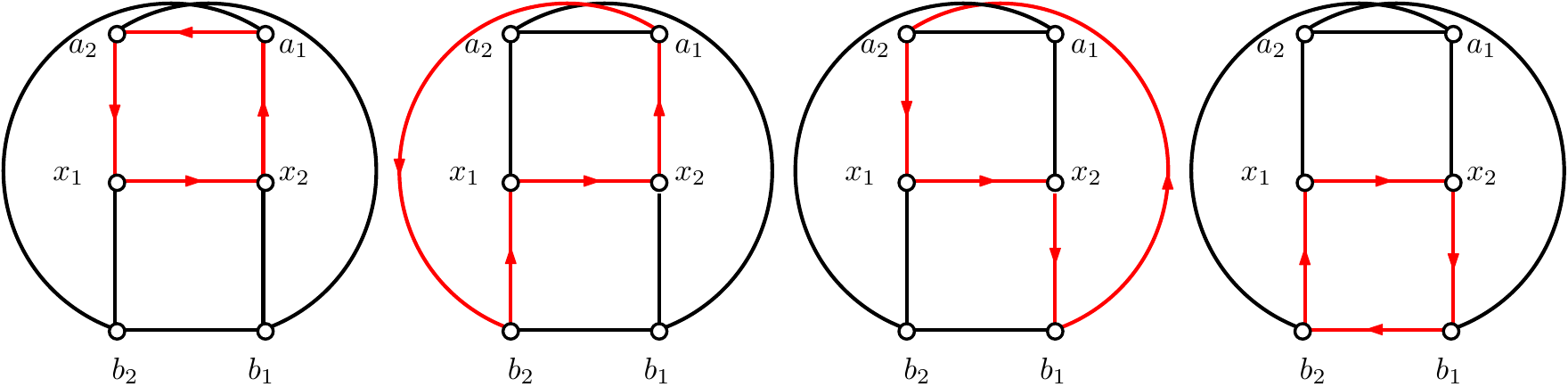}
\caption{The independence complex $\Sigma_{2}=I(G_{2})$ associated to $n=2$ non-interacting qubits. The simplicial complex has 4 independent 1-cycles, corresponding go $ H_{1}(\Sigma_{2})\cong\Bbb C^{4}$. The four basis states 
$\left\{\ket{00},\ket{01},\ket{10},\ket{11}\right\}$ are indicated in red.}
\label{I2}
\end{center}
\end{figure}

We now follow the general strategy outlined above. We begin with a simple example. Consider the rank-1 classical projector
\equ{
\Pi=\ket{00}\bra{00}\,.
}
To eliminate the 1-dimensional hole corresponding to the state $\ket{00}$, we identify the cycle corresponding to this state in $\Sigma_{2}$, shown in  the first diagram in \cref{I2}. In order to fill this face we add a new ``mediator'' vertex $m$ and the four 2-simplices $[a_{1}a_{2}m]$, $[a_{2}x_{1}m]$, $[x_{1}x_{2}m]$, and $[x_{2}a_{1}m]$. This results in filling in the 1-chain $\ket{00}$, which is thus no longer a hole in $\Sigma_{2}'$ (see \cref{fig:I2filled}). Indeed, we can verify this algebraically:
\equ{
\ket{00}= \partial \ket{\Psi}\,,
}
with
\equ{
\ket{\Psi}= [a_{1}a_{2}m]+[a_{2}x_{1}m]+[x_{1}x_{2}m]+[x_{2}a_{1}m] 
}
where $\partial=\partial_{a,b,x}+\partial_m$ is the boundary operator on $\Sigma_{2}'$. Finally, from  $\Sigma_{2}'$ we construct the graph $G_{2}'$ such that $\Sigma_{2}'=I(G_{2}')$. It is easy to see that this corresponds to adding the mediator vertex $m$ and connecting it to the vertices $b_{1}$ and $b_{2}$.

\begin{figure}[]
\begin{center}
\includegraphics[scale=1]{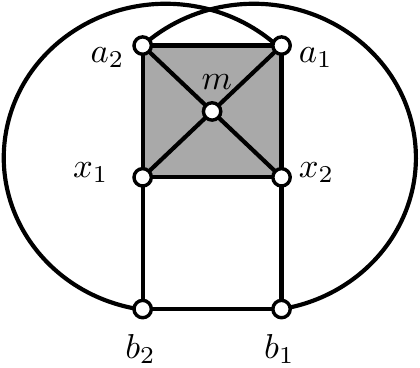}  \qquad \qquad \includegraphics[scale=1]{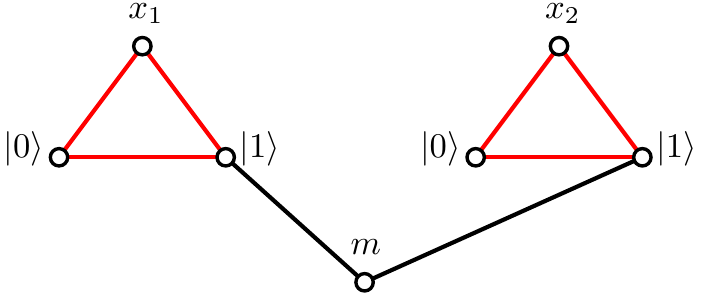}
\caption{Filling in the face $\ket{\Psi}$ bounded by the cycle $\ket{00}$ leads to the simplex $\Sigma_{2}'$, rendering the state $\ket{00}$ a trivial element of homology. This is  accomplished by adding the additional vertex $m$ and the 2-simplices $[a_{1}a_{2}m]$, $[a_{2}x_{1}m]$, $[x_{1}x_{2}m]$, and $[x_{2}a_{1}m]$. In the corresponding graph $G_{2}'$ shown on the right this amounts to adding a mediator vertex and connecting it as depicted.  }
\label{fig:I2filled}
\end{center}
\end{figure}

Projectors of higher rank are easily obtained. Consider for instance the rank-2 classical projector
\equ{
\Pi=\ket{00}\bra{00}+\ket{11}\bra{11}\,.
}
Since this lifts both states $\ket{00}$ and $\ket{11}$ we need to fill the faces of the first and last cycle in \cref{I2}, which is achieved by adding two mediator vertices $m_{1}$ and $m_{2}$ and eight 2-simplices. Explicitly, 
\equ{
\ket{00}=\partial \ket{\Psi_1}\,,\qquad \ket{11}=\partial \ket{\Psi_2}
}
with 
\eqss{
\ket{\psi_1} =\,& [a_{1}a_{2}m_1]+[a_{2}x_{1}m_1]+[x_{1}x_{2}m_1]+[x_{2}a_{1}m_1]\,,  \\
\ket{\psi_2} =\,&  [x_{1}b_{2}m_2]+[b_{2}x_{1}m_2]+[x_{1}x_{2}m_2]+[x_{1}b_{1}m_2]\,.
}
 The resulting graph $\Sigma_{2}'$ is shown in   \cref{table:gadgets2qubit}. Note that the total gadget is given by overlapping the corresponding graph gadgets on the common qubits and connecting the mediators. 
 In \cref{sec:combining_gadgets} we outline a general procedure for adding projectors which act on the same qubits.
 \begin{table}[htp]
\begin{center}
\begin{tabular}{ |>{\centering\arraybackslash}m{5cm}| >{\centering\arraybackslash} m{5.5cm}| >{\centering\arraybackslash} m{5.5cm} |}  \hline 
2-qubit Projector & Gadget $\Sigma_{2}'$ & Gadget $G_{2}'$  \\  \hline  \hline
$\ket{00}\bra{00}$ & \includegraphics[scale=0.8]{Figures/Sigma2p0.pdf}   & \includegraphics[scale=0.8]{Figures/00.pdf}  \\  \hline
$\ket{00}\bra{00}+\ket{11}\bra{11}$ &  \includegraphics[scale=0.8]{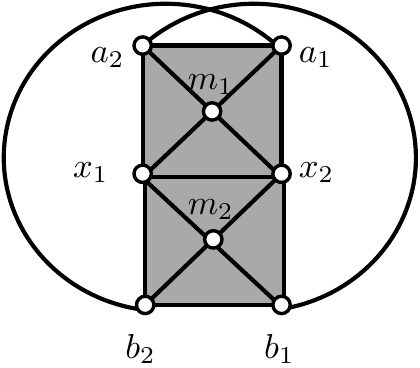}    & \includegraphics[scale=0.8]{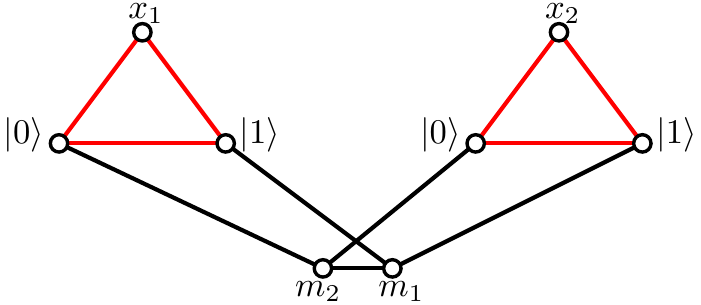}  \\  \hline
$(\ket{00}-\ket{11})(\bra{00}-\bra{11})$ 
& \includegraphics[scale=0.8]{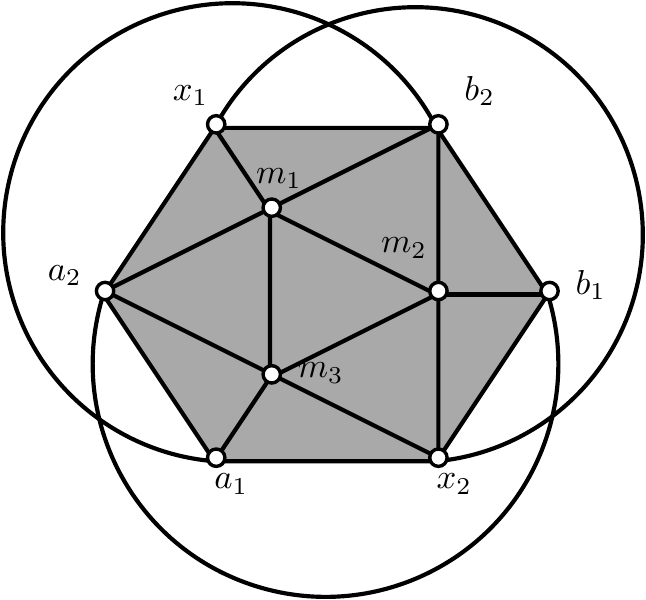} & \includegraphics[scale=0.8]{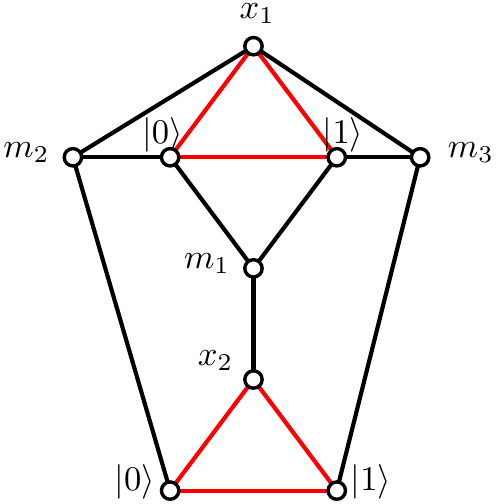}  \\  \hline
$(\ket{01}-\ket{10})(\bra{01}-\bra{10})$ & \includegraphics[scale=0.8]{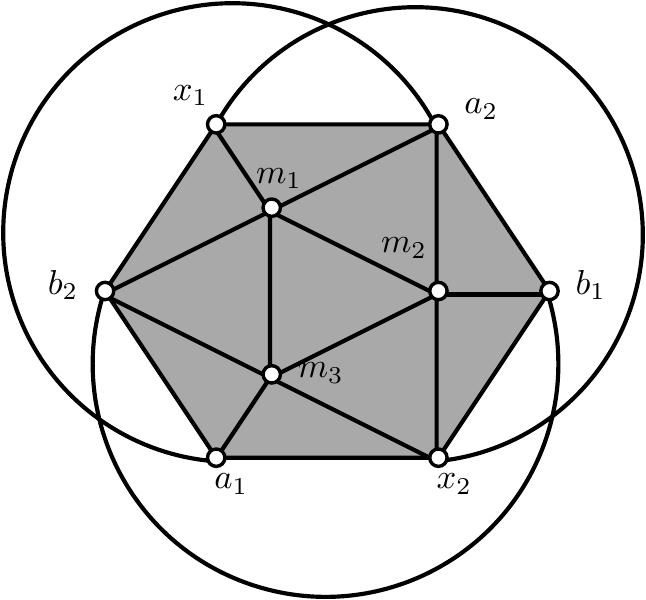}  &   \includegraphics[scale=0.8]{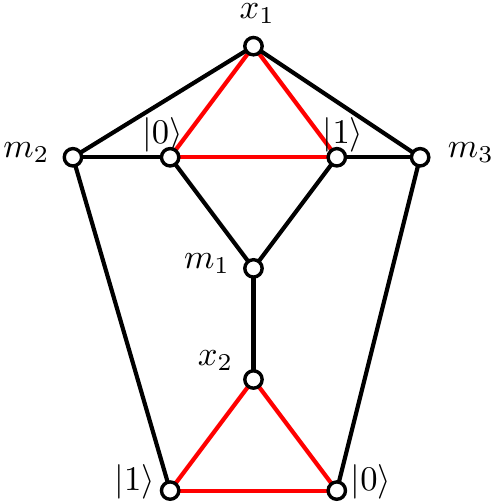} \\  \hline

\end{tabular}
\end{center}
\caption{Collection of gadgets for each 2-qubit projector in the clock Hamiltonian of quantum 4-$\SAT$. The red triangles denote the qubits. }
\label{table:gadgets2qubit}
\end{table}%

Projectors which are not diagonal in the computational basis require a little more work. Consider the projector
\equ{
\Pi=(\ket{00}-\ket{11})(\bra{00}-\bra{11})\,.
}
Note this is a rank-1 projector, lifting only the entangled state $\ket{00}-\ket{11}$, given by
\equ{
\ket{00}-\ket{11}=[x_2\, a_1]+[a_1\,a_2 ]+[a_2 \, x_1]+[x_1\, b_2]+[b_2\,b_1]+[b_1\, x_2]\,,
}
To fill in this (and {\it only} this) cycle by a triangulation of the enclosed face one requires three mediators, which is shown in  \cref{table:gadgets2qubit}. The corresponding graph $G_2'$ is also shown there.\footnote{Note that there is a choice on how to perform this triangulation and the choice we have made breaks the symmetry between the two qubits.} Again, one can algebraically check that 
\equ{
\ket{00}-\ket{11} = \partial \ket{\Psi}\,,
}
where 
\eqss{
\ket{\Psi}=\,&([x_2\, a_1]+[a_1\,a_2 ])\wedge [m_3]+([a_2 \, x_1]+[x_1\, b_2])\wedge[m_1]+([b_2\,b_1]+[b_1\, x_2])\wedge [m_2] \\ 
\,&+ [b_2m_2m_1]+[x_2m_3m_2]+[a_2m_1m_3]  +[m_1m_2m_3]\,.
}
Furthermore, one can check that this is the {\it only} 1-cycle rendered trivial by this triangulation, as desired (see \cref{app:mathematica}).  The projector $\Pi=(\ket{01}-\ket{10})(\bra{01}-\bra{10})$ is generated in the same way, up to relabelling (see \cref{table:gadgets2qubit}).

\subsubsection{3-qubit projectors}

We now consider $n=3$ and thus need to identify 2-dimensional voids in $\Sigma_{3}$ to be filled in. The easiest ones to identify are the classical projectors and we begin again with a simple example to illustrate the method. Consider the classical  3-qubit, rank-1, projector
\equ{
\Pi= \ket{000}\bra{000}\,.
}
The  cycle $\ket{000}$ can be visualized as a double pyramid made out of 8 faces, shown in \cref{table:gadgets3qubit}. Lifting the state $\ket{000}$ then corresponds to filling in this three-dimensional space. This can be achieved by adding a single mediator vertex $m$ connecting to all 6 vertices  and including all the 4-simplices obtained by taking the triangles bounding the double pyramid, and adjoining the mediator to form a tetrahedron. The corresponding graph $G_3'$ is simply three triangles all connected to the single mediator, shown in \cref{table:gadgets3qubit}). Just like in the case of the 2 qubits one can explicitly check that this procedure leads to $\ket{000}=\partial \ket{\Psi}$ thus rendering $\ket{000}$ a trivial element of homology. See \cref{app:algebraic} for the explicit expression; here we will focus on the geometric intuition.  Crucially, one can check that no other 3-qubit state can be written in this form and thus the only lifted state is $\ket{000}$ (see \cref{app:mathematica}). The corresponding graph gadget $G_{3}'$ is what one would expect and could have been guessed directly. However, this is no longer the case when looking for gadgets to uplift entangled states, which in general will contain an unwieldy number of mediator and connections.   
 \begin{table}[]
\begin{center}
\begin{tabular}{ |>{\centering\arraybackslash}m{5cm}| >{\centering\arraybackslash} m{4cm} | >{\centering\arraybackslash} m{8cm}|}  \hline 
3-qubit Projector & Gadget $\Sigma_{3}'$  & Gadget $G_{3}'$  \\  \hline  \hline
$\ket{000}\bra{000}$  & \includegraphics[scale=0.8]{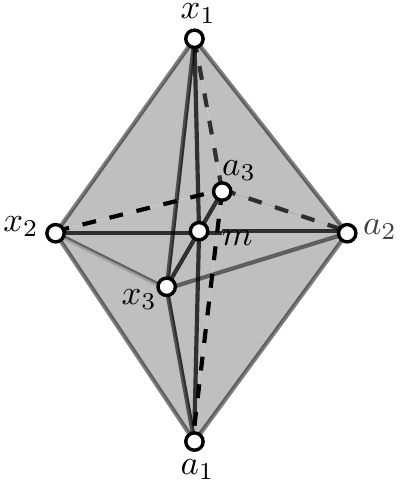}   & \includegraphics[scale=0.8]{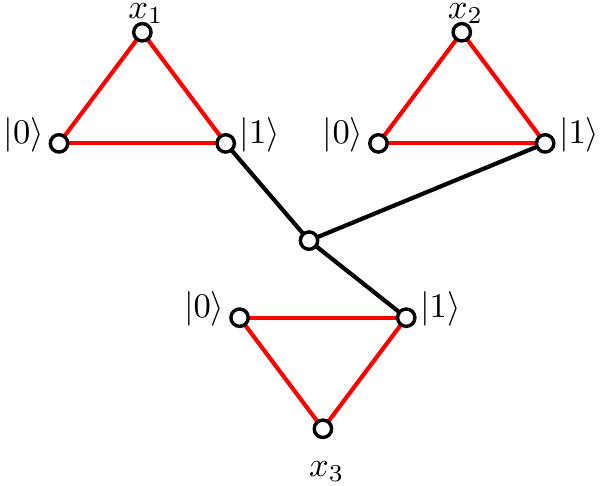}  \\  \hline
$\ket{001}\bra{001}$  & \includegraphics[scale=0.8]{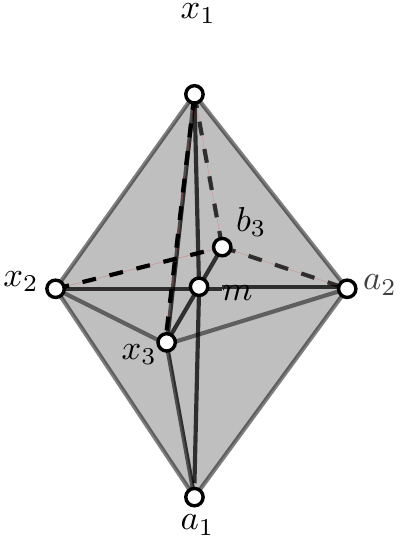}  & \includegraphics[scale=0.8]{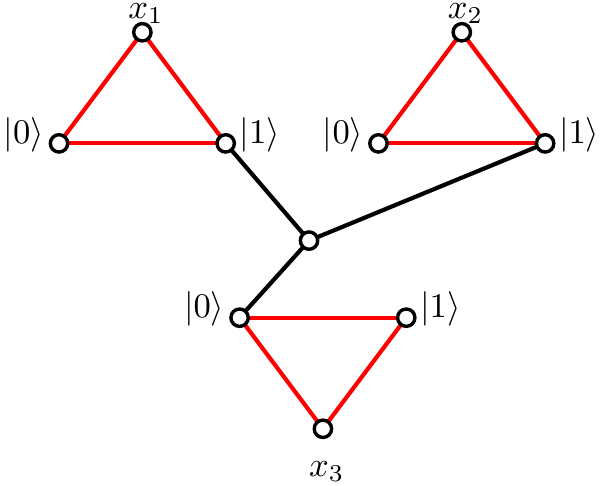}   \\  \hline 
$(\ket{101}-\ket{010}) (\bra{101}-\bra{010})$  &
\begin{minipage}[b]{0.45\linewidth}
    \includegraphics[width=1\linewidth]{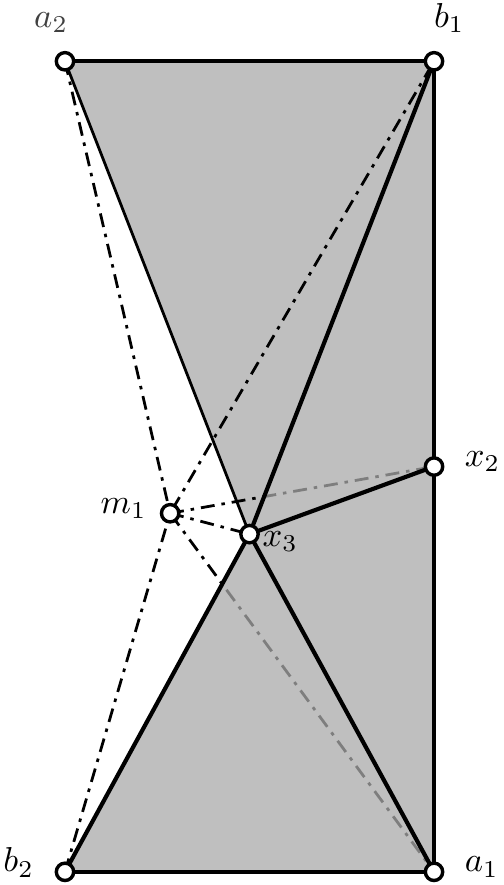} 
  \end{minipage} 
  \begin{minipage}[b]{0.45\linewidth}
    \includegraphics[width=0.75\linewidth]{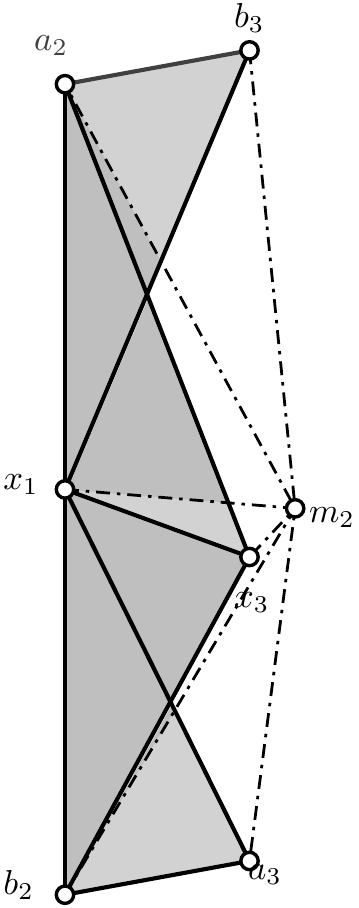} 
  \end{minipage} 
  \begin{minipage}[b]{0.45\linewidth}
    \includegraphics[width=1\linewidth]{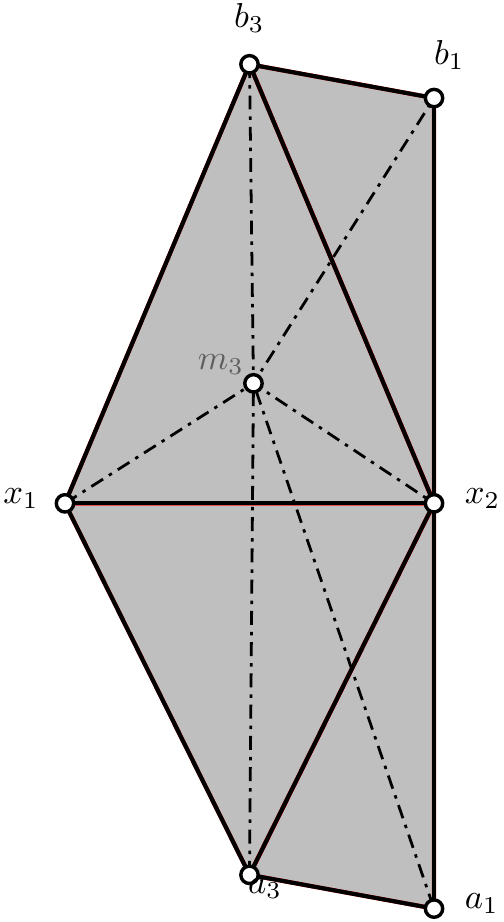} 
  \end{minipage}
  \begin{minipage}[b]{0.45\linewidth}
    \includegraphics[width=1\linewidth]{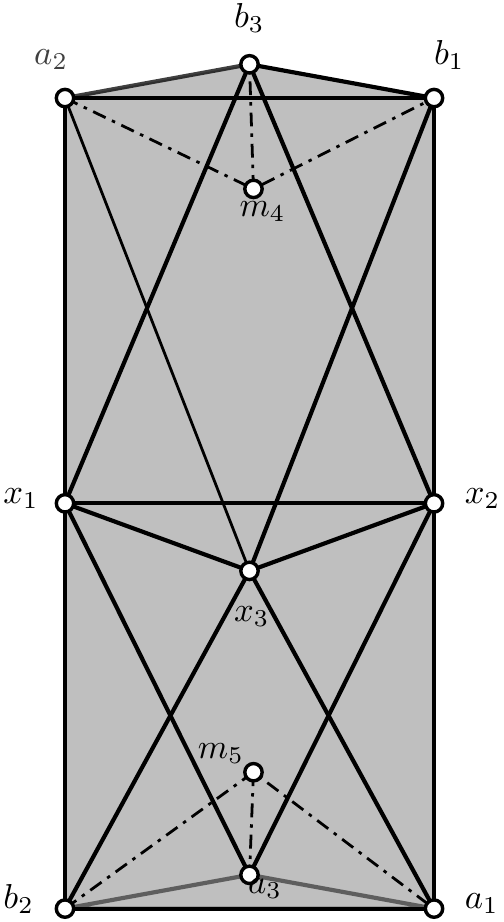}
  \end{minipage} 

&  \includegraphics[scale=0.75]{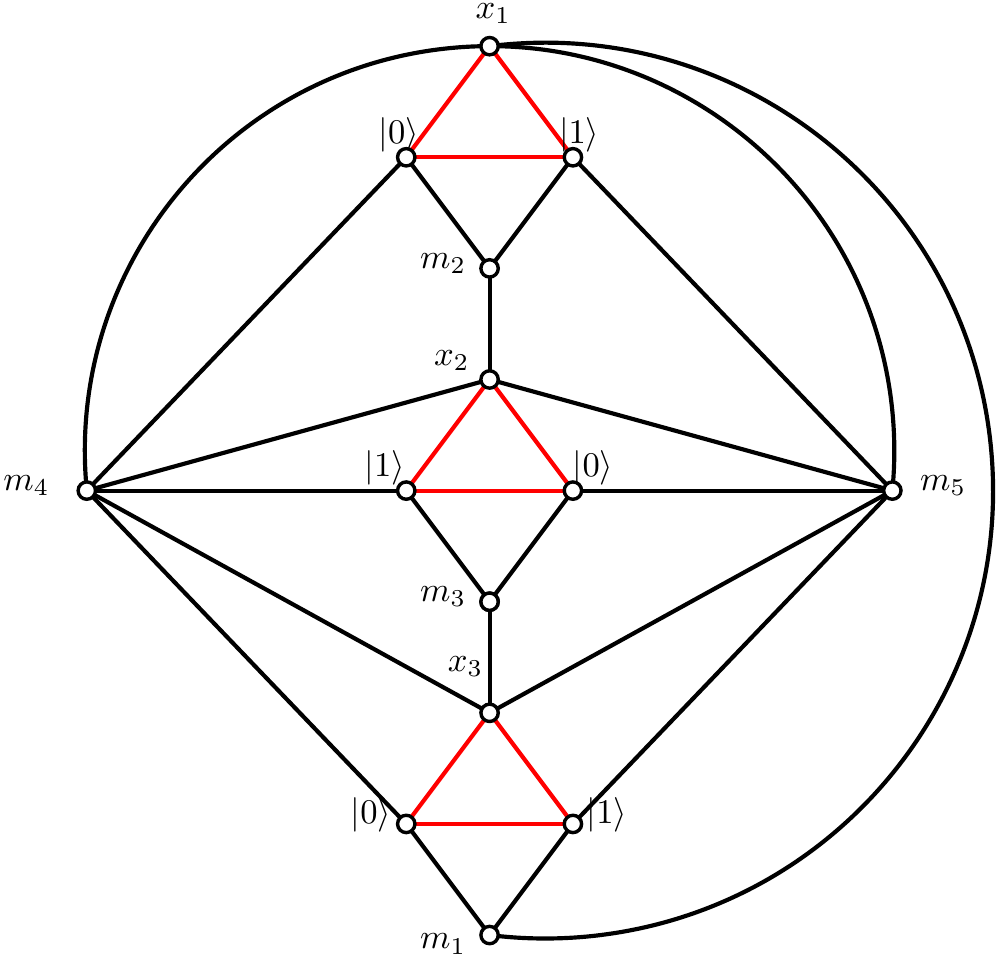} \\  \hline 
\end{tabular}
\end{center}
\caption{Collection of gadgets for the 3-qubit projectors in the clock Hamiltonian of quantum 4-$\SAT$.  
In order to aid readability for the entangled state we have shown the simplices connected to $m_1$, $m_2$ and $m_3$ on separate graphs, and shown how $m_4$ and $m_5$ are connected to the entire complex. We have also omitted the edges between mediators - these form a complete graph between the mediator vertices.
The gadget which lifts the state $\ket{011}-\ket{100}$ is realised by permuting the order of the first and second triangles in the gadget which lifts the state $\ket{101}-\ket{010}$.  The gadgets for propagation under the Pythagorean gate are more complicated, and are dealt with separately in \cref{sec:pythag}.
}
\label{table:gadgets3qubit}
\end{table}%

\subsubsection{The 3-qubit CNOT gadget}
\label{sec:cnot}

We now consider the projector
\equ{
\Pi= (\ket{101}-\ket{010})(\bra{101}-\bra{010})\,,
}
which arises from the CNOT gate (see \cref{table:4SAT}), lifting the entangled state $\ket{101}-\ket{010}$.  This cycle can be thought of as adding $\ket{101}$ and $\ket{010}$ with opposite orientations (see \cref{fig:sigma101010}).
\begin{figure}[]
\begin{center}
\includegraphics[scale=1]{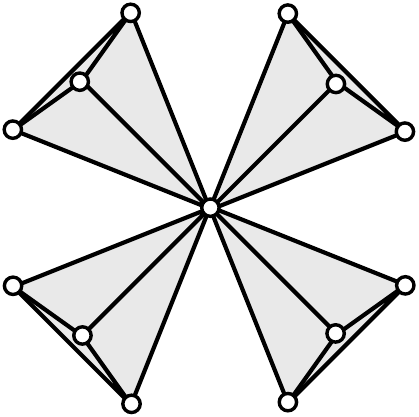} 
\caption{Four solid tetrahedra connected at a single mediator vertex $m$. The base of each tetrahedron is glued to a face in the 2-cycle to be filled in. }
\label{fig:X}
\end{center}
\end{figure}
To fill in this cycle consider four solid tetrahedra connected at a common vertex $m$, as shown in Fig.~\ref{fig:X}. The base of each tetrahedron can be glued to to one of the faces in $\Sigma_3$. There are a total of 14 faces so we introduce 3 mediators $m_{1,2,3}$ and glue these to 12 different faces. This leaves two faces unfilled, which we fill by simply introducing two additional mediators $m_{4}$ and $m_5$ forming two tetrahedra with those two faces. This partially fills in the three-dimensional void, leaving some empty wedges between the 14 tetrahedra. To fill these spaces we simply connect all the mediators $m_1,\cdots, m_5$ by edges and fill the wedges in between by tetrahedra.  The resulting simplicial complex consists of 14 0-simplices, 61 1-simplices, 97 2-simplices, 43 3-simplices, and 1 4-simplex (given by $[m_1...m_5]$). In this way the three-dimensional void has been completely filled in. As a simple consistency check we see that the Euler characteristic is $\chi=1-14+61-97+43-1=-7$, consistent with $H_2=(\Bbb C)^7$. The corresponding graph is shown in \cref{table:gadgets3qubit}. 

We have  algebraically checked that this procedure correctly fills in the desired void by showing that 
\equ{
\ket{101}-\ket{010} = \partial \ket{\Psi} \,.
}
See  \cref{app:algebraic} for an explicit expression for $\ket{\Psi}$.  Moreover, in \cref{app:mathematica} we check that (up to an irrelevant global phase) the state $\ket{101}-\ket{010}$ is the  only state which can be written in this way and is thus the {\it only} element that has been eliminated from $H_2$. 
\begin{figure}[]
\begin{center}
\includegraphics[scale=0.9]{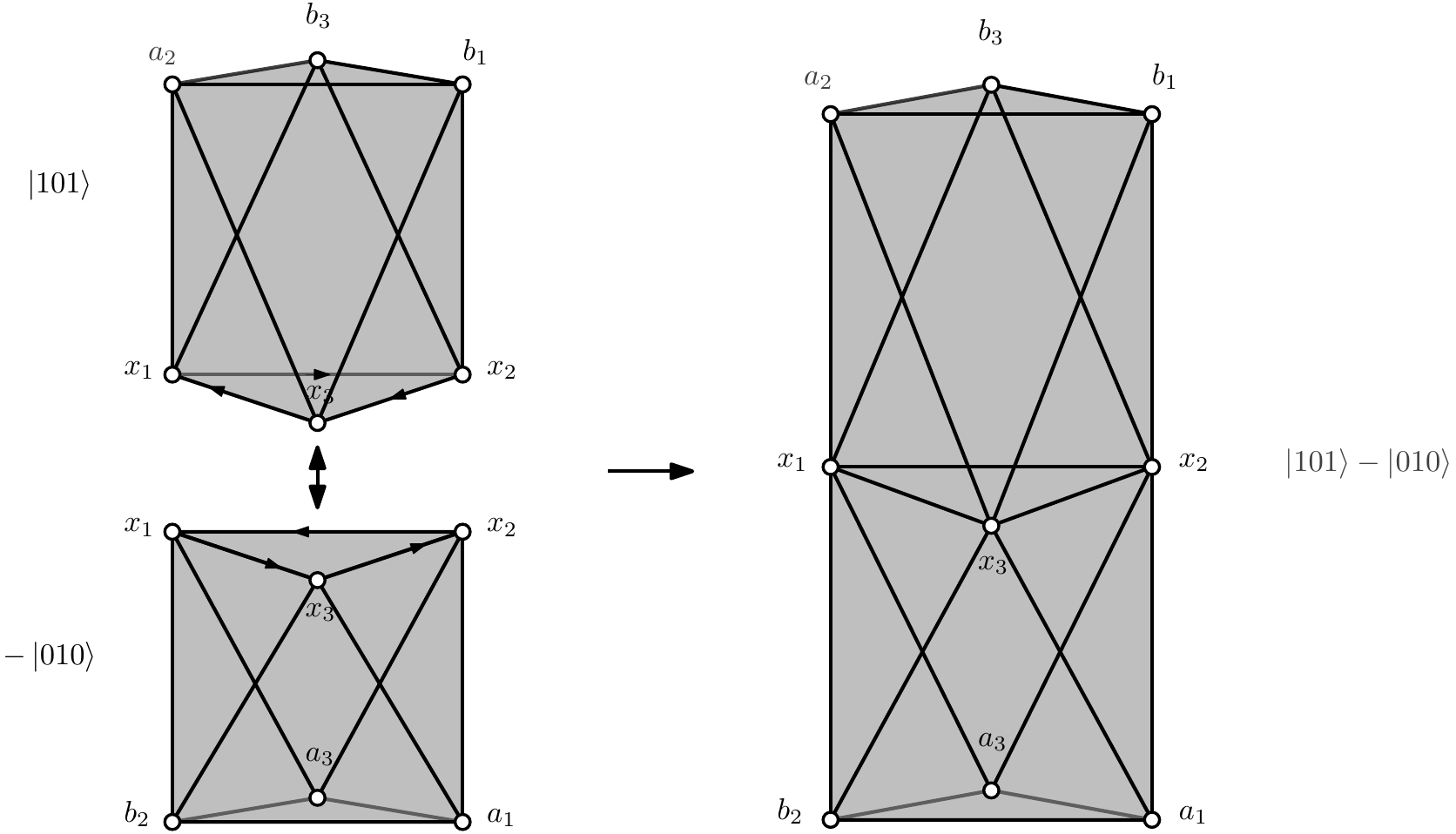}
\caption{Combining the two 2-dimensional voids $\ket{101}$ and $-\ket{010}$ into  the  2-dimensional  void bounded by the  entangled state $\ket{101}-\ket{010}$. Note that the face $[x_1x_2x_3]$ is common to both cycles but comes with opposite orientations and thus cancels in the superposition, leaving only {\it one} void to be filled. Additional vertices and edges which are not relevant have been omitted for clarity.  }
\label{fig:sigma101010}
\end{center}
\end{figure}

\subsubsection{General approach}

\begin{figure}[]
\begin{center}
\includegraphics[scale=0.9]{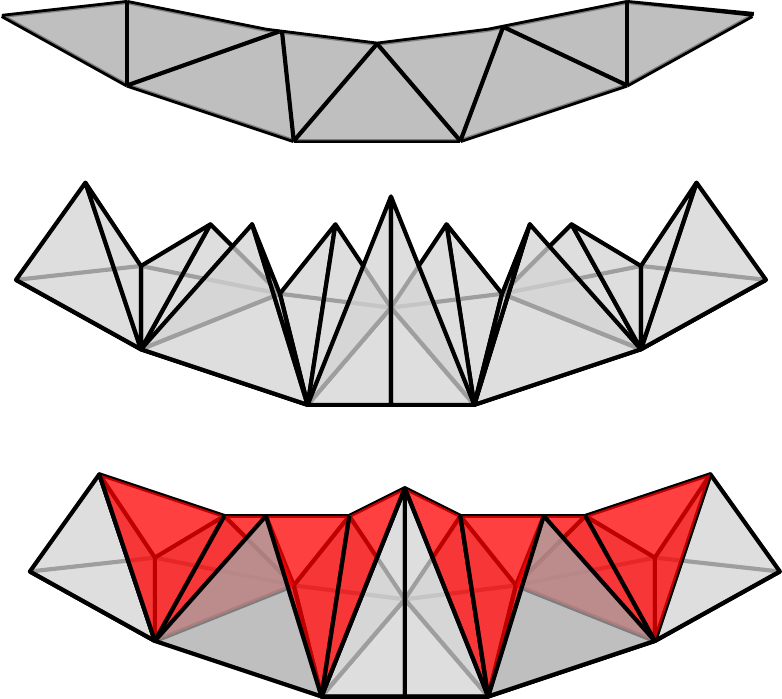} 
\caption{General approach to filling a 2-dimensional cycle. The first picture is a part of a 2-dimensional closed surface. For each face $[f_i]$ in the surface we add include a mediator $m_i$ and connect the mediator to all the vertices of $[f_i]$ to form a solid tetrahedron $[f_im_i]$. This leaves some empty voids in between the tetrahedra, which are filled by connecting mediators whose corresponding faces share an edge.This leads to a closed shell with an inner void. Finally, choosing one of the mediators to connect to all others fills the void and we obtain a three-dimensional space whose boundary is $\sum_i [f_i]$, as desired. }
\label{fig:filling}
\end{center}
\end{figure}

The procedures described above depend on the details of the cycle to be filled in. We now describe a general  procedure. The procedure we describe is far from optimal, in that it will generically required a greater number of mediators and edges that are minimally required, but it has the advantage of being of general applicability and no particular insights into the details of the cycle being filled in are required. 

This is the procedure we will use to fill in the cycles generated by propagation under the Pythagorean gate in the next section.
These will be the most technically involved gadgets required for this proof, and it is useful to understand the high level process before delving into the details of the specific gadget.  

Consider an $n$-dimensional cycle $\ket{s}$ defined by a set of $n$-simplices $[f_i]$, 
\equ{
\ket{s}=  \sum_i[f_i]\,.
}
To fill in this (and only this) cycle  we proceed as follows. For concreteness, consider a 2-cycle. Then, for every face $[f_i]$ we introduce a mediator $m_i$ and add the solid tetrahedron  $[f_i m_i]$ to the simplicial complex. This partially fills in the inner space but leaves some empty 3-dimensional ``wedges'' between the tetrahedra corresponding to neighboring faces (second picture in Fig.~\ref{fig:filling}). We can fill these  in by connecting the  mediators $m_i$ and $m_j$  with the edge $e_{ij}$ that is common among faces $[f_i]$ and $[f_j]$. This generates a ``shell'' whose outer boundary is the surface we wish to fill (see the third picture in Fig.~\ref{fig:filling}), leaving only an inner empty void. To fill in this inner void we add one final mediator vertex, which is connected to all other mediator vertices. Algebraically, this amounts to the statement that there exist a 3-chain $\ket{V}$ such that 
\equ{
 \ket{s}= \partial \ket{V}\sim 0\,.
}
Furthermore $\ket{s}$ is the {\it only} cycle rendered trivial in homology by this procedure. (If one wishes to fill in several cycles the procedure must be carried out for each one).

\subsubsection{The 3-qubit Pythagorean gadget}
\label{sec:pythag}
This is the most technically challenging gate. The states to be lifted are   
\equ{\label{pythstates}
\ket{\psi_{\mathit Pyth.}}=  \frac{1}{5\sqrt{2}} \left(-5\ket{011}+4\ket{100}+3\ket{101}\right)\,,\qquad 
\ket{\psi_{\mathit Pyth.}}= \frac{1}{5\sqrt{2}}  \left(-5\ket{010}+3\ket{100}-4\ket{101}\right)\,.
}
The complication stems mainly from the fact that there are three distinct computational cycles entering each entering  3, 4, or 5 times (up until now we only had computational cycles appearing with coefficients $\pm 1$). For concreteness we focus on the first state in \eqref{pythstates}. The three computational cycles are explicitly, 
\eqss{
A = \ket{011}=\,&  [x_{1}x_{2}x_{3}]+ [x_{1}b_{3}x_{2}]+ [b_{2}x_{1}x_{3}]+ [x_{1}b_{2}b_{3}]+ [a_{1}x_{3}x_{2}]+ [a_{1}x_{2}b_{3}]+ [a_{1}b_{2}x_{3}]+ [b_{2}a_{1}b_{3}]\,, \\ 
B = \ket{100} =\,& [x_{1}x_{2}x_{3}]+ [x_{1}a_{3}x_{2}]+ [a_{2}x_{1}x_{3}]+ [x_{1}a_{2}a_{3}]+ [b_{1}x_{3}x_{2}]+ [b_{1}x_{2}a_{3}]+ [b_{1}a_{2}x_{3}]+ [a_{2}b_{1}a_{3}]\,, \\
C = \ket{101} =\,& [x_{1}x_{2}x_{3}]+ [x_{1}b_{3}x_{2}]+ [a_{2}x_{1}x_{3}]+ [x_{1}a_{2}b_{3}]+ [b_{1}x_{3}x_{2}]+ [b_{1}x_{2}b_{3}]+ [b_{1}a_{2}x_{3}]+ [a_{2}b_{1}b_{3}]\,.
}
The strategy is then to consider copies of each $-A$, $B$, and $C$ cycles separately and glue them appropriately to obtain the cycle $\ket{\psi_{\mathit Pyth.}}$. To do so we describe a procedure which we term ``simplicial surgery,'' in which we cut each of these cycles in a certain way (described below) and glue them along the cuts to form the closed surface $\ket{s}=\ket{\psi_{\mathit Pyth.}}$. Once $\ket{\psi_{\mathit Pyth.}}$ is obtained this way, we fill it in by tetrahedra by the general method described in \cref{sec:pythag}. The procedure for the second state in  \eqref{pythstates} is similar and we provide the details in  \cref{app:pythag2}.

To understand how to precisely  glue these cycles,  note that  $B$ and $C$  have the two edges $[b_1x_2]$ and $[b_1a_2]$ in common. We will glue together the $B$ and $C$ cycles by cutting open the cycles along this shared edge, and gluing together the cycles along the open 1-cycle this creates. Take four copies of $B$  and three copies $C$. Then, in each cycle create a duplicate vertex $b_1'$ and include the edges $[b_1'x_2]$ and $[b_1'a_2]$. This leads to cycles with slits cut into them (see \cref{fig:slits}). Now we glue together the cycles by identifying the $a_2$ and $x_2$ vertices from the different cycles, and identifying $b_1'$ from cycle $i$ with $b_1$ from cycle $i+1$. 
 
We then attach to this the five $A$ cycles. This step is more straightforward since the $A$ cycles are added with the opposite orientation to the $B$ and $C$ cycles.  This means that when we are adding the cycles the common face $[x_1x_2x_3]$ cancels out and we can glue the cycles along this face.
When adding an $A$ cycle to a $B$ cycle this simply amounts to identifying the $x_1$, $x_2$ and $x_3$ vertices from the $A$ cycle with that of the $B$ cycle, and adding the edges of the $A$ cycle in the appropriate orientation. When adding the $A$ cycles to the $C$ cycles they also share the common face ${x_1b_3x_2}$, so the $x_1$, $x_2$, $x_3$ and $b_3$ faces from the $A$ cycle are identified with those of the $C$ cycle, and again the edges of the $A$ cycle are added in the appropriate orientation. 
\begin{figure}[]
\begin{center}
\includegraphics{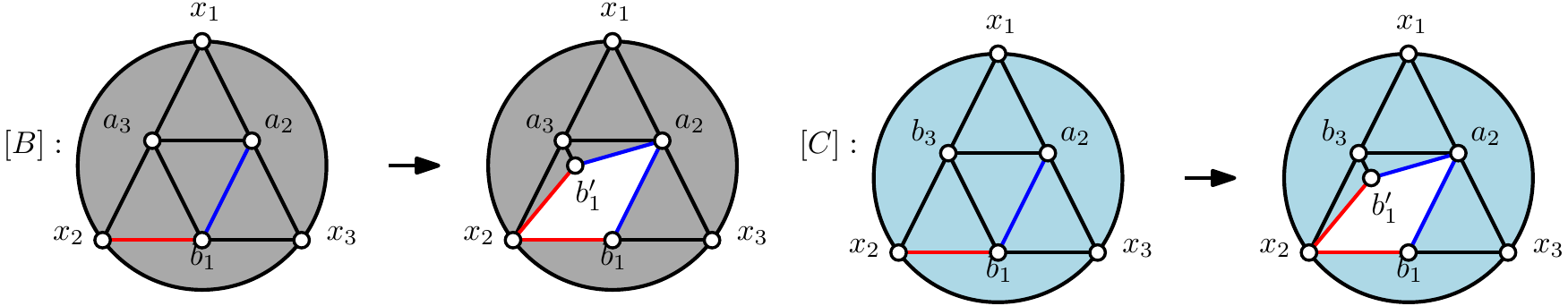}
\caption{The $[B]$ and $[C]$ cycles are cut open by cutting a slit along $[x_2b_1]+[b_1 a_2]$ by duplicating the middle point $b_1$. Here we have omitted vertices and edges to keep the pictures clear but one should bear in mind that before performing the cut each is a closed 2-cycle, homeomorphic to $S^2$.  Then, the open cycles are  connected with each other by gluing through the common one-cycles  $[x_2b_1]+[b_1a_2]+[a_2b_1']+[b_1' x_2]$, ensuring the orientation is consistent, and attaching the $[A]$-cycles by deleting the common face $[x_1x_2x_3]$. The resulting space, $A_5B_4C_3$ is a closed surface composed of the $[B]$ and $[C]$ cycles with the five $[A]$-cycles attached, as shown in Fig.~\ref{fig:PythCycle}. As a whole, the space $A_5B_4C_3$ is homeomorphic to $S^2$ and it is this topological $S^2$ which we wish to fill. }
\label{fig:slits}
\end{center}
\end{figure}
 This leads to the closed surface $A_5B_4C_3$ shown in \cref{fig:PythCycle}, which is bounded by $8 \times 2 +7\times 4 +6 \times 6=80$ faces, which we denote by $[f_i]$.\footnote{The two lots of 8 faces come from the $B$ cycles which aren't attached to any $A$ cycle. The four lots of 7 faces come from the two sets of $A$ and $B$ cycles that are attached to each other, since the common face $[x_1x_2x_3]$ has cancelled out. The six lots of six faces comes from the 3 sets of $A$ and $C$ cycles that are attached together since the common faces $[x_1x_2x_3]$ and $[x_1b_3x_2]$ have cancelled out.}
\begin{figure}[]
\begin{center}
\includegraphics[scale=0.8]{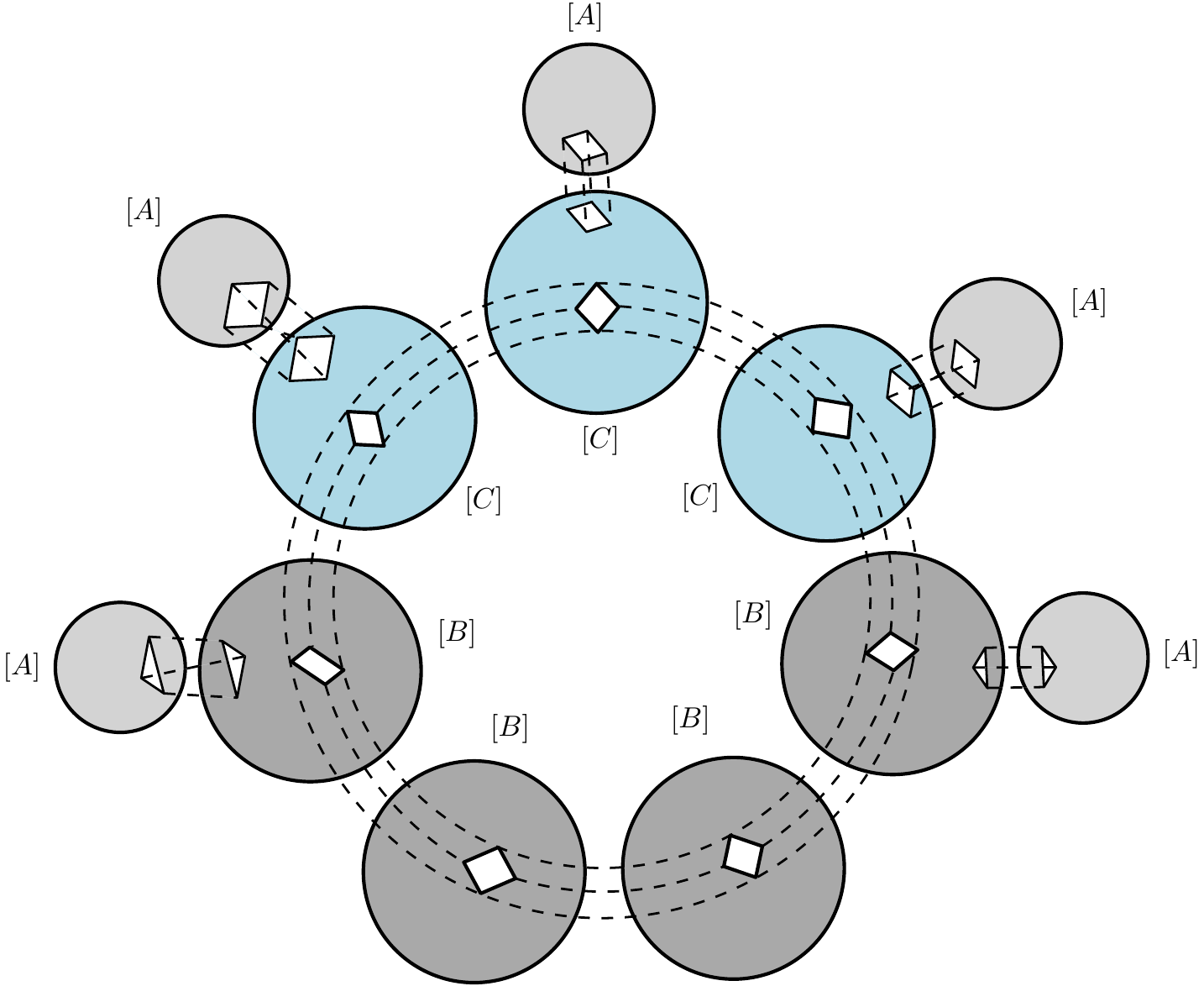}
\caption{The Pythagorean cycle to be filled in is obtained by attaching five copies of $[A]$ which corresponds to the cycle $\ket{011}$ with that face $[x_1x_2x_3]$ missing.  }
\label{fig:PythCycle}
\end{center}
\end{figure}
The resulting surface is homeomorphic to $S^2$. To fill this surface in we follow the general procedure from the previous section. 
For each face composing the  surface we introduce a mediator vertex $m_i$ and connect each to all the vertices of the corresponding $[f_i]$, forming a solid tetrahedron $[f_i\, m_i]$. 

This introduces 80 mediator vertices which partially fill in the cycle defined by $\ket{\psi_{\mathit Pyth.}}$, but leave some gaps. 
We then introduce an edge between two mediators if the faces they are connected to share an edge or a vertex, which starts filling the gaps. 
Finally we add a last mediator vertex, and connect it to the other 80. 
This fills in all gaps in the surface, so that the cycle defined by $\ket{\psi_{\mathit Pyth.}}$ is closed.

We then identify all the copies of the original vertices with themselves, leaving the connections between the original vertices and the mediators unchanged.
This leads to a simplicial complex $\Sigma_3'$ consisting of $90$ 0-simplices, $581$ 1-simplices, $1049$ 2-simplices, $597$ 3-simplices, and $29$ 4-simplices. The Euler characteristic is $1-90+581-1049+579-29=-7$, consistent with $H_2(\Sigma_3')=(\Bbb C)^7$. 

The graph gadget $G_3'$ is shown in \cref{fig:ComplementPyth}. The resulting graph has $9+81=90$ vertices, $3424$ edges and maximum vertex degree  $\delta=81$. The explicit procedure for constructing this gadget is carried out step-by-step in the attached Mathematica file \cite{mathematica}.
Details of the calculations carried out to show that this gadget closes the cycle associated to $\ket{\psi_{Pyth.}}$ and only this cycle are given in \cref{app:mathematica}.  A similar process works for the other Pythagorean gadget, details can be found in \cref{app:pythag2}.

We point out that this technique of gluing faces together can be applied in general to any state of the form
\equ{ \label{eq:psin}
\ket{\psi}= \sum_{i}n_{i} \ket{e_{i}}\,,
}
where the $\ket{e_{i}}$ are the computational cycles and the  $n_{i}\in \Bbb Z$ are {\it integer} coefficients. The method does not apply, however, if the coefficients in \eqref{eq:psin} are general complex numbers. Although such states are well defined as abstract elements of the chain complex $C(\Sigma)$, they cannot be interpreted geometrically as cycles in a {\it simplicial} complex and our procedure for lifting such states in terms of adding discrete data such as mediator vertices and edges is not available. This is the reason we consider the non-standard universal gate set in our definition of $\QMA_{1}$, which involves only rational coefficients and requires to lift the with states with integer coefficients  \eqref{pythstates}. As long as the coefficients in the universal gate set are rational, however, this method carries through and therefore the homology problem is $\QMA_1$-hard for all such definitions of $\QMA_1$.

\begin{figure}[]
\begin{center}
\includegraphics[scale=0.5]{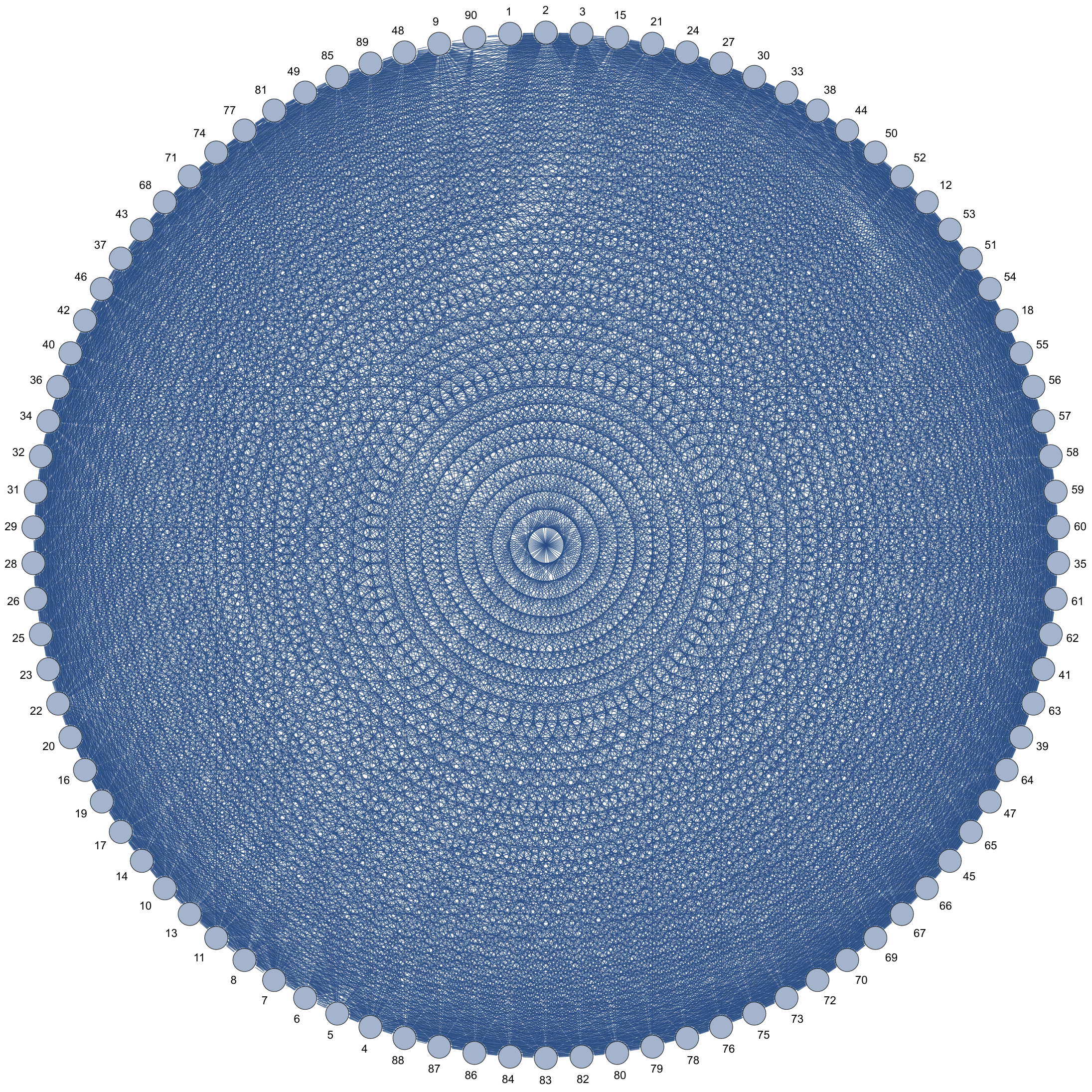}
\caption{The Pythagorean gadget graph $G_{\mathit Pyth.}$ The maximum degree of $G_{\mathit Pyth.}$ is $\delta=81$.  
}
\label{fig:ComplementPyth}
\end{center}
\end{figure}

\subsubsection{Combining gadgets together}  \label{sec:combining_gadgets}

The projectors needed for the various Hamiltonian terms in Table~\ref{table:4SAT} involve the tensor product of the projectors just discussed. Furthermore, in order to construct the full Hamiltonian we will need to take the sum of all these terms.  Thus, the final ingredient needed for the reduction from quantum 4-$\SAT$ to the homology problem is to understand how to take tensor products and sums of the gadgets just described. Luckily, this is straightforward. 

\paragraph{Adding projectors.}

Adding two projectors simply projects onto the common null space. Thus, given two projectors $\Pi_{1}$ and $\Pi_{2}$ the gadget for the sum
\equ{
\Pi=\Pi_{1}+\Pi_{2}
}
is obtained by independently filling in the corresponding cycles in $\Sigma_{n}$ for each projector. 
If any qubits are simultaneously acted upon by both projectors then we need to overlap the gadgets, while taking care not to render any other states equivalent in the homology.
The general procedure for doing this for a pair of projectors $\Pi_1$ (implemented by the graph $G_1$) and $\Pi_2$ (implented by the graph  $G_2$) which act on a total of $n$ qubits is:
\begin{enumerate}
    \item Start constructing the graph $G_{12}$ with $n$ triangles.
    \item If there are any mediators in the graphs $G_1$ and $G_2$ which connect to exactly the same set of qubit vertices in the two graphs, add this mediator and the corresponding connections to the graph $G_{12}$ \emph{once}. Label this set of mediator vertices by $\{m_{12}\}$.
    \item Take the remaining mediators from the graph $G_1$ and add them, and their corresponding connections, to the graph $G_{12}$, and label this set of mediator vertices by $\{m_{1}\}$. Do the same for $G_2$, and label this set of mediator vertices by $\{m_{2}\}$.
    \item Add connections from every vertex in the set $\{m_{1}\}$ to every vertex in the set $\{m_{2}\}$.
\end{enumerate}

The final step in this procedure may seem superfluous.
Indeed, constructing the gadgets without this step does lead to the desired states being lifted, but it can also render some ground states equivalent in the homology.
If this occurs then the homology group of the corresponding graph is no longer isomorphic to the ground state of the desired projector.
Adding connections between the mediators arising from the different projectors prevents this from happening, preserving the isomorphism between the homology group and the  ground state.
This point is explained in more detail in \cref{app:adding_projectors}.

\paragraph{Tensoring projectors.} 
\begin{figure}[]
\begin{center}
\includegraphics{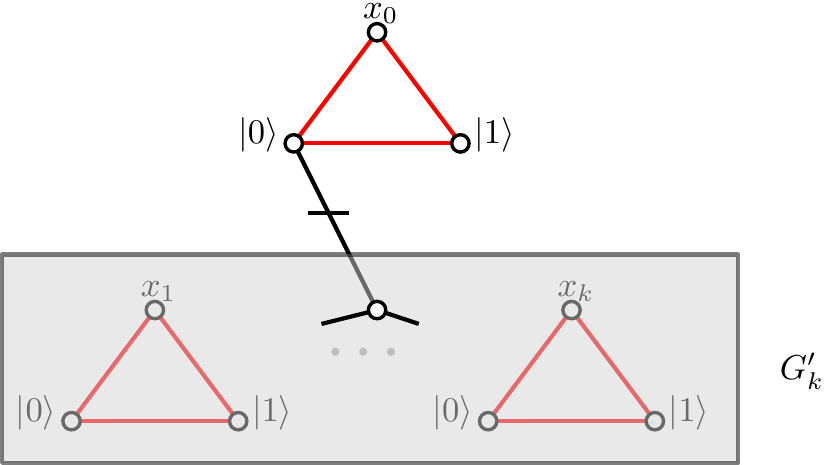}
\caption{The qubit 0 at the top realizes the 1-qubit projector $\ket{1}\bra{1}$ and the bottom gadget $G_{k}'$ implements $\Pi^{(k)}$. Connecting them as in the figure implements  the $(k+1)$-qubit projector $\Pi^{(k+1)}=\ket{1}\bra{1}\otimes \Pi^{(k)}$.   }
\label{fig:1controlG}
\end{center}
\end{figure}

A number of tensor products between entangling projectors and classical projectors appear in \cref{table:4SAT}. Consider a $ (k+1)$-qubit projector of the form
\equ{
\Pi^{(k+1)}= \ket{1}\bra{1}\otimes \Pi^{(k)}\,,
}
where $\Pi^{(k)}$ is a $k$-qubit projector and the identity operator is understood to be acting on the remaining qubits. If the first qubit above is in the state $\ket{0}$, any configuration of the remaining $k$ qubits is in the kernel of $\Pi^{(k+1)}$. If instead it is in the state $\ket{1}$ this forces the $k$ qubits to be in the  kernel of $\Pi^{(k)}$. We wish to understand how to reproduce this in terms of the graph gadgets for $\ket{1}\bra{1}$ and $\Pi^{(k)}$.  For clarity we  consider an explicit example, but the logic can  be applied to any such gadget. 
The case we will consider is the gadget for lifting the state $\ket{1}\left(\ket{01}-\ket{10}\right) $.
The cycle that must be filled in order to lift the state $\ket{01}-\ket{10}$ is given by:
\equ{ \label{eq:tensoring}
\ket{01}-\ket{10} = [x_1a_2]+[a_2b_1] + [b_1x_2] + [x_2a_1] + [a_1b_2]+[b_2x_1] \,.
}
Tensoring on the state $\ket{1}=[x_3]-[b_3]$ is equivalent to taking the wedge product $(\ket{01}-\ket{10}) \ket{1}=([x_1a_2]+[a_2b_1] + [b_1x_2] + [x_2a_1] + [a_1b_2]+[b_2x_1])\wedge ([x_3]-[b_3])$. Recall that  when we lifted the state  $\ket{01}-\ket{10}$ we filled the 1-cycle defined by the edges given in \eqref{eq:tensoring}. To lift $\ket{1}\left(\ket{01}-\ket{10}\right)$ we clearly need to fill in a 2-cycle, and we can see the form of that 2-cycle by considering the wedge product above.  For each 1-simplex in \eqref{eq:tensoring} the tensor product will give rise to two 2-simplices - one including the original two vertices and $x_3$, the other with the original two vertices and $b_3$ (with opposite orientations). Thus, the cycle for the state $\left(\ket{01}-\ket{10}\right)\ket{1} $ will involves 12 2-simplices which bound the original 1-cycle (see \cref{fig:3Dcycle}).

It is straightforward to see that taking the independence complex which filled the 1-cycle for $\ket{01}-\ket{10}$, and adding edges between each mediator and $x_3$ and $b_3$ will fill the required 2-cycle.
From this independence complex, $\sum_t$, we can construct the graph $G_t$ - see \cref{fig:newG}.

\begin{figure}[]
\begin{subfigure}{0.5\textwidth}
\centering
\includegraphics{Figures/0101.pdf}
\caption{Graph $G$ to fill in the state $\ket{01}-\ket{10}$}
\end{subfigure}
\begin{subfigure}{0.5\textwidth}
\centering
\includegraphics{Figures/Sigma2ppp.pdf}
\caption{Independence complex $\sum$ to lift the state $\ket{01}-\ket{10}$.}
\end{subfigure}
\begin{subfigure}{0.5\textwidth}
\centering
\includegraphics{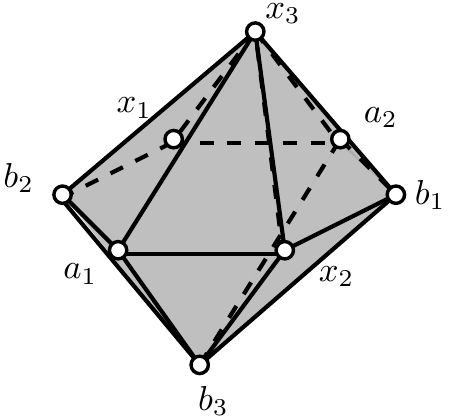}
\caption{To lift the state $\ket{1}\left(\ket{01}-\ket{10}\right)$ we need to fill in the 3-dimensional void shown in this figure. Irrelevant vertices and edges have been omitted for clarity. } \label{fig:3Dcycle}
\end{subfigure}
\begin{subfigure}{0.5\textwidth}
\centering
\includegraphics[scale=0.75]{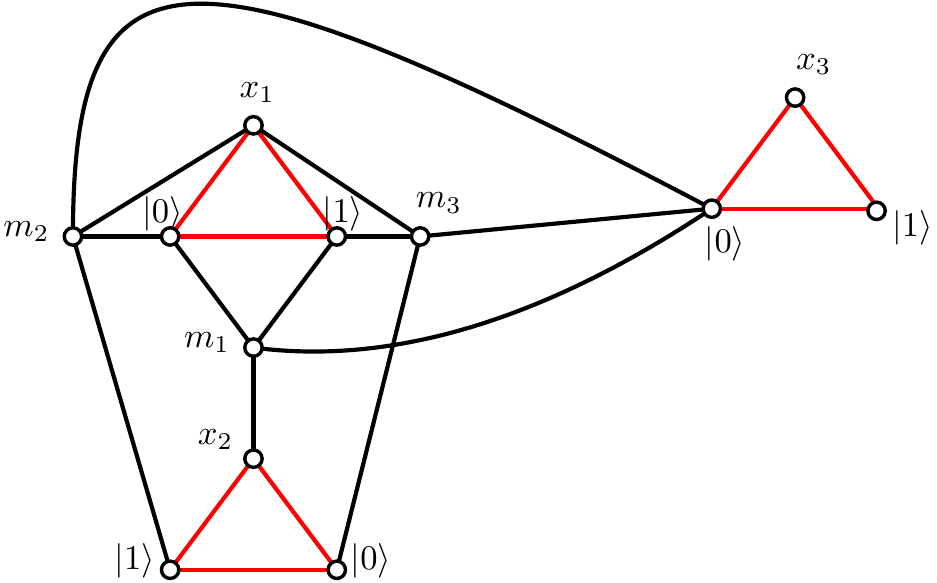}
\caption{Graph $G_t$ to lift the state $\ket{1}\left(\ket{01}-\ket{10}\right)$.} \label{fig:newG}
\end{subfigure}
\caption{Tensoring on a classical projector to an entangled one is equivalent to turning the $k$-simplices bounding the cycle from the original projector into $k+1$-simplices which bound the cycle we need to fill in for the new projector. This new void can be filled in by adding edges between the additional vertices bounding the cycle and all mediators from the original projector in the independence complex. In the graph this is equivalent to adding edges between the vertex \emph{not} bounding the cycle and all the mediators.} \label{fig:tensorProducts}
\end{figure}

The same logic carries over to the other examples where we have an entangled projector tensored with a classical projector.
In each case, the $k$-simplices bounding the original cycle become $k+1$-simplices bounding the new cycle, and adding edges in the independence complex between the existing mediators and the additional vertices bounding the new cycle is sufficient to fill the necessary void.  In the graph this is equivalent to adding an edge between the vertex that does \emph{not} bound the new cycle (i.e. the vertex $a$ if the classical projector is $\ket{1}\bra{1}$ and the vertex $b$ if the classical projector is $\ket{0}\bra{0}$).
See \cref{fig:1controlG}.\footnote{For readers who have already covered the connection to supersymmetry in \cref{sec:connection to SUSY} the form of these tensor product gadgets can be checked very quickly. Clearly if the qubit with the classical projector acting on it is in the kernel of the classical projector then the mediators are forced to be empty by the  hard core condition, and the homology is non-trivial regardless of the state of the remaining qubits. If the classical qubit is not in the kernel of the classical projector, then the remaining qubits are forced to be in the kernel of the entangled projector otherwise the mediators will have trivial homology, and by the tic-tac-toe lemma so will the overall state.}

\paragraph{Tensoring on the identity} If we were to apply the logic for adding and tensoring projectors to the case of tensoring on the identity it would appear that this is a complex task.
We would need to first construct the `identity projection' onto some set of qubits by adding together a number of classical projectors, then tensor this sum of projectors onto the relevant non-trivial projection.

Fortunately, we can take a short cut that avoids this complication. Consider a projector $\Pi_i$ acting on some set of qubits $\{i_k\}$ with gadget given by the graph $G_i$.
If we want to tensor this with the identity on qubit $j$ in the graph we simply take the graph $G_i$ and add a triangle to represent the qubit $j$ which is disjoint from the rest of the graph. 

To see why this is equivalent to tensoring on the identity acting on qubit $j$ consider \cref{fig:tensorProducts} again.
In that figure we demonstrate how to tensor on the classical projector $\ket{1}\bra{1}$ to the entangled projector $(\ket{0}-\ket{1})(\bra{0}-\bra{1})$. 
To modify this gadget so that we were instead tensoring the entangled projector with the identity we need to also lift the state $\ket{0}(\ket{0}-\ket{1})$.
Examining \cref{fig:newG} and \cref{fig:3Dcycle} we see that removing the connections between the $\ket{0}$ vertex of qubit $j$ and the mediators from the graph $G_i$ will achieve this, since it will give rise to another filled cycle of the same form as that in \cref{fig:3Dcycle}, while it won't change the original filled cycle.\footnote{Again, readers who have already covered the material in \cref{sec:connection to SUSY} will immediately see that this is the correct procedure - in the Hamiltonian picture we are simply stating that acting with the identity on a set of qubits is equivalent to doing nothing.}

\subsection{$\QMA_1$-hardness}
\label{sec:reduction}

The homology problem for simplicial complexes, parameterised by an input type $\mathcal{I}$ was defined in \cite{ADAMASZEK20168}:
\\
\myprob{{\sc $\text{Homology}_{\mathcal{I}}(K,l)$}}{A simplicial complex $K$ represented as $\mathcal{I}$ and an integer $l$.}{Output $\yes$ if the $l^{\text{th}}$ homology group of $K$ is non-trivial and $\no$ otherwise.\\} 

\noindent In this section, we will prove that this problem is  $\QMA_1$-hard when the simplicial complex, $K$, is a clique complex $Cl(G)$ represented by $G$. 
As a stepping stone we consider the closely related problem when the complex is an independence complex $I(G)$ represented by $G$:

\begin{lemma}\label{lem:ind}
{\sc $\text{Homology}_{G}(I(G),l)$} is $\QMA_1$-hard.
\end{lemma}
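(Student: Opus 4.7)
The plan is to reduce from quantum $4$-$\SAT$, which is $\QMA_1$-hard with respect to the universal gate set $\mathcal{G} = \{\CNOT, U_{\mathit{Pyth.}}\}$ via Bravyi's clock construction recalled in \cref{sec:QMA1 and Quantum k SAT}. Given an instance specified by the $\poly(n)$ collection of $4$-local rank-one projectors $\{\Pi_a^{\text{Bravyi}}\}$ on $n$ qubits listed in \cref{table:4SAT}, the objective is to build in polynomial time a graph $G$ on $\poly(n)$ vertices so that
\begin{equation*}
H_{n-1}\bigl(I(G)\bigr) \;\cong\; \ker\Bigl(\sum_a \Pi_a^{\text{Bravyi}}\Bigr),
\end{equation*}
and then to output this $G$ together with $l = n-1$. $\yes$ instances of quantum $4$-$\SAT$ then correspond to a non-trivial joint kernel and hence to non-trivial $H_{n-1}(I(G))$, while $\no$ instances give the zero kernel (the sum of positive semidefinite projectors has zero kernel iff their intersection of kernels is empty) and hence $H_{n-1}(I(G)) = 0$, as required by the definition of $\text{Homology}_G(I(G), l)$.

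The construction will assemble the gadgets of \cref{sec:gadgets} in the natural way. First I would initialize with $G^{(0)} = G_n$, the graph of $n$ disjoint triangles, whose independence complex $\Sigma_n$ satisfies $H_{n-1}(\Sigma_n) \cong (\CC^2)^{\otimes n}$ with canonical basis $\{\ket{s}\}_{s\in\{0,1\}^n}$ identifying computational basis states with distinguished $(n-1)$-cycles. Then I would process the projectors one at a time: for each $\Pi_a^{\text{Bravyi}}$, pick the matching gadget from \cref{table:gadgets2qubit}, \cref{table:gadgets3qubit}, or the Pythagorean construction of \cref{sec:pythag}; pad it with the tensor-with-identity rule of \cref{sec:combining_gadgets} so that it acts on the correct subset of qubits; and merge it with the running graph using the adding-projectors rule of that same subsection. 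After all projectors are processed, the resulting graph $G$ will have the property that an integer-coefficient chain $\ket{\psi} = \sum_s c_s \ket{s}$ becomes a boundary in $I(G)$ precisely when $\ket{\psi}$ lies in the image of $\sum_a \Pi_a^{\text{Bravyi}}$, establishing the isomorphism displayed above.

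The main obstacle will be demonstrating the \emph{exactness} of each filling step: a gadget implementing a projector $\Pi$ must render trivial every cycle in $\text{im}\,\Pi$ \emph{and no others}. Filling the target cycle is the easy direction; the subtle content is ruling out accidental bounding of additional independent cycles, which would collapse $H_{n-1}$ further than intended and manufacture false $\yes$ answers. This is transparent for the diagonal classical projectors, requires the tailored $5$-mediator triangulation for the $\CNOT$ gadget of \cref{sec:cnot}, and demands the elaborate simplicial-surgery gadget of \cref{fig:ComplementPyth} for the Pythagorean gate, each verified by direct algebraic computation and by computer algebra in the Appendix. The inter-gadget mediator-to-mediator edges added by the combining rule of \cref{sec:combining_gadgets} are precisely what preserves exactness under composition, so this step must be included even though it may look superfluous at first sight. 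The choice of rational-coefficient gate set $\mathcal{G}$ is also essential, since every $\Pi_a^{\text{Bravyi}}$ then has image spanned by integer-coefficient vectors, which is the only regime in which the simplicial-surgery method applies. Polynomial-time efficiency of the reduction is immediate: Bravyi's Hamiltonian has $\poly(n)$ terms, each individual gadget contributes $O(1)$ mediator vertices and edges (the $81$-mediator Pythagorean gadget being the largest), and each merge adds at most $O(1)$ inter-gadget edges, so $|V(G)|,|E(G)| = \poly(n)$.
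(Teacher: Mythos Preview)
Your proposal is correct and follows essentially the same route as the paper's proof: reduce from quantum $4$-$\SAT$ via Bravyi's clock Hamiltonian, start with $n$ disjoint triangles, adjoin the gadgets of \cref{sec:gadgets} according to the combining rules of \cref{sec:combining_gadgets}, and argue completeness and soundness via the exact correspondence between $H_{n-1}(I(G))$ and the joint kernel of the projectors. One small slip: a single merge step can add more than $O(1)$ inter-gadget edges (a mediator may need to be connected to mediators from every previously-placed gadget sharing a qubit, of which there can be $\poly(n)$), but since there are $\poly(n)$ gadgets each with $O(1)$ mediators the total edge count is still $\poly(n)$ and your conclusion $|V(G)|,|E(G)|=\poly(n)$ stands.
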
 
\begin{proof}
We will demonstrate hardness by a reduction from an arbitrary problem in $\QMA_1$ via the family of Hamiltonians used to show $\QMA_1$-hardness of quantum 4-$\SAT$ in \cite{bravyi2011efficient}.
That is, for any Hamiltonian $\Hbravyi$ which encodes a $\QMA_1$-verification circuit via the techniques of \cite{bravyi2011efficient}, we will construct a graph $G$ such that a particular homology group $H_l(d)$ of the independence complex $I(G)$ is non trivial iff there exists a satisfying assignment to the corresponding instance of quantum $k$-$\SAT$.
And therefore iff there exists a witness such that the $\QMA_1$-verification circuit accepts with probability 1. 

The ingredients for the construction are clear.
Given an arbitrary problem in $\QMA_1$ with input string $x$ we construct a $4$-local Hamiltonian $\Hbravyi$ which encodes the problem and acts on $n = \poly(|x|)$ qubits.
We then construct a graph composed of $n$ triangles (i.e. consisting of $3n$ vertices and $3n$ edges)
and add mediators connecting   triangles if they are jointly acted on by a projector in $\Hbravyi$.
Where the form of the connections between triangles is given by the gadgets derived in the previous sections.
We will demonstrate that the independence complex to the graph has non-trivial homology iff the encoded problem is a $\yes$ instance.

\paragraph{The input to the problem:}
$\Hbravyi$ is made up of at most $\poly(n)$ rank-1 projectors.
The gadgets we have constructed add $O(1)$ vertices to the graph for each projector.
Therefore the final graph has $V = \poly(n)$, and the input size to the problem is $|V(G)|^2 = \poly(n) = \poly(x)$. The final part of the input to the problem is the integer $l$. Recall from the problem definition that the quantity of interest is the $l^{\textrm{th}}$ homology group. 
For any fixed $l$ this problem can be efficiently solved, however if $l$ is allowed to grow with the input to the problem this is no longer the case.
For our purposes, we will take $l=n-1$.

\paragraph{Completeness:}
Given a $\yes$ instance of an arbitrary problem in $\QMA_1$ there is a corresponding satisfying assignment to the instance of quantum 4-$\SAT$ that encodes the verification of this problem. 
Therefore there exists a state $\ket{\Psi}$ satisfying $\Hbravyi\ket{\Psi} =0 $.
This state will correspond to an $(n-1)$-cycle, $c$, in the independence complex which is not a boundary - i.e. we have $\partial_{n-1}c = 0$ but $\nexists d$ such that $\partial_n d = c$.
This follows since in the initial graph - consisting simply of $n$ triangles - every element of the Hilbert space $(\mathbb{C}^{2})^{\otimes n}$ is an $(n-1)$-cycle which is not a boundary.
Adding the gadgets to build up a graph encoding the Hamiltonian $\Hbravyi$ does not change the first equation, $\partial_{n-1}c=0$ because no connections are added or removed between the initial $3n$ vertices by the gadgets, so the expression $\partial_{n-1}c$ is unchanged.
Moreover, the gadgets fill in cycles iff they correspond to states that violate one of the projectors in $\Hbravyi$.
But $\ket{\Psi}$ cannot violate any of the projectors in $\Hbravyi$, and as such the cycle corresponding to $\ket{\Psi}$ has not been filled in, and therefore we cannot write $c$ as the boundary of some $n$-cycle.
Therefore for $\yes$ instances the $(n-1)^{\text{th}}$ homology group of  $I(G)$ is non-trivial. 

\paragraph{Soundness:}
Given a $\no$ instance of an arbitrary problem in $\QMA_1$ the instance of quantum 4-$\SAT$ that encodes the problem has no satisfying assignments.
Therefore every cycle which was not a boundary in the trivial graph (consisting of $n$ disconnected triangles) has been filled in by the inclusion of the gadgets, and therefore the $(n-1)^{\text{th}}$ homology group of  $I(G)$ is trivial.

Therefore, the $(n-1)^{\text{th}}$ homology group of  $I(G)$ is non-trivial iff $G$ encodes a $\yes$ instance of a $\QMA_1$ problem.
\end{proof}

\begin{theorem}
{\sc $\text{Homology}_{G}(Cl(G),l)$} is $\QMA_1$-hard.
\end{theorem}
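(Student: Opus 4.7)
The plan is to obtain this theorem as an essentially immediate consequence of \cref{lem:ind} together with the observation, already made in \cref{subsec:cli/in}, that the clique complex and the independence complex are related by the graph complement: $I(G) = Cl(\bar G)$, where $\bar G$ denotes the complement of $G$ on the same vertex set. Since taking the complement of a graph can be done in time polynomial in $|V(G)|$ (just flip edges to non-edges), this gives a polynomial-time, parsimonious reduction between the two variants of the homology problem.

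Concretely, I would argue as follows. Suppose we have a polynomial-time reduction $x \mapsto G_x$ from an arbitrary $\QMA_1$ problem to $\text{Homology}_{G}(I(G),n-1)$, as constructed in the proof of \cref{lem:ind}; that is, the $(n-1)$-st homology group of $I(G_x)$ is non-trivial iff $x$ is a $\yes$ instance. Compose this with the map $G_x \mapsto \overline{G_x}$. The composition is still polynomial time, and by the identity $I(G_x) = Cl(\overline{G_x})$ we have
\begin{equation}
H_{n-1}\bigl(I(G_x)\bigr) \;=\; H_{n-1}\bigl(Cl(\overline{G_x})\bigr),
\end{equation}
so the $(n-1)$-st homology group of $Cl(\overline{G_x})$ is non-trivial iff $x$ is a $\yes$ instance. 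Feeding the graph $\overline{G_x}$ together with $l=n-1$ into the clique-homology problem therefore decides the original $\QMA_1$ problem, proving $\QMA_1$-hardness of $\text{Homology}_{G}(Cl(G),l)$.

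There is essentially no technical obstacle here: all the work has already been done in building the independence-complex gadgets and verifying that they correctly encode $\Hbravyi$. The only thing to check carefully is that the complement operation does not blow up the parameters (it does not: the number of vertices is preserved and the reduction remains parsimonious, so the corollary for $\#\BQP$-hardness of the Betti number also transfers without change), and that $l = n-1$ is indeed a valid integer input of size polynomial in $|x|$, which is immediate from the construction in \cref{lem:ind}.
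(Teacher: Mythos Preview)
Your proposal is correct and follows essentially the same approach as the paper: invoke \cref{lem:ind}, then compose with the polynomial-time graph complement using $Cl(\overline{G_x}) = I(G_x)$. The paper's proof is in fact even terser than yours, omitting the remarks on parsimony and on $l=n-1$.
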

\begin{proof}
As noted in \cref{subsec:cli/in} $Cl(G) = I(\overline{G})$.
Moreover, given $G$ the complement graph $\overline{G}$ can be computed in $\P$.
Therefore an arbitrary problem in $\QMA_1$ can be reduced to {\sc $\text{Homology}_{G}(Cl(G),l)$} via the reduction to {\sc $\text{Homology}_{G}(I(G),l)$} given in  \cref{lem:ind}.
\end{proof}

Furthermore, we note that the homology problem for clique complexes of graphs can be reduced to the homology problem for simplicial complexes given by vertices and minimal non-faces \cite{ADAMASZEK20168}, and therefore we can derive the corollary:

\begin{corollary}
{\sc $\text{Homology}_{\mathcal{I}}(K,l)$} is $\QMA_1$-hard when $\mathcal{I}$ is the vertices and minimal non-faces of $K$. 
\end{corollary}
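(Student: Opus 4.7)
The plan is to reduce from $\text{Homology}_{G}(Cl(G),l)$, which the preceding theorem established is $\QMA_1$-hard, to $\text{Homology}_{\mathcal{I}}(K,l)$ with $\mathcal{I}$ the vertices-and-minimal-non-faces encoding. The key observation is that a clique complex admits an extremely concise description in this alternative encoding: the minimal non-faces of $Cl(G)$ are precisely the non-edges of $G$ (equivalently, the edges of $\bar{G}$).

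To see this, recall that $S \subseteq V(G)$ is a simplex of $Cl(G)$ iff $S$ is a clique in $G$, so a non-face is any subset that fails to be a clique. Singletons and the empty set are always faces, so every non-face has size at least two. A pair $\{u,v\}$ with $uv \notin E(G)$ is a non-face whose proper subsets are all faces, hence minimal. Conversely, any non-face $S$ of size $\geq 3$ contains some non-adjacent pair $\{u,v\}$ (otherwise $S$ would be a clique), so such an $S$ properly contains the non-face $\{u,v\}$ and fails to be minimal. Thus the minimal non-faces of $Cl(G)$ coincide exactly with $E(\bar{G})$.

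Given this, the reduction is immediate: on input $(G,l)$ for $\text{Homology}_{G}(Cl(G),l)$, output $\bigl(V(G),\,E(\bar{G}),\,l\bigr)$ as an instance of $\text{Homology}_{\mathcal{I}}(K,l)$. Computing $\bar{G}$ from $G$ takes time polynomial in $|V(G)|$, so this is a valid polynomial-time many-one reduction, and by construction the simplicial complex encoded is exactly $Cl(G)$, whose $l$-th homology group is non-trivial iff the original instance is a $\yes$ instance. Applying this to the previous theorem yields $\QMA_1$-hardness.

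Since the heavy technical work (constructing the Pythagorean and CNOT gadgets, verifying the simplicial surgery, and establishing the isomorphism between accepting witnesses and homology representatives) has already been done in the proof of $\QMA_1$-hardness for $\text{Homology}_{G}(Cl(G),l)$, there is no substantive obstacle here—only the elementary combinatorial lemma above identifying minimal non-faces of a flag complex with non-edges of the underlying graph. The corollary therefore follows with essentially no additional effort.
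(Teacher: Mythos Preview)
Your proposal is correct and takes essentially the same approach as the paper: both reduce from the clique-complex case by observing that a clique complex can be efficiently re-encoded via vertices and minimal non-faces. The paper simply cites \cite{ADAMASZEK20168} for this reduction, whereas you spell out explicitly the elementary fact that the minimal non-faces of $Cl(G)$ are exactly the non-edges of $G$, which is a welcome addition of detail.
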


\subsection{Containment in $\QMA$ with a suitable promise}

To further pin down the complexity of the problem we need to consider the issue of a suitable promise.
\homologyProb\ was presented as a decision problem - i.e. a promise problem with a vacuous promise.
While it is known that there are problems in quantum complexity classes with vacuous promises (e.g. the \emph{group non-membership} problem is known to be in $\QMA$ \cite{Watrous:2000}) it is unlikely that a problem with a vacuous promise could be \emph{complete} for $\QMA_1$ \cite{Watrous:2008}. Therefore we do not expect \homologyProb\ as stated to be contained in $\QMA_1$. 

From the statement of \homologyProb\ above it is not immediately obvious that it is possible to present it with a non-vacuous promise.
A simplicial complex either has holes or doesn't, it appears to be a completely binary decision problem.
However, it can be shown that the $p^{\text{th}}$ homology group, $H_p(K)$ of a simplicial complex $K$ is non-trivial iff the kernel of the combinatorial Laplacian $\mathcal{L}_p = \partial^\dagger_p\partial_p +\partial_{p-1}\partial^\dagger_{p-1}$ is non-empty (in fact, the $p^{\text{th}}$ Betti number is given by the dimension of $\text{ker}(\mathcal{L}_p)$, see e.g. \cite{HORAK2013303}).
This suggests a reformulation of \homologyProb\ with a suitable promise (this version first appeared in \cite{Cade:2021jhc} for the general case of chain complexes, here for clarity we state it for simplicial complexes): \\

\mypromprob{{\sc $\text{Promise-Homology}_{\mathcal{I}}(K,p)$}}{A simplicial complex $K$ represented as $\mathcal{I}$ and an integer $p$. }
{Either there exists a cycle $c \in C_p(K)$ such that $\mathcal{L}_p(c)=0$ and hence $H_p(K)\neq 0$, or the smallest eigenvalue $\lambda_p$ of $\mathcal{L}_p$ satisfies $\lambda_p \geq \epsilon$ with $\epsilon = \frac{1}{\poly(|\mathcal{I}|)}$}
{Output $\yes$ if the former and $\no$ if the latter. \\}

With this we can demonstrate that the homology problem for clique complexes (with a suitable promise) satisfying certain constraints lies in $\QMA$:

\begin{theorem}\label{thm:containment}
{\sc Promise-Homology} for clique complexes $Cl(G)$ where the graph $G$ is given as input is contained in $\QMA$ provided that either $G$ is clique dense, or that the boundary operator $\partial$ of $Cl(G)$ is local.
\end{theorem}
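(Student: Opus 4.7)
The strategy is to build a $\QMA$ verifier that exploits the Laplacian reformulation baked into the Promise-Homology statement. Since $H_p(Cl(G)) \neq 0$ iff the combinatorial Laplacian $\mathcal{L}_p = \partial_p^\dagger \partial_p + \partial_{p-1}\partial_{p-1}^\dagger$ has a nontrivial kernel, and the promise guarantees that otherwise $\lambda_{\min}(\mathcal{L}_p) \geq 1/\text{poly}(|V(G)|)$, the problem reduces to a Local-Hamiltonian-style question about $\mathcal{L}_p$ with an inverse-polynomial promise gap. Merlin sends a purported zero eigenvector of $\mathcal{L}_p$ on an appropriate qubit register; Arthur uses phase estimation to estimate its energy and accepts if it is below $\epsilon/2$ and rejects if it is above $\epsilon/2$, with soundness/completeness boosted in the standard way.

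The non-trivial task is to exhibit, given $G$ and $p$, an efficient encoding of $C_p(Cl(G))$ into $\text{poly}(|V(G)|)$ qubits on which $\mathcal{L}_p$ is efficiently measurable. In general $C_p(Cl(G))$ can have dimension $\binom{n}{p+1}$, exponential in $n=|V(G)|$, so some structure is required. The two hypotheses provide this in complementary ways.

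For the \emph{local $\partial$} branch, I would encode a $p$-simplex $[v_{i_0}\cdots v_{i_p}]$ as the $n$-qubit computational basis state whose indicator string marks $\{i_0,\dots,i_p\}$, so that $p$-chains become superpositions of Hamming-weight $p+1$ strings. Locality of $\partial$ in this encoding means that each of the polynomially many terms in $\mathcal{L}_p$ acts on $O(1)$ qubits and can be measured directly. The subspace of genuine clique chains is cut out by a polynomial-sized local penalty Hamiltonian, with one two-body term per non-edge of $G$ (forbidding basis states supported on non-cliques) together with a Hamming-weight penalty, all added with an appropriately scaled weight so as not to erase the promise gap. For the \emph{clique dense} branch, the hypothesis ensures that the number of simplices in the relevant dimensions is polynomial in the input; Arthur can then enumerate them classically from $G$, encode $C_{p-1},C_p,C_{p+1}$ using logarithmic-size index registers, and implement the polynomial-size matrices $\partial_p, \partial_{p+1}$ as row-computable sparse Hamiltonians simulable in $\BQP$, reducing once again to efficient energy estimation of $\mathcal{L}_p$.

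The main obstacle is the interplay between the chain-space restriction and the promise gap in the local-$\partial$ branch: the penalty Hamiltonian used to force the witness to lie in the genuine clique chain space must be strong enough to push all spurious low-energy states from outside $C_p(Cl(G))$ above the acceptance threshold, yet it must be handled carefully enough that the $1/\text{poly}$ gap of the original $\mathcal{L}_p$ on the valid subspace is preserved. This is a standard application of the projection and perturbation lemmas from Hamiltonian complexity. Once the encoding and penalty are in place, completeness follows by exhibiting the promised zero eigenvector of $\mathcal{L}_p$ as Merlin's witness, and soundness follows from combining the promised $1/\text{poly}$ spectral gap with Kitaev-style phase estimation error bounds, placing the problem in $\QMA$.
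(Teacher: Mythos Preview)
Your overall strategy---Merlin sends a candidate harmonic chain, Arthur runs phase estimation on the Laplacian (equivalently on $\partial\pm\partial^\dagger$) and uses the promise gap to decide---is exactly the paper's route, which it inherits by citing Theorem~4 and Proposition~1 of Cade--Crichigno. The paper's own contribution is just to check that, under either hypothesis, an efficient description of the (always row-sparse) boundary operator of $Cl(G)$ can be extracted from $G$, so that the chain-complex containment result applies.

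There is one genuine gap in your proposal: your treatment of the \emph{clique-dense} branch inverts the meaning of the hypothesis. ``Clique dense'' in this paper (following \cite{gyurik:2020}) does \emph{not} mean that the number of relevant simplices is polynomial; it means the opposite---the number of $p$-cliques is a $1/\mathrm{poly}(n)$ fraction of $\binom{n}{p+1}$, hence typically exponential. So your plan to ``enumerate them classically from $G$, encode $C_{p-1},C_p,C_{p+1}$ using logarithmic-size index registers, and implement the polynomial-size matrices $\partial_p,\partial_{p+1}$'' does not go through. The paper instead uses the bit-string encoding throughout and appeals to \emph{sparse} Hamiltonian simulation: $\partial$ has at most $n$ nonzeros per row for any clique complex, and the clique-dense assumption is what is invoked (via \cite{gyurik:2020}) to guarantee efficient sparse oracle access to those nonzeros from the description of $G$.

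For the local-$\partial$ branch your plan is correct but heavier than necessary. When $\partial$ is local (equivalently $\overline{G}$ has bounded degree), the projected boundary operator $\hat\partial=\sum_i P_i a_i$ is already an $O(1)$-local operator on the full $n$-qubit Hilbert space, so one can invoke the local-Hamiltonian version of the Cade--Crichigno result directly. The restriction to the valid clique subspace does not require a perturbative penalty folded into the spectral problem; the paper's approach sidesteps your ``main obstacle'' entirely by working with the already-local $\hat\partial$ and handling subspace membership separately. Your projection/perturbation argument would also work, but it is doing more than the statement needs.
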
 

\begin{proof}
This follows from \cite[Theorem 4]{Cade:2021jhc} and \cite[Proposition 1]{Cade:2021jhc}, where the statement is shown for $k$-local or sparse boundary operators acting on chain complexes, with the operator $\partial$ given as input.\footnote{The proof in \cite{Cade:2021jhc} uses quantum phase estimation on $(\partial \pm \partial^\dagger)$ with respect to the witness state to check whether it is a state with low eigenvalue to within polynomial accuracy in polynomial time.} 
The $\partial$ operator for a clique complex is always sparse \cite{gyurik:2020}, and restricting the set of inputs cannot take the problem outside of $\QMA$.
So provided it is possible to efficiently obtain a description of $\partial$ from the graph $G$ the theorem holds.

We first consider the $k$-local case.
The operator $\partial$ can be expressed as a sum over the vertices of $\overline{G}$ where at each vertex we construct an operator acting only on adjacent vertices (see \cref{sec:connection to SUSY} for details of this description).
We can construct $\overline{G}$ efficiently given access to $G$.
Moreover, if $\overline{G}$ has constant degree (which it must do if $\partial$ is local) then we can construct the $\partial$ operator efficiently from a description of $\overline{G}$. If the boundary operator resulting from $G$ is not local, but $G$ is clique dense then sparse access to $\partial$ can be obtained efficiently from a description of $G$ via the prescription given in \cite{gyurik:2020}.

\end{proof}

Note that it is possible to modify the reductions from the previous section so that they result in simplicial complexes with local boundary operators (see \cref{app:locality}) and therefore satisfy the second requirement in \cref{thm:containment}. Taken with the results in the previous sections, this suggests that the problem of determining whether the homology group of a clique complex is non-trivial may lie between $\QMA_1$ and $\QMA$. 
However care must be taken -- we have not shown that the promise version of the homology problem is $\QMA_1$-hard, or that the decision version of the problem is in $\QMA$.
So while we believe that a result along these lines is likely to hold, we cannot claim it yet.
In \cref{sec:outlook} we discuss this open question in more detail. 

\subsection{The Betti number problem}

The problem of interest in topological data analysis is typically calculating how many holes a clique complex has at dimension $l$ - or determining the $l^{\textrm{th}}$ \emph{Betti number}. 
An immediate consequence of our $\QMA_{1}$-hardness  of the homology problem is the following: 
\begin{theorem}[]\label{BettiExact} 
 Given a graph $G$ on $n$ vertices the problem of computing the $l^{\textrm{th}}$ Betti number of its clique complex is $\#\P$-hard for $l=\Omega(n)$.
\end{theorem}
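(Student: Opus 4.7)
The plan is to leverage the parsimonious nature of the reduction built in Theorem~\ref{thmQMA1} together with the known $\#\P$-hardness of counting accepting witnesses of a $\QMA_1$ verifier. First I would unpack the reduction to observe that it is parsimonious in the strong sense: by construction in Section~\ref{sec:QMA1hard}, a basis of $H_{N-1}(Cl(G))$ is in bijection with a basis for $\ker(\Hbravyi)$, which in turn is a basis for the accepting witness subspace of the $\QMA_1$ verifier $V_x$ on $N$ qubits. Hence $\beta_{N-1}(Cl(G))$ equals the dimension of the accepting witness subspace.

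Next I would invoke the fact that computing the dimension of the accepting witness subspace of a $\QMA_1$ verifier is $\#\BQP$-hard, together with Brown's result \cite{Brown_2011} that $\#\BQP$ coincides with $\#\P$ under weakly parsimonious reductions. Composing these with the parsimonious reduction of Theorem~\ref{thmQMA1} immediately gives $\#\P$-hardness of computing $\beta_{N-1}(Cl(G))$ from $G$.

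The remaining point is to ensure $l = N-1$ satisfies $l = \Omega(|V(G)|)$; since the gadgets of Section~\ref{sec:gadgets} can introduce $\poly(N)$ vertices in total, $N-1$ need not automatically be $\Omega(|V(G)|)$. To fix this I would pad via a graph join: let $S$ be the empty graph on $3$ vertices, so that $Cl(S)$ consists of $3$ isolated points with $\tilde H_0(Cl(S)) \cong \mathbb{C}^2$, and set $G' := G * S^{*M}$ with $M := |V(G)|$. Since the clique complex of a graph join is the simplicial join of the clique complexes, the join formula for reduced homology gives
\begin{equation}
\tilde H_{N-1+M}(Cl(G')) \;\cong\; \tilde H_{N-1}(Cl(G)) \otimes (\mathbb{C}^2)^{\otimes M},
\end{equation}
so $\beta_{N-1+M}(Cl(G')) = 2^M\, \beta_{N-1}(Cl(G))$. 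With $|V(G')| = |V(G)| + 3M = \Theta(|V(G)|)$ and $l := N-1+M = \Omega(|V(G')|)$, computing $\beta_l(Cl(G'))$ and dividing by $2^M$ recovers the original Betti number, yielding $\#\P$-hardness at $l = \Omega(n)$.

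The main obstacle I anticipate is verifying that the reduction of Theorem~\ref{thmQMA1} is genuinely parsimonious (matching a \emph{basis} of the homology group with a \emph{basis} of $\ker(\Hbravyi)$, not merely preserving whether the kernel is trivial); this is essentially what the algebraic and Mathematica checks in Section~\ref{sec:gadgets} and the appendix are already designed to confirm. The Künneth-type join bookkeeping and the invocation of Brown's theorem are both routine by comparison.
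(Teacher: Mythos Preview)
Your proposal is correct and matches the paper's approach: invoke the parsimonious reduction from Theorem~\ref{thmQMA1} together with the $\#\P$-hardness of counting satisfying assignments to quantum $k$-$\SAT$ (the paper cites this directly via \cite{Cade:2021jhc,Brown_2011} rather than routing explicitly through $\#\BQP$, but the content is the same). Your padding step via graph joins to enforce $l = \Omega(|V(G)|)$ is a detail the paper's short proof does not spell out; it is a clean and natural fix, since your $S$ is exactly the complement of a triangle and joining with it corresponds in the independence-complex picture to tensoring on an extra identity qubit as in Section~\ref{sec:gadgets}.
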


\begin{proof} 
To prove \cref{thmQMA1} we demonstrated how to construct a graph $G$ such that the Betti number $\beta_n$ is equal to the number of satisfying assignments to a quantum $4$-$\SAT$ instance.
The problem of counting satsifying assignments for quantum $k$-$\SAT$ is known to be $\#\P$-hard \cite{Cade:2021jhc,Brown_2011}. 

\end{proof}

\section{Simplicial Complexes as SUSY many-body systems}
\label{sec:connection to SUSY}

We have shown that the complexity of the homology problem, a canonical problem in classical topology, is captured by quantum complexity classes. The technical reduction from \cref{sec:QMA1hard}, however, does not explain {\it why} this is the case at a deeper level. As already mentioned in the Introduction, the reason underlying this was suggested in \cite{Crichigno:2020vue,Cade:2021jhc}. Indeed, as pointed out in these references, there is a deep reason why one should expect the  problem of homology to be quantum mechanical,  which dates back to the pioneering work of Witten \cite{Witten:1982im}. In this paper Witten considered quantum mechanical systems with supersymmetry, a special class of systems with a symmetry relating bosonic states to fermionic states. It was then shown that states with $E=0$ in such systems are in 1-to-1 correspondence with elements of homology. Although Witten focused on {\it continuous} quantum mechanical systems, this relation also holds for discrete quantum mechanical systems of interest here.  Here we spell out this relation in detail for the case of general simplicial complexes and discuss the special case of the clique/independence complex.

\subsection{SUSY quantum mechanics}

The Hilbert space $\cH$ of {\it any} quantum mechanical system can be decomposed as $\cH=\cH_B\oplus \cH_F$, where $\cH_B$ is the space of bosonic states $\ket{\psi_B}$ and  $\cH_F$ the space of fermionic states $\ket{\psi_F}$. These are formally distinguished by the action of the ``parity operator'' $\cP$, acting as
\equ{
\cP\ket{\psi} =\begin{cases} +\ket{\psi} & \text{if $\ket{\psi}\in \cH_B$} \\ -\ket{\psi} & \text{if $\ket{\psi}\in \cH_F$}  \end{cases}
}
Bosonic and fermionic states correspond to states which are symmetric or antisymmetric  under the exchange of particles, respectively.

The parity operator can also be used to define  the notion of bosonic and fermionic {\it operators}. By definition, an operator $\cO_B$ is bosonic if it commutes with the parity operator,  $[\cP,\cO_B]=0$. An operator $\cO_F$ is fermionic if it anticommutes with the parity operator, $\{\cP,\cO_F\}=0$.

By definition, a quantum mechanical system is supersymmetric if there exists an operator $\cQ$ sending bosonic states into fermionic states and vice versa, and squaring to zero:\footnote{Technically, this is called an $\cN=2$ SUSY quantum mechanics (see \cite{Witten:1982im} and e.g. \cite{Cade:2021jhc} for more details). It is one of the simplest examples of supersymmetry, which can also be defined in the context of quantum field theories and (quantum) gravity. A deeper way to understand supersymmetry in these more general contexts is by formulating it as a fermionic symmetry of spacetime. }
\equ{
\cQ: (\cH_B,\cH_F)\to (\cH_F,\cH_B)\qquad \qquad \cQ^2=0\,.
}
The operator $\cQ$ is known as the ``supercharge'' of the system. By consistency of the definitions above the supercharge must anticommute with the parity operator and is thus a fermionic operator. 
\begin{figure}
\begin{center}
\includegraphics[scale=1.6]{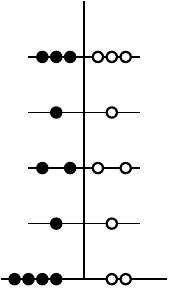}
\put(-130,50){$E>0$}
\put(-130,0){$E=0$}
\caption{The spectrum of supersymmetric quantum mechanics. Black dots represent bosonic states and white ones fermionic states. All states with $E>0$ come in boson/fermion pairs.  Supersymmetric ground states ($E=0$), if there are any, are not necessarily paired.}
\label{fig:spectrum}
\end{center}
\end{figure}
By definition, the Hamiltonian of the system is given by 
\equ{
H:=\{\cQ,\cQ^\dagger\} = \cQ \cQ^\dagger +\cQ^\dagger \cQ\,.
}
Note that $H  $ sends bosonic states into bosonic states and fermionic states into fermionic states and is therefore a bosonic operator.  Note that due to the nilpotency condition
\begin{equation} \label{eq:nilpotency}
[\cQ,H]=[\cQ^\dagger,H]=0
\end{equation}
and thus $\cQ$ and $\cQ^\dagger$ are symmetries of the system. They are not traditional symmetries since they are generated by {\it fermionic} operators, rather than the more standard symmetries such as translational or rotational symmetry which are generated by bosonic operators.

A few remarkable properties of supersymmetric Hamiltonians follow directly from the definitions above. The first, is that the Hamiltonian is positive semidefinite:
\equ{
\bra{\psi} H \ket{\psi} = \bra{\psi} ( \cQ \cQ^\dagger +\cQ^\dagger \cQ) \ket{\psi} = \abs{\cQ\ket{\psi}}^2\abs{\cQ^\dagger\ket{\psi}}^2 \geq 0\,.
}
Thus, the lowest possible energy is $E=0$. Given a particular supersymmetric system, an important question is whether the ground state of the system has $E=0$ or $E>0$. If the ground state energy is positive, the system is said to break supersymmetry ``spontaneously.'' It is in general a difficult question to determine whether a SUSY systems preserves or breaks supersymmetry. Note that  a state $\ket{\Omega}$ has $E=0$ if and if
\equ{
E=0: \qquad \cQ\ket{\Omega}=\cQ^\dagger \ket{\Omega}=0\,.
}
Such states, if they exist  are called SUSY ground states. 

Consider states with $E>0$. Then, it is easy to see that for every bosonic state with energy $E$, there is a corresponding fermionic state with the same energy. Indeed, let $H\ket{\psi_B}=E \ket{\psi_B}$, then clearly one of $Q\ket{\psi_B}$ and $Q^\dagger\ket{\psi_B}$ is non-zero, and by \cref{eq:nilpotency} will have the same energy, $E$.

\subsection{Simplicial complexes as SUSY many-body systems}

Consider a simplicial complex $K$ with $n+1$ vertices. Recall an oriented $p$-simplex  is an ordered $(p+1)$-subset of the vertices, $\sigma^{(p)}=[x_0,x_1, \cdots, x_p]$. One may want to denote such a simplex as the $n$-dimensional array $(z_0,z_1,\cdots,z_n)$, where each $z_i\in \{0,1\}$ indicates if the 0-simplex $[x_i]$ is not present or present in $\sigma^{(p)}$, respectively, and $\sum_{i=0}^n z_i=p+1$. However, this representation does not carry information about the orientation of the original simplex. To remedy this we instead  identify simplices  with  states in the Fock space of $n+1$ fermions, as
\equ{\label{mapsigmaf}
\sigma^{(p)} \leftrightarrow  (a_0^\dagger)^{z_0}(a_1^\dagger)^{z_1}\cdots (a_n^\dagger)^{z_n} \ket{0}
}
where  $\ket{0}$ is the state with no fermions and the operators $a_i$ and their conjugates $a_i^\dagger$ are the standard set of operators satisfying the anticommutation relations
\equ{
\{a_i,a_j^{\dagger}\}=\delta_{ij}\,, \qquad \qquad \{a_i,a_j\}=\{a_i^{\dagger},a_j^{\dagger}\}=0\,.
}
Permuting the vertices the LHS of \eqref{mapsigmaf} is correctly accounted for by the exchange of the anticommuting operators on the RHS.  The fermion number operator is given by $F=\sum_{i=0}^n a_i^\dagger a_i$. The annihilation operators $a_i$ decrease the fermion number by 1 and the creation operators $a_i^\dagger$ increase it by 1. Note that the states in \eqref{mapsigmaf} have $F=\sum_{i=0}^n z_i=p+1$. A $p$-simplex is then identified with a state with fermion number $F=p+1$.

The simplicial boundary operator \eqref{defdel} is represented very easily in fermionic Fock space:
\equ{
\partial=\sum_{i=0}^n a_i\,.
}
The signs appearing in \eqref{defdel} are automatically accounted for by the anticommuting nature of the fermionic operators as well as $\partial^2=\sum_{i<j} \{a_i,a_j\}=0$. It would be tempting to identify $\partial^\dagger$ with $\sum_{i=0}^n a_i^\dagger$ but this would not be generally correct, as some $a_i^\dagger$  may create simplex not in $K$. We thus define 
\equ{
\hat \partial = P \partial P\,,\qquad  \qquad \hat \partial^\dagger = P \partial^\dagger P\,,
}
where $P$ is the projector onto elements in $K$, which we assume commutes with the fermion number operator $[F,P]=0$ so that $\hat \partial: C_p\to C_{p-1}$ and  $(\hat \partial)^2=(\hat \partial^\dagger)^2=0$.\footnote{The projectors $P$ are strictly speaking not required in $\partial$ as long as the operator is acting on the space of allowed simplices, as the boundary of a simplex is always contained in the simplicial complex. However, we include these so that $\partial$ and $\partial^\dagger$ are manifestly Hermitian conjugates of each other. }   The supercharges are then identified as\footnote{It is customary to define the supercharge as having fermion number $+11$ and its Hermitian conjugate $-1$, consistent with the identification below,  but the choice $Q=\partial$ would be equally valid.}
\begin{equation}\label{eq:supercharges}
Q= P \partial^\dagger P\,,\qquad \qquad Q^\dagger= P \partial P\,,
\end{equation}
and the SUSY Hamiltonian is  
\equ{
H=\sum_{i,j} P a_i^\dagger P \ a_j P+ P a_i P a_j^\dagger P\,.
}
Note that the locality of $H$ depends on the projectors $P$. Although these are diagonal in the computational basis they could generically be non-local in the fermionic modes.  Let $\cH_{E=0}$ be the SUSY groundspace of the system. Since the Hilbert space is graded by the fermion number one has 
\equ{
\cH_{E=0}= \bigoplus_{p} \cH_{E=0}^{(p)}\,.
}
where $\cH_{E=0}^{(p)}$ are the subspaces with fixed fermion number $F=p$.  Each subspace is in 1-to-1 correspondence with the reduced homology groups
\equ{
\cH_{E=0}^{(p)}\cong \tilde H_{p-1}(K,\Bbb C)\,,
}
where we have set $\Bbb F=\Bbb C$ as general elements in the Hilbert space will have complex coefficients. We illustrate this in the case of the independence complex next. 

\subsection{Independence homology and the fermion hard-core model}

As discussed, the independence complex is a particular class of simplicial complexes specified by a graph $G$. The corresponding many-body systems models are known as the fermion hard-core model introduced in \cite{Fendley_2003}. The Hilbert space is given by all configurations of fermions on a graph $G$ subject to the condition that no two fermions can occupy neighboring vertices (the hard-core condition). The Hilbert space therefore corresponds to the independent sets of $G$. The supercharges are given by
\equ{
Q= \sum_i  P_i  a_i^\dagger\,,\qquad \qquad Q^\dagger= \sum_i  \bar a_i P_i\,,\qquad \qquad P_i = \prod_{j|(i,j)\in E}(1-\hat n_j)\,,
}
where $\hat n_i=a_i^\dagger a_i$ and the $P_i$ ensure that the operators act within the space of independent sets. These are a special case of the supercharges in \cref{eq:supercharges}, where one uses the fact that for this class of models we have $a_iP =P_ia_i$.

By some simple manipulations the Hamiltonian can be brought into the form
\equ{
H=\sum_{(i,j)\in E}P_ia_i^\dagger a_j P_j +\sum_{i\in V} P_i\,.
}
Since the Hilbert space of the theory is spanned by independent sets of $G$ and the supercharge $Q^\dagger$ acts on this space as the boundary operator $\partial$, the space of supersymmetric ground states is isomorphic to the homology of the independence complex, as observed in \cite{Jonsson:2010,Huijse:2008}.  This relation has been used in a number of beautiful works both in the condensed matter physics literature \cite{beccaria2005exact,fendley2003lattice,fendley2003lattice,fendley2005exact,huijse2008superfrustration,huijse2008charge} as well in the math literature \cite{adamaszek2013special,huijse2010supersymmetry,jonsson2009hard,jonsson2006hard,baxter2011hard,bousquet2008independence,engstrom2009upper,fendley2004hard,csorba2009subdivision}, in many cases leading to exact determination of the SUSY groundspace for simple graphs $G$, such as the square or other regular lattices. The results of \cref{sec:reduction}, however, indicate that there can be no efficient algorithm to determine the groundspace of the fermion hard-core model on a generic graph $G$ (even when restricted to graphs of constant degree -- see \cref{app:locality}.

Let us illustrate this explicitly for the graphs constructed in the reduction of \cref{sec:reduction}. Consider a single triangle $G_1$. Due to the hard-core condition, at most one fermion can occupy the triangle and the Hilbert space is spanned by the four states:
\equ{
\ket{b}=\ket{000}\,,\qquad \ket{f_1}=\ket{100}\,, \qquad \ket{f_2}=\ket{010}\,, \qquad \ket{f_3}=\ket{001}\,,
}
where $\ket{b}$ is a bosonic state with no fermions and the $\ket{f_i}$ are three fermionic states, with a fermion at site $i=0,1,2$. The supercharge is given by
\equ{
Q_1= \sum_{i=0}^2 a_i^\dagger P_i \,.
}
Note that the vacuum state $\ket{b}$ cannot be a supersymmetric ground state since acting with the supercharge gives $Q_1\ket{b}=\ket{f_0}+\ket{f_1}+\ket{f_2}=:\ket{f}\neq 0$ and thus $(\ket{b},\ket{f})$ form a supersymmetric doublet. Indeed, one can easily calculate that the energy of the multiplet is $E=1$. 
This leaves two fermionic states unpaired, which must therefore have $E=0$ and span a two-dimensional SUSY groundspace. 
These correspond to the two non-trivial homology classes outlined in \cref{sec:gadgets}.

\begin{figure}
\begin{center}
\includegraphics[scale=1]{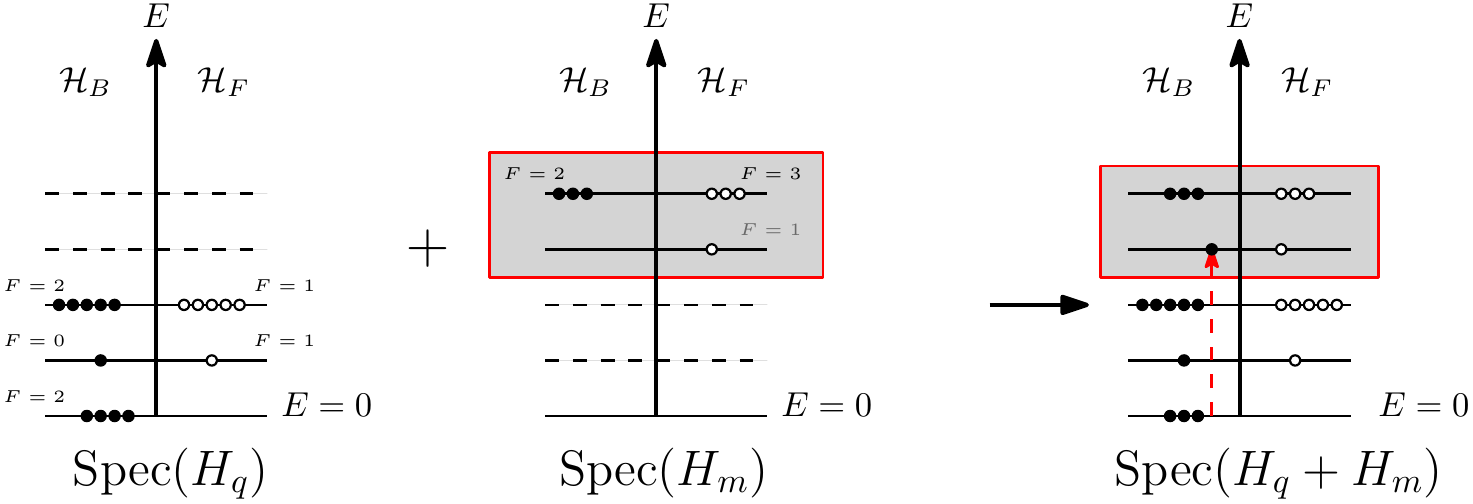}
\caption{Lifting a single SUSY ground state by introducing a mediator vertex connecting to two triangles as in \cref{eq:H}. $H_q$ is a supersymmetric Hamiltonian, so all non-zero eigenavlues are paired up. The Hamiltonian $H_m$ is not supersymmetric, and includes unpaired non-zero eigenvalues. When the two Hamiltonians are added together an unpaired eigenvalue from $H_m$ can pair up with a zero eigenvalue from $H_q$, lifting one state from the SUSY groundspace, as desired. }
\label{fig:lifting}
\end{center}
\end{figure}
Consider now two triangles $G_2$ and supercharge $Q_2= \sum_{i=0}^5 a_i^\dagger P_i$. The system has a total of $1+6+9$ states with $F=0,1,2$, respectively and two SUSY ground states at $F=2$. Now, consider adding a mediator vertex $m$ with a staggering parameter $\lambda_m$. The introduction of the new vertex incorporates $9=1+4+4$ new states with $F=1,2,3$, respectively,  into the Hilbert space.  The supercharge is given by
\equ{
Q= Q_2+ Q_m\,,
}
where $Q_m=a_m^\dagger P_m$
and the full Hamiltonian is now given by:
\begin{equation}\label{eq:H}
H = H_q + H_m
\end{equation}
where
\equ{H_q = \{Q_2,Q_2^\dagger\} \,, \qquad H_{m}= \{Q_2,Q_m^\dagger\} + \{Q_m,Q_2^\dagger\} + \{Q_m,Q_m^\dagger\} 
}
The first term in \cref{eq:H} is the (manifestly supersymmetric) Hamiltonian corresponding to the triangles with no mediators present.
Therefore, as outlined above, all of its eigenvalues are either zero, or come in supersymmetric pairs.
The additional term $H_m$ by itself does not correspond to a supersymmetric Hamiltonian. 
Therefore it can have unpaired non-zero eigenvalues. 
When adding the two Hamiltonians together the resultant Hamiltonian is non-zero. 
In order to lift one of the zero eigenvalues of $H_q$ to a non-zero eigenvalue it needs to pair up, and it can do this because $H_m$ has unpaired non-zero eigenvalues (see \cref{fig:lifting} for an illustration).

Now consider $n$ disconnected triangles $G_n$. The full system is in a SUSY ground state only if each triangle is in a SUSY ground state and thus the only SUSY ground states are found at $F=n$ and there are $2^n$ of them, consistent with 
\equ{
\tilde H_{n-1}(\Sigma_n)= (\Bbb C^2)^{\otimes n}\,.
}
In order to implement the projectors needed for $\Hbravyi$ we introduce mediator vertices to mediate interactions between the triangles.
Certain states then acquire a non-zero energy and are thus pushed out of the SUSY groundspace. 

Note we do not claim that the quantum $k$-$\SAT$ Hamiltonian $H=\sum_a \Pi_a$ can be interpreted as a supersymemtric Hamiltonian, only that there is a supersymmetric Hamiltonian with the same dimensional ground space.
The excited spectrum will generically be different.

\section{Outlook}
\label{sec:outlook} 

It may appear surprising that the problem of simplicial homology, a classical problem in topology, could be characterised in terms of quantum complexity classes. 
Indeed, this may go some way to explaining why the complexity of this fundamental problem in computational topology has remained an open question for so long. 
However, as outlined in \cref{sec:connection to SUSY} this problem can be seen to be quantum mechanical when considering the perspective of supersymmetric quantum many-body systems. 

\

We argue that the results reported above, and the more general relationship between simplicial homology problem and quantum mechanics, suggests that computational homology is a natural place to look for provable quantum speed-ups with potentially myriad applications. There is work to do to further support this claim. 
Closing the gap between the $\QMA_1$-hardness result we have shown for \homologyProb\ and containment in $\QMA$ for \PromisehomologyProb\ is an obvious area for further research.
As already stated, it is unlikely that a decision problem will be complete for $\QMA_1$, so a more promising direction would be demonstrating that the promise problem is hard for $\QMA_1$ \cite{Watrous:2008}.
While it may be possible to show that the hard-instances we constructed in the proof of \cref{thmQMA1} inherit the promise of $\QMA_1$ in the form of a promise gap of the corresponding combinatorial Laplacian, it is not obvious that this will be the case.
This is because in our reduction we have not encoded the $\QMA_1$ circuits directly into the combinatorial Laplacian, but instead into the homology of the boundary operator. 
While there is a 1-to-1 correspondence between elements in the kernel of the Laplacian and elements of homology, they are not exactly the same; the latter have to be harmonic representatives of the homology class.\footnote{This means that the element of homology of $\partial$ should also be an element of cohomology of $\partial^\dagger$, which fixes the representative.  } As a consequence, the correspondence does not straightforwardly imply the necessary promise gap.\footnote{We don't rule out the possibility of proving that the construction presented here does inherit the promise gap from quantum $k$-$\SAT$. It may be possible using additional techniques from homological algebra (such as spectral sequences). However the lack of a direct map between the combinatorial Laplacian and the Hamiltonian from quantum $k$-$\SAT$ does raise the possibility that the construction outlined here does not satisfy the promise gap.}
As such, the next step will likely require new reductions.
However, these new reductions will almost certainly use the techniques we have introduced here.

\

While the homology problem is $\QMA_1$-hard, and thus intractable even for quantum computers, this result suggests the possibility of quantum advantage in certain {\it approximations} to  homology. Indeed, one of the motivations for this work is the problem of estimating normalized ``Betti'' numbers for clique complexes introduced  \cite{lloyd2016quantum} where it was also shown that, under certain assumptions, the problem is contained in $\BQP$. It was shown that the version of this problem for {\it general} chain complexes is $\DQC$-hard \cite{Cade:2021jhc} and thus cannot be dequantized $\DQC \subseteq \BPP$. Whether this remains the case for the case of clique complexes remains as an open problem. Although our results here do not directly address this, we hope that some of the techniques developed here could shed light on this question. It is worth noting that it is not clear at the moment if certain assumptions made in \cite{lloyd2016quantum} are realistic and further work is needed to establish practical utility of this algorithm. See \cite{gyurik:2020,akhalwaya2022representation} for relevant work. For recent experimental implementations of this quantum algorithm to study Betti numbers of the CMB see \cite{IBMQ}.

\

Taking a broader perspective, the main message we hope to convey with this work is that the intersection of topology, many body physics and complexity theory is a rich arena in which  problems with potential quantum advantage can be identified, and concrete tools from Hamiltonian complexity can be harnessed to study this.  The results reported here illustrate the confluence of ideas and technical tools necessary to make further progress and we believe there is much work ahead. 

\

Even more broadly, we suggest that a promising strategy for identifying new computational problems with {\it provable} quantum speedups and wider applications, outside of chemistry or material science, is to focus on many-body Hamiltonians with special ``mathematical structures.'' Supersymmetric Hamiltonians, in particular, reveal that seemingly classical problems in computational topology are intrinsically quantum mechanical. It is possible that Hamiltonians with other rich mathematical structures may lead to similar insights in other areas of applied mathematics.

\section*{Acknowledgements}
We are grateful to Chris Cade, Toby Cubitt, Sev Gharibian, David Gosset, Casper Gyurik, Johnny Nicholson, Stephen Piddock, Anupam Prakash, Kareljan Schoutens,  and David Tennyson for  useful discussions and especially Irina Kostitsyna for her geometric insights and Jeff Carlson and Eric Samperton for enlightening discussions on homology. MC is supported by the European Union Horizon 2020 Research Council grant 724659 Massive-Cosmo ERC-2016-COG, by the Simons Foundation award ID 555326 under the Simons Foundation Origins of the Universe initiative, Cosmology Beyond Einstein’s Theory, and the STFC grant ST/T000791/1. MC thanks QC Ware Corp., Palo Alto, for hospitality while this work was completed. TK is supported by the Spanish Ministry of Science and Innovation through the ``Severo Ochoa Program for Centres of Excellence in R\&D'' (CEX2019-00904-S) and PID2020-113523GB-I0 and started the work while being supported by the ESPRC through the Centre for Doctoral Training in Delivering Quantum Technologies (grant EP/L015242/1).


\appendix 

\section{Quantum $4$-$\SAT$ technical details} 
\label{sec:app-projectors}

\subsection{Mapping from quantum $4$-$\SAT$ to a local Hamiltonian}
The construction in \cite{bravyi2011efficient} maps a quantum circuit, $U=U_L\cdots U_2U_1$, $U_j \in \mathcal{G}$ operating on $N$ qubits into the ground state of a local Hamiltonian $H(U)$ acting on Hilbert space $\mathcal{H}_{\text{clock}} \otimes \mathcal{H}_{\text{comp}}$ where $\mathcal{H}_{\text{clock}}=(\mathbb{C}^4)^{\otimes L}$ and $\mathcal{H}_{\text{comp}}=(\mathbb{C}^2)^{\otimes N }$. 
The ground state of $H(U)$ is a `history state' \cite{Kitaev2002} of the form:
\equ{
\ket{\Omega(\psi_{wit})} = \frac{1}{T}\sum_{t=1}^L \ket{t}\ket{\psi_t}
}
where $\ket{\psi_t} = \Pi_{i=1}^t U_i \ket{\psi_{wit}}$ and ${\ket{t}}$ is an orthonormal basis for $\mathcal{H}_{\text{clock}}$.

A single clock particle in \cite{bravyi2011efficient} has four possible basis states - $\ket{u} = \ket{00}$ - unborn, $\ket{a_1} = \ket{01}$ - active phase one, $\ket{a_2} = \ket{10}$ - active phase two and $\ket{d} = \ket{11}$ - dead.  
As time `passes' each clock particle evolves from the unborn state, through the two active stages, and eventually to the dead state. 
`Legal' clock states are defined as those obeying certain constraints \cite{bravyi2011efficient}:
\begin{enumerate}
    \item The first clock particle is either active or dead
    \item The last clock particle is either unborn or active.
    \item There is at most one active clock particle 
    \item If clock particle $j$ is dead then all particles $k$ satisfying $1\leq k \leq j$ are also dead
\end{enumerate}

In \cite{bravyi2011efficient} it is shown that the ground space of the Hamiltonian $\Hclock = \sum_{j=1}^6 \Hclock^{(j)}$ is spanned by legal clock states, where:
\begin{equation}
\Hclock^{(1)} = \ket{u}\bra{u}_1
\end{equation}
\begin{equation}
\Hclock^{(2)} = \ket{d}\bra{d}_L
\end{equation}
\begin{equation}
\Hclock^3 = \sum_{1\leq j \leq k \leq L} \left(\ket{a_1}\bra{a_1} + \ket{a_2}\bra{a_2} \right)_j \otimes \left(\ket{a_1}\bra{a_1} + \ket{a_2}\bra{a_2} \right)_k
\end{equation}
\begin{equation}
\Hclock^{(4)} =\sum_{1\leq j \leq k \leq L} \left(\ket{a_1}\bra{a_1} + \ket{a_2}\bra{a_2}+ \ket{u}\bra{u} \right)_j \otimes \left(\ket{d}\bra{d}  \right)_k
\end{equation}
\begin{equation}
\Hclock^{(5)} = \sum_{1\leq j \leq k \leq L} \left(\ket{u}\bra{u} \right)_j \otimes \left( \ket{a_1}\bra{a_1} + \ket{a_2}\bra{a_2}+  \ket{d}\bra{d}  \right)_k
\end{equation}
\begin{equation}
\Hclock^{(6)} = \sum_{1 \leq j \leq L-1} \ket{d}\bra{d}_j \otimes \ket{u}\bra{u}_{j+1}
\end{equation}

The computational qubits can be divided into input data and witness registers, $R_{in}$ and $R_{wit}$ such that $|R_{in}| + |R_{wit}| = N$. The input qubits are initialised into the all zero state at time zero via:
\equ{
\Hin = \ket{a_1}\bra{a_1}_1 \otimes \left( \sum_{b \in R_{in}} \ket{1}\bra{1}_b \right)
}
The `propagation' Hamiltonian is defined as:
\begin{equation}
\Hprop = \sum_{t=1}^L  \left(\Hpropt + \Hpropt' \right)
\end{equation}
where:

\begin{equation}
\Hpropt = \frac{1}{2}\left[\left(\ket{a_1}\bra{a_1} + \ket{a_2}\bra{a_2} \right)\otimes \identity - \ket{a_2}\bra{a_1} \otimes U_t - \ket{a_1}\bra{a_2} \otimes U_t^\dagger\right]
\end{equation}
\begin{equation}
\Hpropt' = \frac{1}{2}\left(\ket{a_2,u}\bra{a_2,u} + \ket{d,a_1}\bra{d,a_1} -\ket{d,a_1}\bra{a_2,u} - \ket{a_2,u}\bra{d,a_1} \right)
\end{equation}

It can easily be checked that the zero energy ground space of:
\equ{
H(U) =\Hin + \Hclock + \Hprop }
is spanned by computational history states of the form:
\equ{
\ket{\Omega(\psi_{wit})} = \sum_{t=1}^L\left(  \ket{d}^{\otimes t-1} \ket{a_1}_t \ket{u}^{\otimes L-t} \otimes \ket{Q_{t-1}} +\ket{d}^{\otimes t-1} \ket{a_2}_t \ket{u}^{\otimes L-t} \otimes \ket{Q_{t}}    \right)
}

where $\ket{Q_0} = \ket{0}^{\otimes R_{in}} \otimes \ket{\psi_{wit}}$, and $\ket{Q_t} = U_t \ket{Q_{t-1}}$, $t \in [1,L]$. 
Therefore the ground states of $H(U)$ encode the computational history of the circuit $U = U_L...U_1$.

The final step is to penalise computational histories where the circuit rejects the witness, this is achieved using the projector:
\begin{equation}
\Hout = \ket{a_2}\bra{a_2}_L \otimes \left( \sum_{b \in R_{out}} \ket{1}\bra{1}_b \right)
\end{equation}
This gives energy to computational history states $\ket{\Omega(\psi_{wit})}$ if the circuit $U$ rejects the witness $\ket{\psi_{wit}}$ with non-zero probability. 

So the final Hamiltonian:
\equ{
H = \Hin + \Hclock + \Hprop + \Hout
}
has a zero energy ground state iff there exists a witness $\ket{\psi_{wit}}$ such that circuit $U$ accepts with probability one. 

\subsection{Projectors needed for quantum $4$-$\SAT$} 

Here we outline the projectors needed to implement the clock construction of \cite{bravyi2011efficient} using our universal gate set $\mathcal{G}$.

For each $\Hin$ only a single projector is needed.
\begin{equation}
\Hin = \ket{011}\bra{011}
\end{equation}
This is a three qubit projector of rank 1.
$\Hout$ is equivalent, up to permuting the qubits involved in the interaction:
\begin{equation}
\Hout = \ket{101}\bra{101}
\end{equation}

Implementing $\Hclock$ requires six projectors.
$\Hclock^{(1)}$ and $\Hclock^{(2)}$ are a two qubit projectors with rank 1:
\begin{equation}
\Hclock^{(1)} = \ket{00}\bra{00}
\end{equation}

\begin{equation}
\Hclock^{(2)} = \ket{11}\bra{11}
\end{equation}

The remaining terms in $\Hclock$ each act on 4 qubits. $\Hclock^{(3)}$ and $\Hclock^{(5)}$ each have rank 4, $\Hclock^{(4)}$ has rank 3, and $\Hclock^{(6)}$ ha rank 1:
\begin{equation}
\Hclock^{(3)} = \ket{0101}\bra{0101} + \ket{0110}\bra{0110} + \ket{1001}\bra{1001} +\ket{1010}\bra{1010}
\end{equation}

\begin{equation}
\Hclock^{(4)} = \ket{0111}\bra{0111} + \ket{1011}\bra{1011} + \ket{0011}\bra{0011}
\end{equation}

\begin{equation}
\Hclock^{(5)} = \ket{0001}\bra{0001} + \ket{0010}\bra{0010} + \ket{0011}\bra{0011}
\end{equation}

\begin{equation}
\Hclock^{(6)} = \ket{1100}\bra{1100}
\end{equation}

For $\Hprop$ we have to consider two projectors for $\Hpropt$ and one for $\Hpropt'$.
The projector for propagation under the pythagorean gate is a three qubit projector of rank 2 given by:

\begin{equation}
\begin{split}
\Hpropt(U_{\mathit Pyth.})=\frac{1}{2}[\left(\ket{01}\bra{01} + \ket{10}\bra{10} \right)\otimes \identity \\
- \frac{1}{5\sqrt{2}} \left( 3\ket{100}\bra{010}-4\ket{100}\bra{011}+4\ket{101}\bra{010}+3\ket{101}\bra{011}  \right) \\
- \frac{1}{5\sqrt{2}}\left( 3\ket{010}\bra{100} +4\ket{010}\bra{101} -4\ket{011}\bra{100} +3\ket{011}\bra{101} \right) ]
\end{split}
\end{equation}

While the projector for evolution under $\CNOT$ is a four qubit projector of rank 4:

\begin{equation}
\begin{split}
\Hpropt(CNOT)= \frac{1}{2}[\left(\ket{01}\bra{01} + \ket{10}\bra{10} \right)\otimes \identity \\
- \left( \ket{1000}\bra{0100}+\ket{1001}\bra{0101} +\ket{1010}\bra{0111} +\ket{1011}\bra{0110}   \right)\\
- \left( \ket{0100}\bra{1000}+ \ket{0101}\bra{1001}+ \ket{0110}\bra{1011}+ \ket{0111}\bra{1010}\right)]
\end{split}
\end{equation}

And finally $\Hpropt'$ is a four qubit projector with rank 1:
\begin{equation}
\Hpropt' = \frac{1}{2}\left(\ket{1000}\bra{1000} + \ket{1101}\bra{1101} -\ket{1101}\bra{1000} - \ket{1000}\bra{1101} \right)
\end{equation}

\section{Technical details of gadget constructions} \label{app:gadgets}

\subsection{Algebraic details of the 3-qubit constructions} \label{app:algebraic}
\subsubsection{Classical three qubit projector}

Consider the classical  3-qubit, rank-1, projector
\equ{
\Pi= \ket{000}\bra{000}\,.
}
The 2-dimensional void to fill is then
\equ{
\ket{000}= [x_{1}x_{2}x_{3}]+ [x_{1}a_{3}x_{2}]+ [a_{2}x_{1}x_{3}]+ [x_{1}a_{2}a_{3}]+ [a_{1}x_{3}x_{2}]+ [a_{1}x_{2}a_{3}]+ [a_{1}a_{2}x_{3}]+ [a_{2}a_{1}a_{3}]\,.
}
As alluded to in the main text, this can be visualised as a double pyramid with vertices $\{x_{1,2,3},{a_{1,2,3}}\}$ and whose 2-faces are given by the eight terms above. 
Lifting the state $\ket{000}$ then corresponds to filling in the three-dimensional space bounded by these faces, which can be achieved by adding a single mediator vertex $m$ connecting to all 6 vertices (see \cref{table:gadgets3qubit}), and including the eight 4-simplices:
\begin{equation} \label{eq:class-3-qub}
[x_{1}x_{2}x_{3}m], [x_{1}a_{3}x_{2}m], [a_{2}x_{1}x_{3}m], [x_{1}a_{2}a_{3}m], [a_{1}x_{3}x_{2}m], [a_{1}x_{2}a_{3}m], [a_{1}a_{2}x_{3}m], [a_{2}a_{1}a_{3}m]
\end{equation}
Indeed, one can easily algebraically check again that 
\equ{
\ket{000}=\partial' \ket{\Psi}
}
where $\ket{\Psi}=\ket{000}\wedge [m]$ is the the equal superposition of the eight 4-simplices given in \cref{eq:class-3-qub} and $\partial'=\partial_{x,a,b}+\partial_{m}$.

\subsubsection{The 3 qubit $\CNOT$ gadget}

Consider the entangled state 
\begin{equation}\label{101101s}
\begin{split}
\ket{101}-\ket{010}&= [a_1b_2a_3] + [a_1a_3x_2] + [a_1x_3b_2] + [a_1x_2x_3] + [a_2b_1b_3] + [b_1a_2x_3]+[a_2b_3x_1] \\ & +[a_2x_1x_3]+[a_3b_2x_1] + [a_3x_1x_2] +[b_1x_2b_3] + [b_1x_3x_2] + [x_1b_2x_3] + [x_1b_3x_2]
\end{split}
\end{equation}

As outlined in the main text, we will fill in this cycle using sets of four solid tetrahedra touching at a common mediator vertex.
The idea is to glue the base of each tetrahedron to a face in $\Sigma_3$. We have a total of 14 faces and so consider we 3 mediators $m_{1,2,3}$ and glue these to 12 different faces, as follows: 
\eqss{ \label{eq:simplices}
([b_1a_2x_3]+[b_1x_2x_3]+[a_1b_2x_3]+[a_1x_2x_3])\wedge [m_1]\\ 
([x_1a_2b_3]+[x_1a_2x_3]+[x_1b_2x_3]+[x_1b_2a_3])\wedge [m_2]\\ 
([x_1x_2b_3]+[b_1x_2b_3]+[x_1x_2a_3]+[a_1x_2a_3])\wedge [m_3]\,.
}

Finally, we introduce two additional mediators $m_{4}$ and $m_5$ to cover the  two missing faces:
\equ{\label{eq:last2simplices}
[b_1a_2b_3m_4]\,,\qquad  [a_1b_2a_3m_5]\,.
}
This covers all 14 faces of $\Sigma_3$. 
To fill in the spaces between the tetrahedra we simply connect all the mediators by edges. This defines additional tetrahedra which fill the void. 

One can algebraically check that 
\equ{
\ket{101}-\ket{010} = \partial \ket{\Psi} \,,
}
where $\ket{\Psi}$ is the equal weight superposition of the 14 3-simplices presented in \cref{eq:simplices} and \cref{eq:last2simplices}, along with the following 3-simplices involving 2 mediators:

\begin{equation}
    \begin{gathered}
\left([a_2  x_3]+ [ x_3 b_2] \right) \wedge [m_1m_2] \\
      \left([b_3  x_1]+[x_1  a_3] \right) \wedge [m_2m_3]  \\
   \left([ x_2 b_1]+[a_1  x_2] \right) \wedge [m_1m_3]  \\
  \left([b_1  a_2  m_1] +[a_2  b_3  m_2]+   [b_3  b_1  m_3]\right)  \wedge [m_4]         \\
   \left([ b_2 a_1  m_1]+  [ a_3  b_2 m_2]+  [a_1  a_3  m_3]\right) \wedge[m_5] 
 \end{gathered}
 \end{equation}

the following 3-simplices involving 3 mediators:
\begin{equation}
\begin{gathered}
 \left([a_2  m_2  m_1]+ 
 [b_3  m_3  m_2]+ 
     [b_1  m_1  m_3]\right) \wedge [m_4]\\
       \left( [a_3  m_2  m_3]+
         [b_2  m_1  m_2]+
            [a_1  m_3  m_1]\right) \wedge [m_5]
            \end{gathered}
           \end{equation}

and finally, two 3-simplices which just involve the mediators:
\begin{equation}
[m_2  m_1  m_3  m_5]+
           [m_1  m_2  m_3  m_4]
\end{equation}

Checking that the state given above satisfies $\ket{101}-\ket{010} = \partial \Psi$ is a straightforward (if rather tedious) calculation.
However actually finding the state for this gadget is more involved.
In the case of the 2 qubit, and classical 3 qubit, gadgets, we could find $\Phi$ satisfying $c = \partial(\Phi)$ simply by looking at the simplicial complex - this was possible because the mediators introduced in those gadgets didn't introduce any higher dimensional simplices into the complex, and the filled in cycle was easy to visualise.
As the cycle, $c$, to fill in increases in size and complexity this becomes increasingly difficult, and instead we solve a set of equations to demonstrate both that there exists a $\Phi$ satisfying $c = \partial(\Phi)$, and that $c$ is the only cycle that can be written in this way. 
This is the same set of equations we use to demonstrate that all the gadgets we use fill in only the desired state, and not any extra ones.
Details of the calculations are given in \cref{app:mathematica}.

\subsection{Demonstrating that the gadgets fill in exactly one state} \label{app:mathematica}

As outlined in \cref{sec:gadgets} the homology group of the clique complex of a graph consisting of $n$ disconnected triangles is isomorphic to $\mathbb{C}^{\otimes n}$.
That means that given a basis $v^{(n)}_i$ for $\mathbb{C}^{\otimes n}$ we will find that for all $i$: $\partial_n v^{(n)}_i = 0$ and there does not exist any state $c^{(n+1)}$ satisfying:
\begin{equation} \label{eq:appMath}
\partial_{n+1} c^{(n+1)} = v^{(n)}_i
\end{equation}

Analytically, `filling in cycles' in the clique complex corresponds to breaking the second of these conditions.
The states we are lifting are still cycles, so the first condition still holds.
But by `filling in the cycle' we are constructing a state which the cycle is a boundary of, and hence there now does exist a state $c^{(n+1)}$ satisfying \cref{eq:appMath}.

Therefore, to check that for each of the gadgets we have constructed we lift the desired state, and only that state, we need to check that the only state for which \cref{eq:appMath} has a solution is the state that we are intending to lift. 

So, in Mathematica for a gadget acting on $n$ qubits we solve the equation:
\begin{equation}\label{eq:mathematica}
 \sum_{i,j} \left( k_i \ket{e^{(n)}_i} - b_j \partial_{n+1} \ket{e^{(n+1)}_j}\right) = 0
\end{equation}
for $k_i$ where $\ket{e^{(x)}_j}$ denotes the $j^{th}$ independent set of size $x$ in the graph.
For all the gadgets presented in the paper, we find that (up to an irrelevant global phase) the only state satisfying \cref{eq:mathematica} is the intended state.

The calculations for all the gadgets are presented in the attached Mathematica file \cite{mathematica}.

\subsection{Geometric description of second Pythagorean gadget construction} \label{app:pythag2}

The second Pythagorean gadget, to lift the state:    
\equ{\label{pyth2state}
\ket{\psi_{\mathit Pyth.}}= \frac{1}{5\sqrt{2}}  \left(-5\ket{010}+3\ket{100}-4\ket{101}\right)\,.
}

is designed in a similar way to the first Pythagorean gadget (outlined in \cref{sec:pythag}) but there are some modifications required.

Writing each computational cycle involved explicitly we have
\eqss{
B = \ket{100} =\,& [x_{1}x_{2}x_{3}]+ [x_{1}a_{3}x_{2}]+ [a_{2}x_{1}x_{3}]+ [x_{1}a_{2}a_{3}]+ [b_{1}x_{3}x_{2}]+ [b_{1}x_{2}a_{3}]+ [b_{1}a_{2}x_{3}]+ [a_{2}b_{1}a_{3}]\,, \\
C = \ket{101} =\,& [x_{1}x_{2}x_{3}]+ [x_{1}b_{3}x_{2}]+ [a_{2}x_{1}x_{3}]+ [x_{1}a_{2}b_{3}]+ [b_{1}x_{3}x_{2}]+ [b_{1}x_{2}b_{3}]+ [b_{1}a_{2}x_{3}]+ [a_{2}b_{1}b_{3}]\,. \\
D = \ket{010}=\,&  [x_{1}x_{2}x_{3}]+ [x_{1}a_{3}x_{2}]+ [b_{2}x_{1}x_{3}]+ [x_{1}b_{2}a_{3}]+ [a_{1}x_{3}x_{2}]+ [a_{1}x_{2}a_{3}]+ [a_{1}b_{2}x_{3}]+ [b_{2}a_{1}a_{3}]\,, \\ 
}
We now need to carry out ``simplicial surgery" on these cycles to generate $\ket{\psi_{\mathit Pyth.}}$. 

Applying exactly the same process as in \cref{sec:pythag} quickly runs into trouble.
The $C$ and $D$ cycles only have three edges in common - $[x_1x_2]$, $[x_2x_3]$ and $[x_3x_1]$.
But trying to glue along any two of these edges cannot work to construct the full cycle, since the cycle $C-B$ does not contain the edges $[x_3x_1]$ or $[x_2x_3]$ and the $D-B$ cycle does not include the edge $[x_1x_2]$.
So if we glue together the $C$ and $D$ cycles and then try to subtract of the $B$ cycles you will inevitably subtract off an edge that has been used to glue the cycles together.
   
To get around this we instead glue together three copies of the cycle $[C+D-B]$, one copy of the cycle $C$ and two copies of the cycle $D$.
Clearly this amounts to adding together the right number of copies of each cycle, but crucially these cycles all have the simplex $[x_1x_2x_3]$ in common.
So we can glue along this simplex using the technique outlined in \cref{fig:slits_two}.
We cut open the simplex by duplicating the vertices $x_2$ and $x_3$, and connecting up the vertices within cycles and between cycles to generate the necessary void. 

 \begin{figure}[]
\begin{center}
\centering
\includegraphics[width=0.8\linewidth]{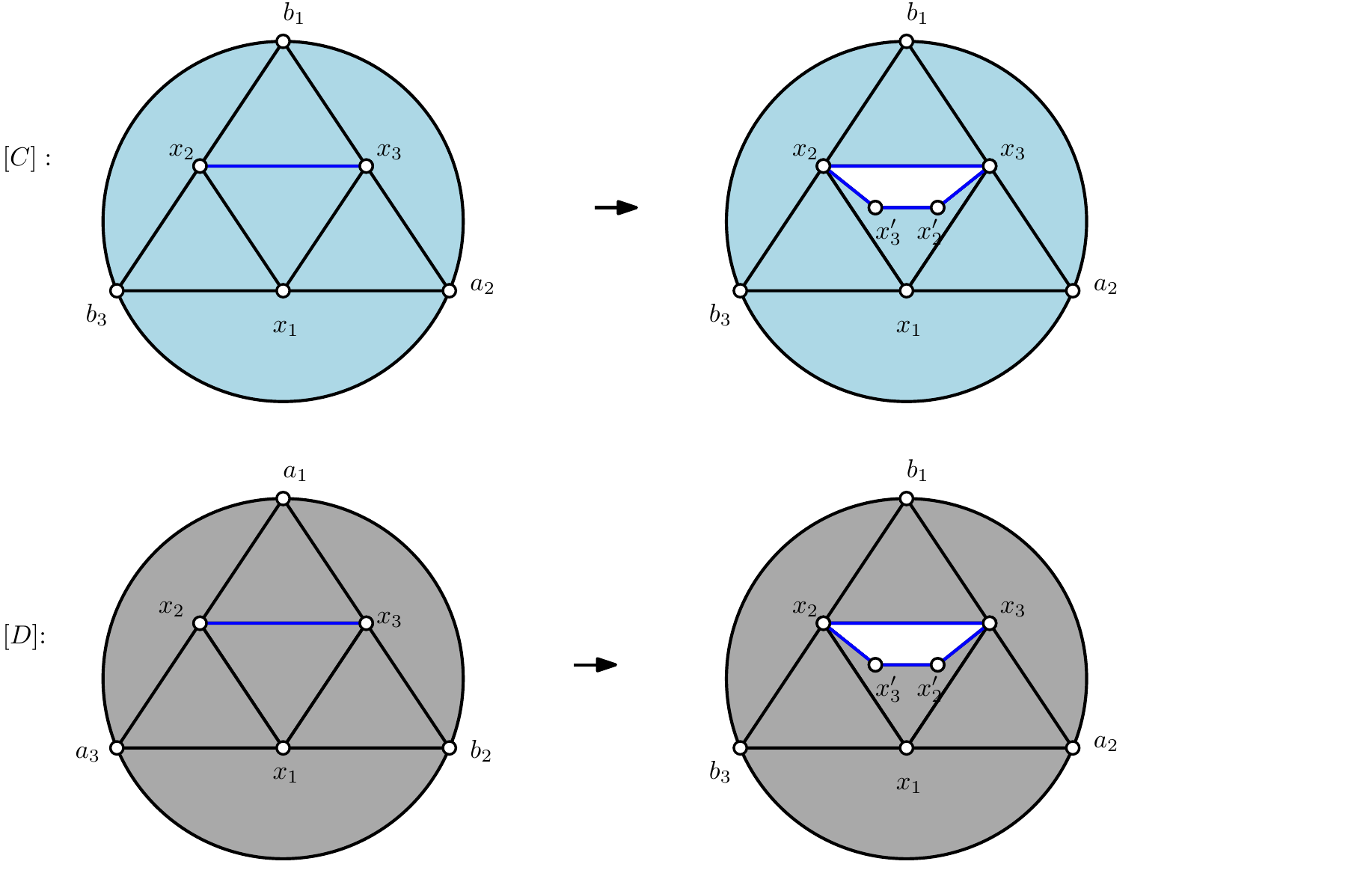}
\caption{The $[C]$, $[D]$ and $[C+D-B]$ cycles are cut open by cutting a slit in the $[x_1x_2x_3]$ triangle by duplicating the points $x_2$ and $x_3$. Then, the open cycles are  connected with each other by gluing through the common one-cycles  $[x_2x_3]+[x_3x_2']+[x_2'x_3']+[x_3'x_2]$, ensuring the orientation is consistent. 
The $[C+D-B]$ cycle is cut open and glued together in the same way - the figure for that cycle is omitted as it is significantly more cluttered than the original cycles.}
\label{fig:slits_two}
\end{center}
\end{figure}

This leads to a closed surface which is bounded by $67$ faces. 
The surface is homeomorphic to $S^2$ and we fill it in using exactly the same procedure outlined in \cref{sec:pythag}. 

This introduces 68 mediator vertices and leads to a simplicial complex $\Sigma_3'$ consisting of $77$ 0-simplices, $489$ 1-simplices, $869$ 2-simplices, $462$ 3-simplices, and $13$ 4-simplices. The Euler characteristic is $1-77+489-869+462-13=-7$, consistent with $H_2(\Sigma_3')=(\Bbb C)^7$. 

As in \cref{sec:pythag}, to construct the graph we take the graph complement of the 1-skeleton of the resulting simplicial complex. 
This gives a graph on $77$ vertices, with $2437$ edges and maximum vertex degree  $\delta=77$.

The explicit procedure for constructing this gadget is carried out step-by-step in the attached Mathematica file \cite{mathematica}.
Details of the calculations carried out to show that this gadget closes the cycle associated to $\ket{\psi_{Pyth.}}$ and only this cycle are given in \cref{app:mathematica}.

\subsection{Technical details of adding projectors} \label{app:adding_projectors}

This appendix outlines why it is in general necessary to connect the mediators from the set $\{m_1\}$ with the mediators from the set $\{m_2\}$ arising from the process outlined in \cref{sec:combining_gadgets} for adding projectors together.
We will use the gadgets for lifting $\ket{\phi_1} = \ket{00}-\ket{11}$ and $\ket{\phi_2}=\ket{01}-\ket{10}$ as an illustrative example.

Implementing the projector $\ket{\phi_1}\bra{\phi_1}+\ket{\phi_2}\bra{\phi_2}$ requires filling in the cycles:
\begin{equation}
\phi_1=[x_2\, a_1]+[a_1\,a_2 ]+[a_2 \, x_1]+[x_1\, b_2]+[b_2\,b_1]+[b_1\, x_2]\,,
\end{equation}
\begin{equation}
\phi_2=[x_2\, a_1]+[a_1\,b_2 ]+[b_2 \, x_1]+[x_1\, a_2]+[a_2\,b_1]+[b_1\, x_2]
\end{equation}

It can be straightforwardly checked that the graph resulting from applying steps 1-3 from the process for adding projectors in \cref{sec:combining_gadgets} to the graphs for the individual projectors from \cref{table:gadgets2qubit} does fill in the required cycles.
However, a simple calculation demonstrates that this graph has Euler Characteristic 1 - this implies that the homology group $H_2$ only contains one element.
It therefore cannot be isomorphic to the ground state of $\ket{\phi_1}\bra{\phi_1}+\ket{\phi_2}\bra{\phi_2}$, which is dimension two. 

This issue arises not because we have accidentally filled in an unintended cycle, but because the mediators added to the graph have caused two previously independent homology classes to be homologous to one another. 

To see why that has happened we will consider the states $\ket{\phi_3} = \ket{00}+\ket{11}$ and $\ket{\phi_4}=\ket{01}+\ket{10}$ - clearly these states span the ground state of the projector that we want to implement, and in the original gadgets they form distinct members of the homology.

To see that they are distinct members of the homology in the original projectors we note that:

\begin{equation}
\phi_3=[x_2\, a_1]+[a_1\,a_2 ]+[a_2 \, x_1]+[x_1\, x_2] + [x_2\, b_1] +[b_1\,b_2]+[b_2\, x_1]+ [x_1\, x_2]\,,
\end{equation}
\begin{equation}
\phi_4=[x_2\, a_1]+[a_1\,b_2 ]+[b_2 \, x_1]+[x_1\, x_2]+[x_2\,b_1]+[b_1\,a_2]+[a_2\, x_1]+[x_1\, x_2]
\end{equation}

And therefore:
\begin{equation}
    \phi_3-\phi_4 = [a_1\,a_2]+[a_2\,b_1]+[b_1\,b_2]+[b_2\,a_1]
\end{equation}

It can be checked that in the original gadgets there is no 3-cycle for which the cycle $\phi_3-\phi_4$ is a boundary, and therefore the two cycles are not homologous to one another.
However, it can be seen from \cref{fig:adding_projectors} that if we try to combine the projectors $\ket{\phi_1}\bra{\phi_1}$ and $\ket{\phi_2}\bra{\phi_2}$ without connecting the mediators in the graph, then in the independence complex we have constructed a 3-cycle which $\phi_3-\phi_4$ is a boundary of, and so the two cycles are homologous to each other.

By connecting the mediators from the set $\{m_1\}$ with the mediators from the set $\{m_2\}$ in the graph, we remove the connections between the mediators in the independence complex - and the two states $\phi_3$ and $\phi_4$ are again members of distinct homology classes, as required. 
\begin{figure}[]
\begin{subfigure}{0.5\textwidth}
\centering
\includegraphics{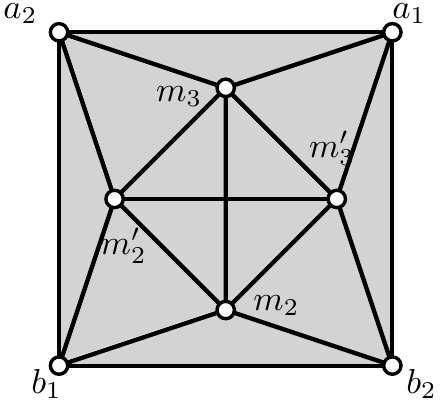}
\caption{Independence complex arising from trying to add projectors without connecting the mediators together in the graph.}\label{fig:wrong}
\end{subfigure}
\begin{subfigure}{0.5\textwidth}
\centering
\includegraphics{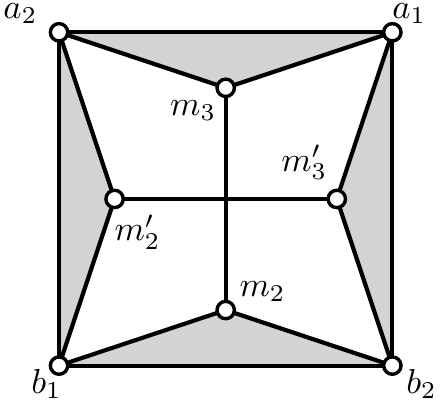}
\caption{Independence complex arising from correctly adding projectors (connecting the mediators together in the graph.)} \label{fig:right}
\end{subfigure}
\caption{The independence complexes that arise from adding projectors together without (\cref{fig:wrong}) and with (\cref{fig:right})  connections between the mediators from different gadgets in the graph. In both figures the un-primed mediator vertices are those arising from the gadget for lifting $\phi_1$, while the primed mediator vertices are those arising from the gadget for lifting $\phi_2$. In both cases superfluous vertices and edges have been omitted from the figure for clarity.}  \label{fig:adding_projectors}
\end{figure}

It is worth noting that connecting the mediators isn't always necessary - for particularly simple gadgets (e.g. two classical gadgets) this issue doesn't arise.
Moreover, in cases where the gadgets overlap on some, but not all, qubits it is generally possible to construct the correct gadget by connecting some, but not all, mediators from the different gadgets.
However, connecting all the mediators from the different gadgets will always work.
So, for clarity in the main text we assume that for all gadgets on overlapping sets of qubits we connect all the mediator vertices arising from one gadget with all the mediator vertices arising from the other gadget. 

\section{Locality of the boundary operator} \label{app:locality}

\subsection{Quantum $(k,s)$-$\SAT$ is $\QMA_1$ hard}

In quantum $k$-$\SAT$ the only constraint on the interaction graph is the number of qubits involved in each interaction (restricted to be no more than $k$).
But retaining locality in the boundary operator when mapping to the homology problem requires also constraining the number of interactions each qubit is involved in.
When we map to the homology problem we will construct the mapping in terms of `lifting' individual states, so in this section we will consider a rank $m$ projector as the sum of $m$ rank 1 projectors.To this end, we define a new problem: \\

\mypromprob{{\sc Quantum $(k,s)$-$\SAT$}}
{A set of $k$-local rank-1 projectors $\Pi_S$ where $S$ are possible subsets of $\{1,\ldots,n\}$ of cardinality $k$, such that each integer in the set $\{1,\cdots,n\}$ appears in no more than $s$ projectors.}{Either there exists a state $\ket{\psi}$ such that $\Pi_S\ket{\psi}=0$ for all $S$ or otherwise $\sum_{S}\bra{\psi}\Pi_{S}\ket{\psi}\geq \epsilon$ for all $\ket{\psi}$, with $\epsilon>\frac{1}{\poly(n)}$. }{Output $\yes$ if the former and $\no$ if the latter.\\}

\noindent We will show that quantum $(k,s)$-$\SAT$ with constant $k,s$ is hard for $\QMA_1$ using a technique from \cite{Aharanov_2004,Oliveira_2005,Zhou_2021}.
The idea is to show that any $\QMA_1$-verification circuit can be efficiently mapped to a circuit where each qubit is involved in a constant number of gates.  The mapping from circuit to Hamiltonian then preserves this property, so that each qubit is acted on by a fixed number of projectors, regardless of system size. 

The utility of this technique will depend on the choice of universal gate set used to define $\QMA_1$. 
It will work for any gate set where the $\SWAP$ gate can be implemented exactly.
It is possible to implement the $\SWAP$ gate exactly using three $\CNOT$ gates, so crucially the hardness proof applies to our choice of gate set $\mathcal{G}$:

\begin{lemma} \label{ksQSAT}
Quantum $(k,s)$-$\SAT$ is $\QMA_1$-hard for $k\geq 4$ and $s\geq 28$.
\end{lemma}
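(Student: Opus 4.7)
The plan is to follow the standard ``sparsification'' strategy (as in \cite{Aharanov_2004,Oliveira_2005,Zhou_2021}) in two stages: first make the circuit \emph{bounded-degree}, then apply a \emph{local-clock} variant of Bravyi's construction, and finally count how many projectors touch each qubit.

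First, given a $\QMA_1$ verification circuit $U=U_L\cdots U_1$ over the gate set $\mathcal{G}=\{\CNOT,U_{\mathit Pyth.}\}$, I would build an equivalent circuit $\widetilde U$ in which every qubit is acted on by only $O(1)$ two-qubit gates. The idea is to replace each logical qubit by a ``wire'' of fresh ancilla qubits $q^{(1)},q^{(2)},\dots$ initialized to $\ket{0}$, and to shuttle the logical state from $q^{(t)}$ to $q^{(t+1)}$ whenever a gate on the logical qubit is encountered. The shuttling is done by a $\SWAP$ gate, which decomposes exactly into three $\CNOT$'s, so $\widetilde U$ uses only gates from $\mathcal{G}$. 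Each ancilla along a wire is then touched by at most one original gate plus $O(1)$ $\SWAP$'s. Acceptance probabilities are preserved exactly: a witness $\ket{w}$ for $U$ extends to $\ket{w}\otimes\ket{0\cdots 0}$ for $\widetilde U$, and conversely by tracing out the ancillas; in particular perfect completeness and soundness $\leq 1/3$ are inherited. Thus $\widetilde U$ is a $\QMA_1$-verifier with the additional property that every qubit participates in at most some constant number $g$ of gates.

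Second, I would apply the circuit-to-Hamiltonian construction of \cref{sec:app-projectors} to $\widetilde U$, but replacing the pairwise clock-legality terms $\Hclock^{(3)},\Hclock^{(4)},\Hclock^{(5)}$, which are sums over all pairs $j<k$ and therefore make each clock qubit appear in $\Theta(L)$ projectors, by an equivalent \emph{local} (nearest-neighbor) unary-clock legality Hamiltonian, where each forbidden pattern ``dead neighbouring active'', ``active neighbouring unborn across a dead particle'', etc., is enforced by a constant-weight projector on a constant-size window of adjacent clock particles. This is a standard substitution and preserves the ground space (legal clock configurations), the zero-energy history-state ground state in $\yes$ instances, and the inverse-polynomial promise gap in $\no$ instances. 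Every clock qubit is then involved in only $O(1)$ clock-legality projectors and in the $O(1)$ propagation projectors $\Hpropt,\Hpropt'$ associated with its own time step and its neighbours. Every computational qubit is involved in at most $g$ propagation projectors, at most one $\Hin$ projector, and at most one $\Hout$ projector.

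Finally, I would carry out an explicit tally. Counting contributions from $\Hin$, $\Hout$, the local clock-legality terms, and the propagation projectors $\Hpropt(\CNOT)$, $\Hpropt(U_{\mathit Pyth.})$ and $\Hpropt'$ listed in \cref{table:4SAT}, with the bound $g$ on gates per qubit coming from the $\SWAP$-network sparsification, one obtains an absolute constant upper bound on the number of projectors touching any single qubit; optimizing the sparsification (choosing a $\SWAP$ schedule that keeps $g$ small) one can arrange this constant to be at most $28$. Since the reduction is polynomial-time and preserves both perfect completeness and the inverse-polynomial soundness gap, this establishes $\QMA_1$-hardness of quantum $(k,s)$-$\SAT$ for $k\geq 4$ and $s\geq 28$. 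The main obstacle is the bookkeeping in this last step: Bravyi's construction is not manifestly local in the clock register, so one must combine the bounded-degree-circuit sparsification with a careful local rewriting of $\Hclock$ and then verify, projector by projector, that $28$ suffices; the choice of the precise constant $28$ is dictated by this tally rather than by any conceptual subtlety.
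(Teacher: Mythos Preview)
Your approach is essentially the paper's: sparsify the verification circuit so that every qubit is touched by $O(1)$ gates, feed the sparsified circuit into Bravyi's construction, and count projectors. Two points of comparison are worth making.

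First, on the clock. You explicitly replace $\Hclock^{(3)},\Hclock^{(4)},\Hclock^{(5)}$ by nearest-neighbour legality checks. The paper does not say it does this; it applies Bravyi's $\Hclock$ as written and then asserts that each clock qubit sees only $O(1)$ clock projectors (``three rank 1, two rank 3, one rank 4''). That count is only consistent with a nearest-neighbour clock, so your explicit localisation is really what is going on under the hood, and making it explicit is an improvement.

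Second, and this is a small gap in your proposal: you do not address $\Hin$. In Bravyi's construction every term of $\Hin$ couples to the \emph{first} clock particle, so that particle participates in $\Theta(|R_{\text{in}}|)$ projectors regardless of how local $\Hclock$ is. Your sentence ``Every clock qubit is then involved in only $O(1)$ clock-legality projectors and in the $O(1)$ propagation projectors'' silently omits $\Hin$. The paper fixes this by the standard trick from \cite{Oliveira_2005}: instead of checking all input qubits at $t=0$, check each input qubit at the time step just before it is first acted on, so that the $\Hin$ terms are spread across distinct clock particles. You need the same modification.

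Finally, on the constant $28$: the paper's grid sparsification yields exactly $7$ gates per computational qubit (two $\SWAP$'s of three $\CNOT$'s each, plus one nontrivial gate), and each $\Hpropt$ is rank at most $4$, giving $7\times 4=28$ rank-$1$ projectors on the worst computational qubit; clock qubits come in strictly below this. Your ``optimising the sparsification one can arrange this constant to be at most $28$'' is correct but would benefit from exhibiting this explicit tally.
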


\begin{proof}
Consider a $\QMA_1$-verification circuit consisting of $n$ qubits which are acted on by a circuit $U = \Pi_{t=1}^T U_t$ (for $T = \poly(n)$) consisting of gates taken from the universal gate set $\mathcal{G}$.

We start by showing that this circuit can be mapped to one where each qubit is acted on by at most 3 gates, via a procedure initially used in \cite{Aharanov_2004,Oliveira_2005,Zhou_2021} (see \cref{fig:circuit_sparsification} for a graphical illustration of the process):
\begin{enumerate}
    \item Place all the $n$ qubits on a line, in some arbitrary ordering.
    \item For each gate $U_i \in \{U_1,\cdots,U_T\}$, if $U_i$ acts on a single qubit, or on a pair of nearest neighbour qubits do nothing. If $U_i$ acts on a pair of qubits that are not nearest neighbour then add a sequence of $\SWAP$ gates (implemented as three $\CNOT$ gates) before $U_i$, then reverse the $\SWAP$ gates after $U_i$. For each of the $T$ gates in the circuit, this process adds at most $3n$ gates, so after this step the circuit consists of $T' = O(nT)$ nearest neighbour gates acting on a line of $n$ qubits. We will denote this modified circuit by $U=\Pi_{t=1}^{T'}U'_t$
    \item Take a $n \times T'$ grid of qubits. In each of the $i^{th}$ column of the grid we will implement only the $U'_i$ gate (which acts on nearest neighbour qubits) with all other qubits acted on by the identity gate. After implementing the gate $U'_i$ on the $i^{th}$ column we implement a row of $\SWAP$ gates between the $i^{th}$ column and the $(i+1)^{th}$ column, so that the state of the qubits is mapped along the grid. 
    
    In order to ensure that at each time step a constant number of qubits are acted on for the first time we implement the gates in the following order: in column $i$, start at the top and implement a series of identity gates until reaching the qubit(s) that $U'_i$ acts on, then implement $U'_i$, then continue implementing identity gates on each qubit until we reach the bottom of the column. Then implement $\SWAP$ gates (via a sequence of three $\CNOT$ gates) between the qubits in column $i$ and the corresponding qubits in column $i+1$ starting at the bottom of the column, and working up to the top. 
    
    This circuit acts on $n' = nT' = \poly(n)$ qubits, and consists of $T_\text{fin} = 4(nT')^2 = \poly(n)$ gates (including identity gates), and each qubit in the circuit is involved in at most 7 gates (up to two $\SWAP$ gates each involving three $\CNOT$ gates, and one identity or non-trivial gate).
\end{enumerate}

We will now use the clock construction from \cite{bravyi2011efficient} to construct a mapping from our modified verification circuit to quantum $(k,s)$-$\SAT$. 

First consider the Hamiltonian $\Hprop$.
Each computational qubit is acted on by one projector of the form $\Hpropt$ for each gate it is involved in. 
These projectors are of rank at most 4, and there are at most 7 of them.
So $\Hprop$ can be implemented by acting on each computational qubit with up to 28 rank one projectors.
Each clock qubit is acted on by one projector of the form $\Hpropt$ (rank at most 4) and two projectors of the form $\Hpropt'$ (rank 1). So $\Hprop$ can be implemented by acting on each clock qubit with up to 6 rank one projectors.

The clock Hamiltonian only acts on the clock qubits.
Each clock qubit is acted on by up to three rank 1 projectors, two rank 3 projectors, and one rank 4 projector. 
So $\Hclock$ can be implemented by acting on each clock qubit with up to 13 rank one projectors.

$\Hout$ acts on the final clock qubit and a single computational qubit in the last column of the grid with a single rank one projector.

The final piece of the puzzle is $\Hin$.
As set out in \cite{bravyi2011efficient} $\Hin$ is a sum of rank one projectors, each one acting on the first clock qubit, and the non-witness computational qubits, i.e. all $q \in R_{\text{in}}$. 
However, implementing it in this fashion would lead to the first clock qubit being involved in a number of projectors which scales with the size of the problem since the number of $q \in R_\text{in}$ can scale with $n$.
Following the same procedure as \cite{Oliveira_2005} we can get around this issue.
At each time step in our modified circuit we only act with a single gate, so we don't need to initialise all the input qubits to zero when the clock is at zero.
Instead we can initialise the input qubits to zero immediately before they are acted upon for the first time (possibly by an identity gate).
So we replace the $\Hin$ from \cite{bravyi2011efficient} by:
\equ{
\Hin = \sum_{q\in R_{\text{in}}, q \leq n} \ket{1}\bra{1}_q\otimes \ket{a_1}\bra{a_1}_{c_{t_q}}
}
where $c_{t_q}$ is the time at which qubit $q$ is first acted on by a (possibly trivial) gate.
We only need to add these terms for the first column of qubits, since the qubits in subsequent columns are `dummy' qubits, and they are initialised by the $\SWAP$ gates. 

The only computational qubits that will be acted on by $\Hin$ and $\Hout$ are in the first and last rows of the grid, so are acted on by at most 4 gates (since they are acted on by a single $\SWAP$ gate). 
So the qubits that are involved in the most projectors are computational qubits which are not in the first or last column of the grid. 
Therefore we can map our modified verification circuit to quantum $(k,s)$-$\SAT$ where $k=4$ and $s=28$.

\end{proof}

For the universal gate set $\mathcal{G}$ used in this paper this proof only demonstrates hardness, since we have not shown that quantum $k-\SAT$ is contained within $\QMA_1$ with this gate set.
However, the proof of hardness could also be applied to the gate set $\mathcal{G}'=\{\CNOT,\hat{H},T\}$.
In this case it is straightforward to see that all the projectors needed for the hardness construction have matrix elements in the computational basis of the form $\frac{1}{4}(a+ib+\sqrt{2}c+i\sqrt{2}d)$.
It is shown in \cite[Appendix A]{Gosset_2013} that eigenvalues of projectors of this form can be measured exactly using the gate set $\mathcal{G}'$.
Therefore, quantum $(k,s)-\SAT$ with universal gate set $\mathcal{G}'$ where the set of allowed projectors $\mathcal{P}$ are all 4-local projectors with matrix elements in the computational basis of the form $\frac{1}{4}(a+ib+\sqrt{2}c+i\sqrt{2}d)$ is $\QMA_1$-complete for $k\geq 4$ and $s\geq 28$.

\begin{figure}[]
    \begin{subfigure}[t]{0.45\textwidth}
    \centering
    \includegraphics[width =0.75 \linewidth]{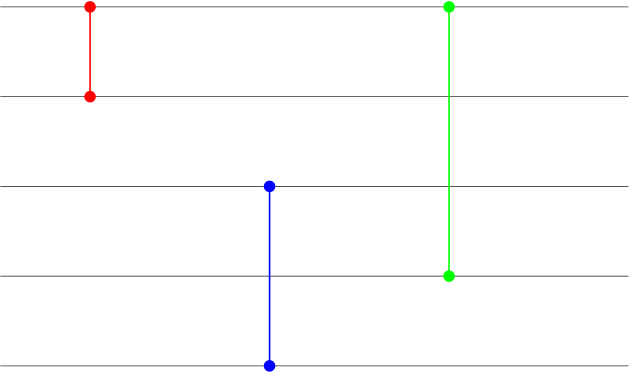}
    \caption{The initial circuit we want to `sparsify' - it consists of three two qubit gates.}
    \label{fig:a}
    \end{subfigure}
        \begin{subfigure}[t]{0.45\textwidth}
        \centering
    \includegraphics[width = 0.75\linewidth]{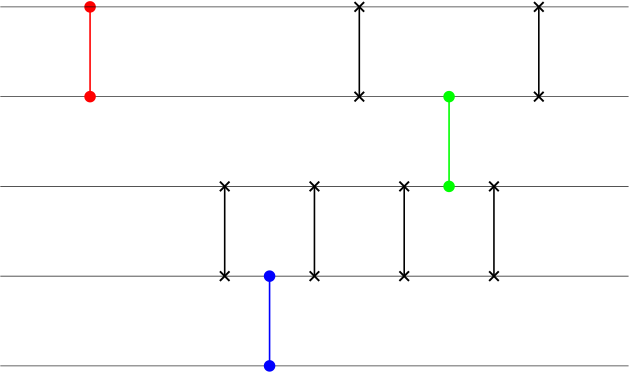}
    \caption{The new circuit after the first step of the sparsification process - the $\SWAP$ gates ensure that all gates are nearest neighbour.}
    \end{subfigure}
    \centering
        \begin{subfigure}[t]{0.95\textwidth}
    \includegraphics[width = 0.9\linewidth]{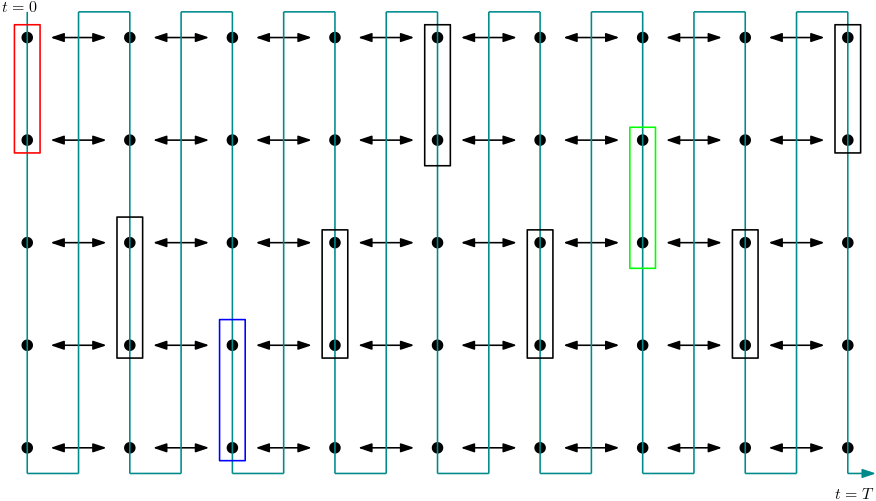}
    \caption{The final step in the sparsification process. We now have a $7\cross 5$ grid of qubits, and the passage of time is denoted by the turquoise line, which starts at the top left and `snakes' through the qubits. At each time step we either apply an identity gate to a particular qubit, a $\SWAP$ gate between qubits in different columns (denoted by an arrow), a $\SWAP$ gate between qubits in the same column (denoted by a black rectangle) or one of the original two qubit gates from \cref{fig:a} (denoted by red, blue and green boxes). In our construction the $\SWAP$ gates within the columns would actually be implemented as three $\CNOT$ gates acting in adjacent columns, however that detail is omitted from the figure for clarity.}
    \end{subfigure}
    \caption{A graphical representation of the `sparsification' process applied in the proof of \cref{ksQSAT}. This sparsification process has previously been used in \cite{Aharanov_2004,Oliveira_2005,Zhou_2021}.}
    \label{fig:circuit_sparsification}
\end{figure}

\subsection{{\sc Homology} with a local boundary operator is $\QMA_1$-complete}

In this section we outline how it is possible to modify the reduction of \cref{lem:ind} so that it always results in an independence complex $I(G)$ with a local boundary operator.

The locality of the boundary operator $\partial$ of $I(G)$ is determined by the maximum degree of $G$ (since the boundary operator can be written as a sum over the vertices of $G$ where at each vertex we construct an operator acting only on adjacent vertices). 
So we need to demonstrate that we can construct the reduction of \cref{lem:ind} in such a way that the resulting graph $G$ has constant degree.

The only modification is that we reduce from an arbitrary problem in $\QMA_1$ via the family of Hamiltonians used to show $\QMA_1$-hardness of quantum $(4,28)$-$\SAT$ in the previous section.
The projectors we need to implement are unchanged, so the proof follows by exactly the same logic as in \cref{lem:ind}.

The vertices of maximal degree in our construction will be those corresponding to mediators from a $\Hpropt(\CNOT)$ gadget acting on qubits which have a $\Hpropt(U_{Pyth.})$ and six $\Hpropt(\CNOT)$ projectors acting on them.
The mediators in the $\Hpropt(\CNOT)$ gadget have maximum degree $11$.
When they are combined with the other gadgets acting on the same qubit, connections are added between these mediators and the $177$ mediators from the two Pythagorean gadgets, and the $110$ mediators from the other five $\CNOT$ gadgets, giving a total degree of 298.
Therefore for the input graphs we are considering the degree is a constant $\delta = 298$.
This results in an independence complex with $O(1)$-local boundary operator as claimed.

\bibliography{References.bib} 
\bibliographystyle{utphys}

\end{document}